\theoremstyle{acmdefinition}
\newtheorem{remark}{Remark}
\newif\ifsupp
\newcommand{\Map}{\mathsf{Map}}
\newcommand{\Ty}{\mathsf{Ty}}
\newcommand{\Dt}{\mathsf{Dt}}
\newcommand{\ETy}{\mathsf{ETy}}
\newcommand{\ul}[1]{\underline{#1}}
\newcommand{\restr}[2]{#1\!\!\restriction_{#2}}
\newcommand{\types}{\vdash}
\renewcommand{\implies}{\supset}
\newcommand{\mng}[1]{\llbracket #1 \rrbracket}
\newcommand{\frv}{\mathsf{FRV}}
\newcommand{\subtype}{\sqsubseteq}
\newcommand{\abra}[1]{\langle #1 \rangle}
\newcommand{\arity}{\mathsf{Arity}}
\newcommand{\dom}{\mathsf{dom}}
\newcommand{\matchtm}[2]{\mathsf{case}\ #1\ \mathsf{of}\ \{#2\}}
\newcommand{\abs}[2]{\lambda #1.\,#2}
\newcommand{\Abs}[2]{\Lambda #1.\,#2}
\newcommand{\fresh}{\mathsf{Fresh}}
\newcommand{\Sat}{\mathsf{Sat}}
\newcommand{\Con}{\mathbb{K}}
\newcommand{\infers}{\Longrightarrow}
\newcommand{\rlnm}[1]{%
  \textsf{(#1)}
}
\newcommand{\subenv}[2]{\abra{#1}_{#2}}
\newcommand{\inj}{\mathsf{inj}}
\newcommand{\Sch}{\mathsf{Sch}}
\newcommand{\canon}[1]{#1^{\!*}}
\newcommand{\under}{\mathcal{U}}
\newcommand{\bananas}[1]{\llparenthesis #1 \rrparenthesis}
\def\term#1{%
    \@term#1 \@empty
}
\def\@term#1 #2{%
   \mathit{#1}
   \ifx #2\@empty\else
    \:\expandafter\@term  
   \fi
   #2%
}
\DeclareRobustCommand{\mmid}{%
  \mathrel{\mathpalette\med@mid\relax}%
}
\newcommand{\med@mid}[2]{%
  \sbox\z@{$\m@th#1\vdash$}%
  \setlength{\unitlength}{\ht\z@}%
  \begin{picture}(.3,1)
  \roundcap
  \linethickness{%
    \ifdim\unitlength<0.9ex
      0.08%
    \else
    \ifdim\unitlength<1.2ex
      0.07%
    \else
      0.06%
    \fi
  \fi\unitlength}
  \polyline(.2,0.03)(.2,0.97)
  \end{picture}%
}
\newtheoremstyle{TheoremNum}
        {\topsep}{\topsep}              
        {\itshape}                      
        {}                              
        {\bfseries}                     
        {.}                             
        { }                             
        {\thmname{#1}\thmnote{ \bfseries #3}}
    \theoremstyle{TheoremNum}
\begin{document}

\title{Intensional Datatype Refinement}         
\subtitle{With Application to Scalable Verification of Pattern-Match Safety}                     


\author{Eddie Jones}
\affiliation{
  \department{Department of Computer Science}              
  \institution{University of Bristol}            
 \country{UK}                    
}

\author{Steven Ramsay}
\affiliation{
  \department{Department of Computer Science}             
  \institution{University of Bristol}           
  \country{UK}                   
}

\begin{abstract}
  The pattern-match safety problem is to verify that a given functional program will never crash due to non-exhaustive patterns in its function definitions.
  We present a refinement type system that can be used to solve this problem.
  The system extends ML-style type systems with algebraic datatypes by a limited form of structural subtyping and environment-level intersection.
  We describe a fully automatic, sound and complete type inference procedure for this system which, under reasonable assumptions, is worst-case linear-time in the program size.
  Compositionality is essential to obtaining this complexity guarantee.
  A prototype implementation for Haskell is able to analyse a selection of packages from the Hackage database in a few hundred milliseconds.
\end{abstract}

\begin{CCSXML}
<ccs2012>
<concept>
<concept_id>10003752.10010124.10010125.10010127</concept_id>
<concept_desc>Theory of computation~Functional constructs</concept_desc>
<concept_significance>500</concept_significance>
</concept>
<concept>
<concept_id>10003752.10010124.10010138.10010142</concept_id>
<concept_desc>Theory of computation~Program verification</concept_desc>
<concept_significance>500</concept_significance>
</concept>
<concept>
<concept_id>10003752.10003790.10002990</concept_id>
<concept_desc>Theory of computation~Logic and verification</concept_desc>
<concept_significance>300</concept_significance>
</concept>
</ccs2012>
\end{CCSXML}

\ccsdesc[500]{Theory of computation~Functional constructs}
\ccsdesc[500]{Theory of computation~Program verification}
\ccsdesc[300]{Theory of computation~Logic and verification}

\keywords{higher-order program verification, refinement types}  

\maketitle


\section{Introduction}\label{sec:intro}

\begin{figure}
    \begin{tabular}{p{.3\columnwidth}p{.65\columnwidth}}
      \begin{lstlisting}
  data L a = 
      Atom a 
      | NegAtom a
        
  data Fm a = 
      Lit (L a)
      | Not (Fm a)
      | And (Fm a) (Fm a)
      | Or (Fm a) (Fm a)
      | Imp (Fm a) (Fm a)
      \end{lstlisting}&
      \begin{lstlisting}
  nnf (Lit (Atom x)) = Lit (Atom x)
  nnf (Lit (NegAtom x)) = Lit (NegAtom x)
  nnf (And p q) = And (nnf p) (nnf q)
  nnf (Or p q) = Or (nnf p) (nnf q)
  nnf (Imp p q) = Or (nnf (Not p)) (nnf q)
  nnf (Not (Not p)) = nnf p 
  nnf (Not (And p q)) = Or (nnf (Not p)) (nnf (Not q))
  nnf (Not (Or p q)) = And (nnf (Not p)) (nnf (Not q))
  nnf (Not (Imp p q)) = And (nnf p) (nnf (Not q))
  nnf (Not (Lit (Atom x))) = Lit (NegAtom x)
  nnf (Not (Lit (NegAtom x))) = Lit (Atom x)
  
  nnf2dnf (Lit a) = [[a]]
  nnf2dnf (Or p q) = List.union (nnf2dnf p) (nnf2dnf q)
  nnf2dnf (And p q) = distrib (nnf2dnf p) (nnf2dnf q)
    where distrib xss yss = 
            List.nub [ List.union xs ys | xs <- xss, ys <- yss ]
      \end{lstlisting}
    \end{tabular}\vspace{-15pt}
  \caption{Conversion to disjunctive normal form.}\label{fig:dnf-ex}
  \end{figure}

The \emph{pattern match safety} problem asks, given a program with non-exhaustive (algebraic datatype) patterns in its function definitions, is it possible that the program crashes with a pattern-match exception?
Consider the example Haskell code in Figure \ref{fig:dnf-ex}.
This code defines the two main ingredients in a typical definition (see e.g. \cite{harrison-handbook}) of conversion from arbitrary propositional formulas to propositional formulas in disjunctive normal form (represented as lists of lists of literals).
Using these definitions, the conversion can be described as the composition $\mathsf{dnf} \coloneqq \mathsf{nnf2dnf} \circ \mathsf{nnf}$.

Notice that the definition of $\mathsf{nnf2dnf}$ is partial: it is expected only to be used on inputs that are in negation normal form (NNF).
Consequently, unless is can be verified that $\mathsf{nnf}$ always produces a formula without any occurrence of $\mathsf{Imp}$ or $\mathsf{Not}$, then any application of $\mathsf{dnf}$ to an expression of type $\mathsf{Fm}\ a$ may result in a pattern match failure exception.
In this paper we present a new refinement type system that can be used to perform this verification statically and automatically.
Type inference is compositional and incremental so that it can be integrated with modern development environments: open program expressions can be analysed and only the parts of the code that are modified need to be re-analysed as changes are made.

\paragraph{Contributions.}
Whilst there are other analyses in the literature that can also verify instances of the foregoing example ours is, as far as we are aware, the only to offer strong guarantees on predictability, which we believe to be key to the usability of such systems in practice.
\begin{itemize}
    \item The analysis is characterised by the type system, which is a natural, yet expressive extension of ML-style type systems with algebraic datatypes.  It combines polyvariance (through environment-level intersection) and path-sensitivity (through conditional match typing).  
    \item The analysis runs in time that is, in the worst-case, linear in the size of the program (under reasonable assumptions on the size of types and the nesting of matching).
\end{itemize}
We do not know of any other system or reachability analysis combining \emph{polyvariance}, \emph{path-sensitivity}, an intuitive characterisation of completeness, and a linear-time guarantee on the overall worst-case complexity (in terms of program size).
Furthermore, our prototype demonstrates excellent performance over a range of packages from Hackage, processing each in less than a second.

\subsection{A Type System for Intensional Datatype Refinements}

Sound and terminating program analyses are conservative: there are always programs without bugs that, nevertheless, cannot be verified.  
Identifying a large fragment for which the analysis is complete, i.e. a class of safe programs for which verification is guaranteed, allows the programmer to reason about the behaviour of the analysis on their code. 
In particular, when an analysis fails to verify a program that the user believes to be safe, it gives them an opportunity to take action, such as by programming more defensively, in order to put their program into the fragment and thus be certain of verification success.

However, for this to be most effective, the fragment must be easily understood by the average functional programmer.
Our analysis is complete with respect to programs typable in a natural extension of ML-style type systems with algebraic datatypes. 
Indeed it is characterised by this system: the force of Theorems \ref{thm:inf-mod-correctness} and \ref{thm:equi-consistency} is to say that it forms a sound and complete inference procedure. 
The system is presented in full in Section \ref{sec:type-assignment}, but the highlights are as follows:
\begin{enumerate}[(i)]
  \item The datatype environment introduced by the programmer, e.g. $\mathsf{L}\,a$ and $\mathsf{Fm}\,a$, is \emph{completed}: every datatype whose definition can be obtained by erasing constructors from one of those given is added to the environment for the purpose of type assignment.  These new datatypes are called \emph{intensional refinements}.  These additional types allow for the scrutinee of a match to be typed with a datatype that is more precise  than the underlying type provided by the programmer.  For example, the datatypes in Figure \ref{fig:fm-refns} are among the intensional refinements of $\mathsf{Fm}\ a$, where \lstinline{data A a = Atom a} is an intensional refinement of $\mathsf{L}\, a$.
  Of course, the names of the datatypes are irrelevant.
  
  \item There is a natural notion of subtyping between intensional refinement datatypes which is incorporated into the type system through an unrestricted subsumption rule.  For example, $\mathsf{Clause}\, a$ and $\mathsf{Cube}\,a$ are both subtypes of the intensional refinement:
  \begin{lstlisting}
      data NFm = Lit (L a) | Or (NFm a) (NFm a) | And (NFm a) (NFm a)
  \end{lstlisting}
  which is itself a subtype of $\mathsf{Fm}\, a$.  However, $\mathsf{Clause}\, a$, $\mathsf{Cube}\,a$ and $\mathsf{STLC}\,a$ are all incomparable.

  \item The typing rule for the case analysis construct, by which pattern matching is represented, enforces that matching is exhaustive with respect to the type of the scrutinee.  This ensures that the analysis of matching is sound: programs for which the match is not exhaustive will not be typable.  Moreover, the rule is \emph{path-sensitive}, with the type of the match only depending on the types of the branches corresponding to the type of the scrutinee.  For example, the following function can be assigned the type $(a \to b) \to \mathsf{Cube}\,a \to \mathsf{Cube}\,b$ and it can be assigned the type $(a \to b) \to \mathsf{Clause}\,a \to \mathsf{Clause}\,b$, but not the type $(a \to b) \to \mathsf{STLC}\,a \to \mathsf{STLC}\,b$ because it does not handle the constructor $\mathsf{Imp}$.
  \begin{lstlisting}
      map f (Lit (Atom x))   = Lit (Atom (f x))
      map f (Lit (NegAtom x)) = Lit (NegAtom (f x))
      map f (And p q) = And (map p) (map q)
      map f (Or p q)  = Or (map p) (map q)
  \end{lstlisting}
  Path sensitivity is essential for handling typical use cases.  
  Often a single large datatype is defined but, locally, certain parts of the program work within a fragment (e.g. only on clauses).  
  Path sensitivity helps to ensure that transformations on values inside the fragment remain inside the correct datatype refinement --- otherwise $\mathsf{map}$ could only advertise that it returns formulas in type $\mathsf{NFm}\,a$.
  For example, Elm-style web applications typically define a single, global datatype of actions although the constituent pages may only be prepared to handle certain (overlapping) subsets locally.
  
  \item Finally, refinement polymorphism, and hence \emph{context-sensitivity}, is provided by allowing for environments that have more than a single refinement type binding for each free program variable, i.e. an environment-level intersection.  For example, suppose $\mathsf{trivial} : \mathsf{Clause}\,a \to \mathsf{Bool}$ checks a clause for complementary literals, $\mathsf{sing} : \mathsf{Cube}\, a \to \mathsf{Bool}$ checks if a cube consists of a single conjunct, and $\mathsf{rn} : \mathsf{String} \to \mathsf{String}$ performs a renaming of propositional atoms.  Then the following expression\footnote{The example is rather contrived, but we may rather imagine such combinations occurring in different parts of the program.} is well typed:
  \[
    \abs{xy}{\mathsf{trivial}\,(\mathsf{map}\,\mathsf{rn}\,x) \mathrel{||} \mathsf{sing}\,(\mathsf{map}\,\mathsf{rn}\,y) }
  \]
  This is because the typing environment contains \emph{both} of the aforementioned types for $\mathsf{map}$.
  Note: this is polymorphism in the class of formulas, not only in the type $a$ of their atoms.
\end{enumerate}
To distinguish between the typing assigned to the program by the programming language (which we consider part of the input to the analysis) from the types that can be assigned in our extended system, we call the former the \emph{underlying} typing of the program.

\begin{figure}
  \begin{tabular}{p{.3\columnwidth}p{.3\columnwidth}p{.3\columnwidth}}
    \begin{lstlisting}
data Clause a = 
    Lit (L a) 
    | Or (Clause a) (Clause a)
    \end{lstlisting}&
    \begin{lstlisting}
data STLC a =
    Lit (A a)
    | And (STLC a) (STLC a)
    | Imp (STLC a) (STLC a)
    \end{lstlisting}&
    \begin{lstlisting}
data Cube a =
    Lit (L a)
    | And (Cube a) (Cube a)
    \end{lstlisting}
  \end{tabular}\vspace{-15pt}
  \caption{Some intensional refinements of $\mathsf{Fm}\ a$.}\label{fig:fm-refns}
  \end{figure}

Characterising the analysis with a type system allows the programmer to reason about its behaviour using typings as a kind of \emph{certificate}.  
Returning to the above example, the programmer can be certain that uses of $\mathsf{dnf}$ will be verifiably safe because they can synthesize the intensional datatype refinement $\mathsf{NFm}$, and check the typings $\mathsf{nnf} : \mathsf{Fm}\,a \to \mathsf{NFm}\,a$ and $\mathsf{nnf2dnf} : \mathsf{NFm}\,a \to [\![\mathsf{L}\,a]\!]$.

\subsection{Compositionality and Complexity}

Our analysis takes the form of a type inference procedure for the system described above.
As is typical, inference proceeds by generating and solving typing constraints.
The constraints are guarded inclusions, representing flow of data conditioned on the presence of certain constructors in datatypes along a program path.

A key goal of our work is to give some \emph{guarantee} of the scalability of the analysis to large, real-world programs. 
We do this by ensuring that the whole of type inference -- constraint generation and constraint solving -- runs in time that is worst-case linear in the size of the program (assuming other parameters, such as the size of underlying types, are fixed).

We achieve this complexity guarantee by a careful exploitation of compositionality in the type inference algorithm.
The key is to ensure that the size of the constraint set used to summarise the behaviour of a component $c$ is independent of the number of components that $c$ depends on.

The issues involved are the same for any kind of compositional program analysis so, to illustrate, consider some abstract program $p = p_1 \cdot p_2 \cdot \cdots \cdot p_n$ that has been broken down into $n$ ``components'' $p_i$.  In the interests of approaching the worst-case complexity in as simple a way as possible, assume that each component uses only the component immediately preceding it in the chain --- for example, via a procedure call.

A compositional program analysis computes a summary $S_i$ of the behaviour of each component $i$ separately.   
For example, for constraint-based analyses, this is typically a set of constraints in a solved form (e.g. a constrained type scheme).
For each component, the size of $S_i$ can depend on the size of component $i$ (i.e. its program text), but also the size of the summary $S_{i-1}$ already computed for component $i-1$ on which it depends.
By choosing the granularity of components to be small, or otherwise by making some reasonable assumption, we can regard the size of the program text of each component to be bounded by a constant.  
Hence, when we speak of program size, we will refer to the number of components, $n$.
A consequence of this is that we may assume that the number of times that $p_i$ uses $p_{i-1}$ is bounded by some constant, say $c$.

When analysing the worst-case complexity of polyvariant analyses, like HM(X)-style type inference \cite{odersky-sulzmann-wehr-TSPOS1999},  there is typically the possibility that the summary of component $i$ may be duplicated $c$ times inside the summaries of those components that depend on it,  and thus we arrive at the (well-known) conclusion that the ``summary'' for the entry point of the chain $S_n$ may be of size exponential in $n$\footnote{However, note that \citet{RN52} show that this can be reduced to cubic complexity in the case of simple variable/variable constraints.}.
For non-polyvariant, non-path-sensitive analyses, there is no duplication, but it is nevertheless typical that summaries are already quadratic in $n$: the cubic-time fragment of set constraints (see e.g. \cite{RN66,RN49,RN63,RN64}) is one example of this class.

Since this blow-up occurs even in typical cases, there is an extensive literature on powerful simplification techniques by which large summaries may \emph{sometimes} be replaced by more concise equivalents, see particularly \cite{RN54,RN69,RN70,RN50,RN68,RN38,RN49}.  
However, getting just the right combination and tuning of heuristics is difficult, and the initial implementation effort and subsequent maintenance is significant (e.g. regular benchmarking as the underlying programming language evolves).
Moreover, one will always be able to find reasonable programs on which heuristic simplifications are not well tuned, the program analysis/type inference will stall, and the program's author will lose faith in the system.

By contrast, our system is designed to guarantee that the worst-case size of any $S_i$ is independent of the summaries that it depends on, and hence of the program size (though it is exponential in the size of the largest underlying type).
Therefore, with other parameters fixed, the size of each our summaries $S_i$ is bounded by a constant.

We proceed component by component, first generating constraints and immediately putting them into a solved form.
However, computing a solved form so as to guarantee the above property is not straightforward. 
Our constraint solver, which is inspired by the resolution-based approach used in set constraint based program analysis \cite{aiken-et-al-popl1994,aiken-wimmers-fpca1993,aiken-wimmers-lics1992,heintze-et-al-jar1992} may take time exponential in the size of its input.

We are able to guarantee a linear time complexity overall because our compositional approach ensures that each constraint set that is given to the solver is unrelated to the size of the program.
The size of the constraint set generated for a given component depends only on the size of the \emph{summaries} of the components it depends on --- the solved forms --- and each of these is bounded by a constant.
Therefore, the size of any constraint set supplied to the solver is also bounded by a constant.
Thus we solve a small (but exponential in the size of the underlying types) number of constraints at every program point, rather than an enormous (exponential in the size of the program) number of constraints when processing the program's entry point.

This works only because we show that our constraint sets in solved form have the following remarkable property, stated formally as Theorem~\ref{thm:restr-ext}.
\begin{quotation}
Suppose $C$ is a set of constraints in solved form over variables $V$ and let $I \subseteq V$ be  arbitrary.  
Let $\restr{C}{I}$, called the \emph{restriction of $C$ to $I$} be those constraints in $C$ in which occur \emph{only} variables from $I$. 
Then every solution to $\restr{C}{I}$ can be extended to a solution of all $C$.
\end{quotation}
In the restriction $\restr{C}{I}$, entire constraints are culled, including those that involve a mixture of variables from $I$ and $V \setminus I$. 
Such mixed constraints, intuitively, impose compatibility requirements on the different components of a solution to $C$.
What is significant about the above property is that it guarantees not only that the part of the constraint set only concerned with $V \setminus I$ is internally consistent but, moreover, that the mixed constraints will be satisfiable no matter which solution to $\restr{C}{I}$ is chosen.

We exploit compositionality in order to choose a minimal set of variables $I$, \emph{the interface}, with which to restrict constraint sets.  
The interface of a program expression in context $\Gamma \types e : T$ consists only of the those refinement variables that occur free in $\Gamma$ and $T$.
The size of the interface depends only on the size of the underlying type of $e$, the size of definitions of any datatypes occurring in that type and the nesting of pattern matching.  Thus, if we make the (in our view, reasonable) assumption that the sizes of these quantities are bounded by a constant, so too is the size of the interface and, therefore, the size of any restricted constraint set --- our component summary.

\subsection{Implementation}

Of course worst-case complexity is only part of the story, and especially so when the constant factors depend upon several assumptions.  Hence, we have implemented our System in Haskell as a GHC Plugin and ran it on a selection of packages from the Hackage database.  
The plugin takes a Haskell package to be compiled and runs our type inference algorithm over the whole code to yield a constrained type assignment and a set of type errors.
The average time taken to process each module is in the order of milliseconds and the results show very stark contrast between the number of refinement variables associated with the program points in the module (often > 10000) and the number of refinement variables in the interfaces (typically < 20).

\subsection{Outline}

The rest of the paper is structured as follows.
In Section~\ref{sec:types} we describe a Haskell-like functional programming language which forms the setting for our work.  
This is followed in Section~\ref{sec:refinement} by our definitions of refinement.
Then in Sections~\ref{sec:subtyping} and \ref{sec:type-assignment} by the definition of the type system that characterises the analysis.
In Sections~\ref{sec:constraints}, \ref{sec:inference} and \ref{sec:saturation} we present our analysis as a type inference algorithm, generating and solving constraints.
We discuss the restriction operation and its complexity in Section~\ref{sec:analysis} and we report on our implementation in Section~\ref{sec:implementation}.
Finally, we conclude and discuss related work in Section~\ref{sec:conclusion}.

\section{Language}\label{sec:types}

\paragraph{Preliminaries.}
Given sets $X$ and $Y$, let us write $X \to Y$ for the set of all functions from $X$ to $Y$ and $\Map\ X\ Y$ for the set of all finite maps between $X$ and $Y$.
As usual function arrows are assumed to associate to the right.
Additionally, we define the indexing of function arguments, that is $(X_1 \to \cdots \to X_m \to Y)[i] = X_i$ for all $i \in [1 .. m]$.
Given a family of sets $Y_x$ indexed by $x \in X$, let us write $\Pi x \in X.\ Y_x$ for the subset of $X \to \bigcup_{x \in X} Y_x$ that contains only functions that are guaranteed to map each $x \in X$ to some element of $Y_x$ and let us write $\Sigma x \in X.\ Y_x$ for the subset of $X \times \bigcup_{x \in X} Y_x$ in which the second component of each pair $(x,y)$ is guaranteed to belong to $Y_x$.
Given a family of sets $Y_x$ indexed by $x \in X$, let us write $\coprod_{x \in X} Y_x$ for their disjoint sum and $
\inj_x$ for each of the canonical injections.

\paragraph{Types.}
We assume a countable collection $\mathbb{A}$ of type variables, ranged over by $\alpha$; a finite collection $\mathbb{B}$ of base types, ranged over by $b$, and a countable collection $\mathbb{D}$ of \emph{algebraic datatype identifiers} ranged over by $d$.
These can be thought as the names of first-order type constructors.
Each datatype identifier has a fixed arity, and only forms a proper type when supplied with the appropriate number of type arguments.
We refer to a datatype identifier with its argument as a datatype, and when it is clear from the context we will also write these as $d$.
\[
  \begin{array}{rrcl}
    \textsc{(Monotypes)} & T,U,V &\Coloneqq& \alpha \mid b \mid d\ \vv{T} \mid T_1 \to T_2 \\
    \textsc{(Type schemes)} & S &\Coloneqq& T \mid \forall \alpha.\, S\\
  \end{array}
\]
We write $\Dt\ D$ to stand for the set of all datatypes with datatype identifiers drawn from the set $D$.
$\Ty\ D$ and $\Sch\ D$ are defined similarly for monotypes and schemes.
We consider monotypes to be a trivial instance of type schemes where convenient.
The purpose of distinguishing base types from datatypes is that the former may not be refined.
For example, we will consider $\mathsf{Int}$ to be a base type, $\mathsf{Fm}$ a datatype identifier and $\mathsf{Fm\ Int}$ a datatype.
Type schemes are identified up to renaming of bound variables.

\paragraph{Lifting over types.}
Given a relation on datatypes $R \subseteq \Dt\ D_1 \times \Dt\ D_2$, we write $\Ty(R)$ for the relation on $\Ty\ D_1 \times \Ty\ D_2$ defined inductively by the following:
\[
  \begin{array}{c}
  \prftree{\phantom{hi}}{\Ty(R)(\alpha,\,\alpha)} \qquad
  \prftree{\Ty(R)(b,\,b)} \qquad
    \prftree[r]{$R(d_1,\,d_2)$}{\Ty(R)(d_1,\,d_2)} \qquad
  \prftree{\Ty(R)(T_3,\,T_1)}{\Ty(R)(T_2,\,T_4)}{\Ty(R)(T_1 \to T_2,\,T_3 \to T_4)} 
  \end{array}
\]
\begin{toappendix}
We write $\Sch(R)(\forall \vv{\alpha}.\ T_1,\,\forall \vv{\alpha}.\ T_2)$ just if $\Ty(R)(T_1,\,T_2)$.
Given a function $f: \Dt\ D_1 \to \Dt\ D_2$ we define $\Ty(f) : Ty\ D_1 \to \Ty\ D_2$ recursively as follows:
\[
  \begin{array}{rcl}
    \Ty(f)(\alpha) &=& \alpha \\
    \Ty(f)(b) &=& b \\
    \Ty(f)(d) &=& f(d) \\
    \Ty(f)(T_1 \to T_2) &=& \Ty(f)(T_1) \to \Ty(f)(T_2)
  \end{array}
\]
For brevity, for function $f : \Dt\ D_1 \to \Dt\ D_2$, we will usually just write $f(T)$, $f(S)$ for $\Ty(f)(T)$ and $\Ty(f)(S)$ respectively.
\end{toappendix}

\paragraph{Expressions and modules.}
We assume a countable collection of term variables, ranged over by $x$, $y$, $z$, and variations.
As well as a countable collection $\mathbb{K}$ of datatype constructors, ranged over by $k$.
The arity of a constructor is denoted $\arity(k)$.
The expressions of the language are:
\[
  \begin{array}{rcl}
    m &\Coloneqq& \epsilon \mid m \cdot \langle x = e \rangle \\
    e &\Coloneqq& c \mid k \mid e_1\ e_2 \mid e\ T \mid \abs{x:T}{e} \mid \Abs{\alpha}{e} \\ 
      &  & \mid \matchtm{e}{k_1\ \vec{x_1} \mapsto e_1 \shortmid \cdots{} \shortmid k_n\ \vec{x_n} \mapsto e_n} 
  \end{array}
\]
Expressions are identified up to renaming of bound variables and we will adopt the Barendregt variable convention in order to retain a simple notation.
Since we are defining a refinement type system, we will assume that the input program already has a typing assigned by the underlying type system of the programming language.
We assume that this is manifest, in part, by the insertion of appropriate type abstraction $\Abs{\alpha}{e}$ and application $e\ T$ terms, and by the annotation of term abstractions with their argument type $\abs{x:T}{e}$.
We also assume, as is the case for GHC Core, that pattern matching has been preprocessed into case expressions in which patterns are 1 level deep, i.e have the form $k\,x_1\,\cdots\,x_n$ for some constructor $k$.
Since our analysis is path sensitive, this is not a restriction and we could perform an equivalent, but clumsier, development allowing for syntax containing nested patterns.

Modules $m$ are simply a sequence of variable definitions $\langle x = e \rangle$ that may be empty $\epsilon$.
For simplicity of presentation, we allow recursive definitions but not mutually recursive definition sets.

\paragraph{Datatype environments.}
The meaning of datatypes is defined by an environment of datatype definitions.
Each datatype definition introduces a new datatype identifier along with a collection of datatype constructors that can be used to build instances of the type.

\begin{definition}[Datatype Environment]
  A \emph{datatype environment} is a pair consisting of a set $D \subseteq \mathbb{D}$ of datatypes identifiers and a function $\Delta : D \to \Map\ \Con\ (\Sch\ D)$ mapping each datatype identifier $d$ to a finite map which records the associated constructors and their type.
  We assume these types only concern the type variables that appear in datatype's definition, and so is of shape: $\forall \vv{\alpha}.\ U_1 \to \cdots \to U_m \to d\ \vv{\alpha}$, where $m$ is the constructor's arity.
  For convenience, and as the return type of a constructor is predetermined, we will often identify $\Delta(d)(k)$ with just the sequence of types corresponding to a constructor's arguments, i.e. $[U_1,\, ..,\, U_m]$ where $U_i = \Delta(d)(k)[i]$.

Since datatype environments are partial functions on $\mathbb{D}$, they inherit the natural partial order in which $\Delta_1 \subseteq \Delta_2$ just if the graph of the former is included in the graph of the latter.
If $\Delta_1 \subseteq \Delta_2$ we say that $\Delta_1$ is a \emph{subenvironment} of $\Delta_2$.
\end{definition}

Note that the notion of subenvironment only concerns the datatypes that are defined in an environment and not the definitions of those datatypes (the constructors and their types), which will be treated by the notion of \emph{refinement} in the sequel.

\section{Datatype Refinement}\label{sec:refinement}

Henceforth we will fix a particular datatype environment $\underline{\Delta} : \ul{D} \to \Map\ \Con\ (\Sch\ \ul{D})$ which we call the \emph{underlying datatype environment}.  We think of $\underline{\Delta}$ as the datatype environment that is provided by the programmer by their datatype definitions.

\begin{example}\label{ex:datatype-env}
  We will use the following as running example of underlying datatype environment.
  Consider the datatype $\mathsf{Lam}$ of $\lambda$-terms with arithmetic using a locally nameless representation:
  \begin{lstlisting}
data Arith = Lit Int | Add | Mul 
data Lam = Cst Arith | BVr Int | FVr String | Abs Lam | App Lam Lam
  \end{lstlisting}
  These datatypes are slightly artificial, but they allow us to illustrate several features of the definitions in one example.  For simplicity, we will consider $\mathsf{Int}$ and $\mathsf{String}$ to be base types (which will, therefore, not be refined).
\end{example}

The underlying datatype environment contains definitions for all the datatypes declared by the programmer.
Some datatype definitions require the definitions of other datatypes to be understood properly. 
For example, to understand \textsf{Lam}, one must also understand the definition of \textsf{Arith} since one is defined in terms of the other.
There is a notion of a subenvironment that contains all and only those definitions that are needed to understand one particular datatype.

\begin{definition}[Slice]\label{def:gen-subenv}
  Suppose $\Delta : D \to \Map\ \Con\ (\Sch\ D)$.
One can always construct a subenvironment of $\Delta$ by starting from a given datatype $d \in D$ and closing under transitive dependencies.
  Define the \emph{slice of $d$ through $\Delta$}, written $\subenv{d}{\Delta}$, as the least subenvironment of $\Delta$ containing $d$.
\end{definition}

For example, in the environment $\ul{\Delta}$ described in Example \ref{ex:datatype-env}, we have $\subenv{\mathsf{Lam}}{\ul{\Delta}}$ being the whole environment, and $\subenv{\mathsf{Arith}}{\ul{\Delta}}$ being just the definition of $\mathsf{Arith}$ itself.

\begin{definition}[Refinement]
  We say that a datatype environment $\Delta_1 : D \to \Map\ \Con\ (\Sch\ D)$ is a \emph{refinement} of a datatype environment $\Delta_2 : D \to \Map\ \Con\ (\Sch\ D)$ just if the definitions of the datatypes in $\Delta_1$ are bounded above by the definitions in $\Delta_2$, that is: for all $d \in D$, $\Delta_1(d) \subseteq \Delta_2(d)$ (i.e. the graph of the first map is included in the graph of the second).
\end{definition}

Suppose $\Delta_1(d)(k)$ is some type scheme $S$ and $\Delta_1$ is a refinement of $\Delta_2$, then $\Delta_2(d)(k)$ is the same $S$.  So, the refinements of $\Delta : D \to \Map\ \Con\ (\Sch\ D)$ are in one-to-one correspondence with choices of constructors for each of the constituent datatypes.
More precisely, every function $f : \Pi d \in D.\ \mathcal{P}(\dom\ \Delta(d))$ determines a refinement $\Delta_f$ satisfying: 
\[
  \Delta_f(d)(k) = 
    \begin{cases} 
      \Delta(d)(k) & \text{if $k \in f(d)$} \\ 
      \bot & \text{otherwise} 
    \end{cases}
\]
and each refinement arises in this way.

\begin{example}\label{ex:refinement}
  The following refinement of the underlying environment from Example~\ref{ex:datatype-env} describes a type of closed, applicative terms over linear arithmetic.
  \begin{lstlisting}
      data Arith = Lit Int | Add Arith
      data Lam = Cst Arith | App Lam Lam
  \end{lstlisting}
  This refinement is determined by the choice $f_{LL}$ satisfying: 
  \[
  f_{LL}(\mathsf{Arith}) = \{\mathsf{Lit},\,\mathsf{Add}\} \qquad
  f_{LL}(\mathsf{Lam}) = \{\mathsf{Cst},\,\mathsf{App}\}
  \]
\end{example}

For the purpose of assigning types to the program, we construct a new datatype environment consisting of all possible refinements of the underlying environment supplied by the programmer\footnote{It would suffice to take all the refinements of all the slices (which itself still includes some redundancy, but this would complicate the definitions for no practical gain)}.

\begin{definition}[The Intensional Refinement Environment]
  Given a family of datatype environments $\Delta_{i \in I} : D_i \to \Map\ \Con\ (\Sch\ D_i)$ we define their coproduct as the environment:
\[
  \coprod_{i \in I} \Delta_i : (\coprod_{i \in I} D_i) \to \Map\ \Con\ (\Sch\ (\coprod_{i \in I} D_i))
\]
whose domain is simply a disjoint sum of sets (as defined in the preliminaries).
The coproduct comes equipped with canonical injections $\inj_i$ satisfying $(\coprod_{i \in I} \Delta_i)(\inj_i d)(d)(k) = \inj_i(\Delta_i(d)(k))$ wherever the latter is properly defined.

The \emph{intensional refinement environment}, written $\canon{\Delta}$, is the coproduct of all the refinements of $\ul{\Delta}$:
\[
  \canon{\Delta} \coloneqq \coprod_{f \in I} \ul{\Delta}_f : \canon{D} \to \Map\ \Con\ (\Sch\ \canon{D})
\]
where $I$ is the set $\Pi\, \ul{d} \in \ul{D}.\ \mathcal{P}(\dom\ \ul{\Delta}(\ul{d}))$ of functions from underlying datatypes $d$ to appropriate subsets of constructors.
That is, the set of all possible refinements.
We write the domain of this coproduct as $\canon{D}$.
\end{definition}

Note that, formally, the datatype identifiers whose definitions are given in the intensional refinement environment are of shape $\inj_\Delta\,d$ with $d \in \mathbb{D}$.
This is convenient because one can read such a name as ``the refinement of $d$ whose definition is $\Delta$''.
However, we will continue to use more friendly names (and Haskell notation) in our examples.

\begin{example}\label{ex:lam-refinements}
  The type of closed, applicative terms over linear arithmetic from Example~\ref{ex:refinement} can be found in $\canon{\Delta}$ alongside the type $\mathsf{LArith}$ of linear arithmetic constants:
  \[
    \mathsf{LATm_0} \coloneqq \inj_{f_{LL}}\,\mathsf{Lam} \qquad
    \mathsf{LArith} \coloneqq \inj_{f_{LA}}\,\mathsf{Arith}
  \]
  the latter being defined by $f_{LA}(\mathsf{Arith}) = \{\mathsf{Lit},\,\mathsf{Add}\}$ and $f_{LA}(\mathsf{Lam}) = \emptyset$. This datatype could equivalently be defined as $\inj_{f_{LL}}\,\mathsf{Arith}$ (or in many other ways).
  Other refinement datatypes defined include the type $\mathsf{LLam_0}$ of closed $\lambda$-terms over linear arithmetic, the type $\mathsf{ATm}$ of applicative terms and the type $\mathsf{ATm_0}$ of closed applicative terms, whose definitions we give in convenient notation:
  \begin{lstlisting}
      data ATm$_0$ = Cst Arith | App ATm$_0$ ATm$_0$ 
      data ATm = FVr String | Cst Arith | App ATm ATm
      data LLam$_0$ = Cst LArith | BVr Int | Abs LLam$_0$ | App LLam$_0$ LLam$_0$
  \end{lstlisting}
\end{example}

\begin{definition}[Refinement Type]
  A type (scheme) $S \in \Sch\ \canon{D}$ is said to be a \emph{refinement type}.  Refinement types come equipped with an \emph{underlying type}, written $\under(S)$, which is a type (scheme) in $\Sch\ \ul{D}$ defined recursively as follows:
  \[
    \begin{array}{rcl}
      \under(a) &=& a \\
      \under(b) &=& b \\
      \under(\inj_f d\ \vv{T}) &=& d\ \vv{\under(T)} \\
      \under(T_1 \to T_2) &=& \under(T_1) \to \under(T_2) \\
      \under(\forall a.\ S) &=& \forall a.\ \under(S)
    \end{array}
  \]
\end{definition}



In the following, we will assume that we are given a program equipped with a complete underlying typing, that is: every subterm $M$ has an associated type $S \in \Sch\ \ul{D}$.
Our task will be to find a new \emph{refinement} typing: an assignment to each subterm $M$ of a refinement type $S' \in \Sch\ \canon{D}$ that has the same ``shape'' as the underlying type $S$ of $M$ in the sense that $\under(S') = S$.

\section{Subtyping}\label{sec:subtyping}

\begin{figure*}
\[
  \begin{array}{c}
    \prftree[l,r]
  {\small$\left|\begin{array}{l}\under(T_1) \neq \under(T_2)\end{array}\right.$}
    {\rlnm{SShape}}
    {
      \types T_1 \not\subtype T_2
    }
    \\[8mm]
    \prftree[l,r]
  {\small$\left|\begin{array}{l}\dom(\canon{\Delta}(d_1)) \not\subseteq \dom(\canon{\Delta}(d_2))\end{array}\right.$}
    {\rlnm{SMis}}
    {  \types d_1\ \vv{T_1} \not\subtype d_2\ \vv{T_2} }
    \\[8mm]
    \prftree[l,r]
    {\small$\left|
      \begin{array}{l}
        m = \arity(k),\,
        i \in [1..m]\\[1mm]
        \canon{\Delta}(d_1)(k) = \forall \vv{\alpha}.\ U_{1_1} \to \cdots U_{1_m} \to d_1\ \vv{\alpha}\\[1mm]
        \canon{\Delta}(d_2)(k) = \forall \vv{\beta}.\ U_{2_1} \to \cdots U_{2_m} \to d_2\ \vv{\beta}
      \end{array}
      \right.
    $}
    {\rlnm{SSim}}
    {
      \types U_{1_i}[\vv{T_1/\alpha}] \not\subtype U_{2_i}[\vv{T_2/\beta}]
    }
    {
      \types d_1\ \vv{T_1} \not\subtype d_2\ \vv{T_2}
    }
    \\[8mm]
    \prftree[l]{\rlnm{SArrL}}
    {
      \types T_1' \not\subtype T_1
    }
    {
      \types T_1 \to T_2 \not\subtype T_1' \to T_2'
    }
    \qquad\qquad
    \prftree[l]{\rlnm{SArrR}}
    {
      \types T_2 \not\subtype T_2'
    }
    {
      \types T_1 \to T_2 \not\subtype T_1' \to T_2'
    }
  \end{array}
\]
\caption{The complement of the subtyping relation on monotypes.}\label{fig:subtype-simple}
\end{figure*}

Refinement induces a natural ordering on refinement datatypes according to which constructors are available in their definition.
This ordering can then be lifted to all types built over those datatypes in the obvious way.

\begin{definition}[Subtyping]
The judgement $\types T_1 \subtype T_2$ is defined coinductively, using the inductive system of rules in Figure \ref{fig:subtype-simple} to characterise the complement.
We extend subtyping to two schemes that have the same quantifier prefix, writing:
  $\types \forall \vv{\alpha}.\ T_1 \subtype \forall \vv{\alpha}.\ T_2$
whenever $\types T_1 \subtype T_2$.
We say that two types $T_1$ and $T_2$ are \emph{subtype equivalent} and write $\types T_1 \equiv T_2$ just if $\types T_1 \subtype T_2$ and $\types T_2 \subtype T_2$.
\end{definition}

\begin{toappendix}
It is straightforward to show the following by coinduction:
\begin{lemma}
  Subtyping is a preorder.
\end{lemma}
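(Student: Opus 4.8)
I would prove both reflexivity and transitivity by coinduction, exploiting the fact (set up in the definition) that $\subtype$ is the complement of the relation $\mathord{\not\subtype}$ defined inductively by the rules of Figure~\ref{fig:subtype-simple}. By the usual induction/coinduction duality (Knaster--Tarski applied to the complement), to establish $\mathcal{R} \subseteq \mathord{\subtype}$ for a relation $\mathcal R$ on monotypes it suffices to check that $\overline{\mathcal{R}}$ is closed under those rules; equivalently, that whenever a rule has conclusion $T_1 \not\subtype T_2$ with $(T_1,T_2) \in \mathcal R$, either the rule's side condition is unsatisfiable or the pair occurring in one of its premises already lies in $\mathcal R$. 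The extension to type schemes is then immediate, subtyping of schemes being subtyping of the bodies under a shared quantifier prefix.

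For reflexivity I would take $\mathcal{R}$ to be the diagonal and run through the five rules with an equal conclusion. \rlnm{SShape} and \rlnm{SMis} cannot fire, their side conditions collapsing to $\under T \neq \under T$ and $\dom(\canon{\Delta}(d)) \not\subseteq \dom(\canon{\Delta}(d))$. For \rlnm{SSim} with conclusion $(d\,\vv T, d\,\vv T)$ the premise is $(U_i[\vv{T/\alpha}], U_i[\vv{T/\alpha}])$, again diagonal; for \rlnm{SArrL} and \rlnm{SArrR} the premise pairs are $(T_1,T_1)$ and $(T_2,T_2)$. So $\overline{\mathcal R}$ is closed under the rules and $\mathord{\subtype}$ is reflexive.

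For transitivity I would take $\mathcal{R} = \mathord{\subtype} \circ \mathord{\subtype}$, the pairs $(T_1,T_3)$ admitting a mediator $T_2$ with $\types T_1 \subtype T_2$ and $\types T_2 \subtype T_3$, and again show $\overline{\mathcal R}$ is rule-closed. The crucial preliminary step is to invert $\types T_1 \subtype T_2$ rule by rule: it tells us $\under T_1 = \under T_2$; that if $T_1 = d_1\,\vv{T_1}$ then $T_2 = d_2\,\vv{T_2}$ with $\dom(\canon{\Delta}(d_1)) \subseteq \dom(\canon{\Delta}(d_2))$ and with $\types U_{1_j}[\vv{T_1/\alpha}] \subtype U_{2_j}[\vv{T_2/\gamma}]$ for every common constructor $k$ and position $j$; and that if $T_1 = A_1 \to A_2$ then $T_2 = B_1 \to B_2$ with $\types B_1 \subtype A_1$ and $\types A_2 \subtype B_2$ (and symmetrically for $\types T_2 \subtype T_3$). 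Given this, \rlnm{SShape} on $(T_1,T_3)$ is unfireable since $\under T_1 = \under T_2 = \under T_3$; \rlnm{SMis} is unfireable by transitivity of $\subseteq$ on constructor-domains through $d_2$; \rlnm{SArrL} and \rlnm{SArrR} reduce to the needed premise lying in $\mathcal R$, witnessed by the component $B_1$, resp.\ $B_2$, of $T_2$; and for \rlnm{SSim}, whose premise is $(U_{1_i}[\vv{T_1/\alpha}], U_{3_i}[\vv{T_3/\beta}])$, the mediator witnessing membership in $\mathcal R$ is $U_{2_i}[\vv{T_2/\gamma}]$, which is well-defined because $k \in \dom(\canon{\Delta}(d_1)) \subseteq \dom(\canon{\Delta}(d_2))$ (so $\canon{\Delta}(d_2)(k)$ exists with the matching arity) and which lies between the two endpoints by the constructor-argument facts just extracted.

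The hard part will be exactly the \rlnm{SSim} case of transitivity: one must be sure the mediator $T_2$, a priori known only to carry the right underlying type, actually still contains the scrutinised constructor $k$, and that the three instantiated argument types $U_{1_i}[\vv{T_1/\alpha}]$, $U_{2_i}[\vv{T_2/\gamma}]$ and $U_{3_i}[\vv{T_3/\beta}]$ genuinely match up. This rests on two facts already in hand: a constructor is never dropped when moving up the subtype order (so $k$ survives in $d_2$, by inverting \rlnm{SMis}), and refinement leaves the argument-type schemes of retained constructors untouched (so $U_{1_i}$, $U_{2_i}$ and $U_{3_i}$ agree up to which refinement their datatype occurrences name, and the substitutions are meaningful). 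Everything else is a mechanical traversal of the five rules, and reflexivity is a degenerate instance of the same bookkeeping.
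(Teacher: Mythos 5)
Your proposal is correct and follows essentially the same route as the paper: reflexivity and transitivity are each established by coinduction, exhibiting the candidate relation (the diagonal, respectively the relational composition $\mathord{\subtype}\circ\mathord{\subtype}$) and checking that its complement is closed under the five $\not\subtype$-rules, with the \rlnm{SSim} case of transitivity handled by passing through the mediator's instantiated constructor-argument types exactly as the paper does. Your write-up is somewhat more explicit than the paper's (which dismisses reflexivity as "an easy coinduction"), but the argument is the same.
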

\begin{proof}
  The proof of reflexivity is an easy coinduction.
  We prove that, if there is $T$ such that $T_1 \subtype T$ and $T \subtype T_2$ then $T_1 \subtype T_2$ by coinduction on $\subtype$.
  This amounts to showing that the set $\{(T_1,\,T_2) \mid \exists T.\ T_1 \subtype T \subtype T_2\}$ is closed under the given rules, so in each case, we will show that  existence of such an intermediate $T$ is in contradiction with the given premises in the case.
  \begin{description}
    \item[\rlnm{SShape}]
      Suppose there is $T$ such that $T_1 \subtype T$ and $T \subtype T_2$ and $r(T_1) \neq r(T_2)$.
      Since $\subtype$ satisfies \rlnm{SShape}, it follows that $r(T_1) = r(T)$ and $r(T) = r(T_2)$ from which we obtain the desired contradiction.
    \item[\rlnm{SMis}]
      Suppose there is $T$ such that $d_1\ \vv{T_1} \subtype T \subtype d_2\ \vv{T_2}$.
      It follows from \rlnm{SShape} that $T$ is some datatype $d_3\ \vv{T_3}$.
      Furthermore, suppose $\dom(\canon{\Delta}(d_1)) \not\subseteq \dom(\canon{\Delta}(d_2))$.
      By \rlnm{SMis}, $\dom(\canon{\Delta}(d_1)) \subseteq \dom(\canon{\Delta}(d_3)$, and $\dom(\canon{\Delta}(d_3)) \subseteq \dom(\canon{\Delta}(d_2)$.
      Hence by the transitivity of set inclusion, we immediately have a contradiction.

    \item[\rlnm{SSim}]
      Again, suppose there is $T$ such that $d_1\ \vv{T_1} \subtype T \subtype d_2\ \vv{T_2}$ and thus $T$ is some datatype $d_3\ \vv{T_3}$.
      Suppose $k \in \dom(\canon{\Delta}(d_1)$, by \rlnm{SMis} it is also defined for $d_2$ and $d_3$.
      Suppose $U_{1_i}[\vv{T_1/\alpha_1}]$, $U_{2_i}[\vv{T_2/\alpha_2}]$, $U_{3_i}[\vv{T_3/\alpha_3}]$ are the ith arguments, instantiated with the corresponding type argument, to the constructor in each datatype respectively.  Suppose $U_{1_i}[\vv{T_1/\alpha_1}] \not\subtype U_{3_i}[\vv{T_3/\alpha_3}]$ so there is no intermediate type $T_i$ such that $U_{1_i}[\vv{T_1/\alpha_1}] \subtype T_i \subtype U_{3_i}[\vv{T_3/\alpha_3}]$.  However, it follows from our original assumption and \rlnm{SSim} that $U_{1_i}[\vv{T_1/\alpha_1}] \subtype U_{2_i}[\vv{T_2/\alpha_2}]$ and similarly that $U_{2_i}[\vv{T_2/\alpha_2}] \subtype U_{3_i}[\vv{T_3/\alpha_3}]$.
      Thus we reach a contradiction.

    \item[\rlnm{SArrL}]
      Suppose there is $T$ such that $T_1 \to T_2 \subtype T \subtype T_1' \to T_2'$ and suppose there is no intermediate $U$ such that $T_1' \subtype U \subtype T_1$.
      It follows from \rlnm{SShape} that $T$ must be of shape $T_3 \to T_4$.
      It follows from \rlnm{SArrL} that, therefore, $T_1' \subtype T_3$ and $T_3 \subtype T_1$, in contradiction of the absence of $U$.
    \item[\rlnm{SArrR}]
      Follows analogously to the above case.
  \end{description}
\end{proof}
\end{toappendix}

Intuitively, refinement specifies the possible shapes of types that are then interrelated by subtyping.
Refinement is a covariant treatment of arrow types, since we have $\under(T_1 \to T_2) = T_1' \to T_2'$ iff $\under(T_1) = T_1'$ and $\under(T_2) = T_2'$.
On the other hand, as can be seen from the definition, subtyping interprets the argument type contravariantly.
Consequently, there are some $\under(T) = \ul{T}$ for which $T \subtype \ul{T}$ and other $\under(T) = \ul{T}$ for which $\ul{T} \subtype T$ (viewing an underlying type as its own trivial refinement).

We give the definition coinductively because, as usual, there is a notion of simulation that arises naturally from our coalgebraic view of datatype environments.  
Consequently, it is most straightforward to think of the defining rules as providing a system in which to construct finite refutations of subtype inequalities $T_1 \not\subtype T_2$, which will, ultimately, fail to hold either because the types $T_1$ and $T_2$ have a different shape, or because $T_1$ provides some constructor that $T_2$ does not.

\begin{example}\label{ex:atms}
  Following the running example, the judgement $\types \mathsf{LATm}_0 \to \mathsf{String} \not\subtype \mathsf{ATm}_0 \to \mathsf{String}$ follows by a simple refutation:
  \[
    \prftree[l]{\rlnm{SArrL}}
    {
      \prftree[l]{\rlnm{SSim}}
      {
        \prftree[l]{\rlnm{SMis}}
        {
          \types \mathsf{Arith} \not\subtype \mathsf{LArith}   
        }
      }
      {
          \types \mathsf{ATm}_0 \not\subtype \mathsf{LATm}_0
      }
    }
    {
        \types \mathsf{LATm}_0 \to \mathsf{String} \not\subtype \mathsf{ATm}_0 \to \mathsf{String}
    }
  \]
\end{example}

Conversely, we can use the coinduction principle to show that $\types T_1 \subtype T_2$, in which case we require a model\footnote{By which we mean a set of pairs of types satisfying all $\not\subtype$-defining rules.} of $\subtype$ that contains $(T_1,\,T_2)$.  

\begin{example}\label{ex:subtype-witness}
  For $\types \mathsf{ATm}_0 \to \mathsf{String} \subtype \mathsf{LATm}_0 \to \mathsf{String}$ we provide the following witness:
  \[
    \left\{
    \begin{array}{c}
    (\mathsf{ATm}_0 \to \mathsf{String}, \mathsf{LATm}_0 \to \mathsf{String}),\,
    (\mathsf{String}, \mathsf{String}),\\
    (\mathsf{LATm}_0, \mathsf{ATm}_0),\,
    (\mathsf{LArith}, \mathsf{Arith}),\,
    (\mathsf{Int}, \mathsf{Int})
    \end{array}
    \right\}
  \]
  It can be easily verified that this set is a model of the defining rules for $\subtype$ and hence, by coinduction, is contained within it.
\end{example}

However, such models can be a bit unwieldy in general as the types involved get more complex.  We can do better by observing that the definition can be approximated by a coinductive part, concerning datatypes, and an inductive part, by which a subtyping relationship between datatypes is lifted to all types.  Consequently, we need only find a model of the coinductive part, which is much neater since it only concerns $\Dt\ D \times \Dt\ D$.  The following can be shown by a straightforward coinduction.

\begin{lemmarep}[Simulation]\label{lem:simulation}
  Let $R \subseteq \Dt\ D \times \Dt\ D$ and suppose that, for all $(d_1\ \vv{T_1} ,\,d_2\ \vv{T_2}) \in R$, and for all $k$ such that $\Delta(d_1)(k)$ is defined:
  \begin{itemize}
    \item $\Delta(d_2)(k)$ is defined.
    \item And, moreover, $\Ty(R)(U_{1_i} ,\,U_{2_i})$ for each $i \in [1..\arity(k)]$, where $\vv{U_1}$ and $\vv{U_2}$ are the argument types of $\Delta(d_1)(k)$ and $\Delta(d_2)(k)$ instantiated at $\vv{T_1}$ and $\vv{T_2}$ respectively.
  \end{itemize}
  Then it follows that $\Ty(R)$ is included in the subtype relation.
\end{lemmarep}
\begin{proof}
  The proof is by coinduction.
  \begin{description}
    \item[\rlnm{SShape}] By definition, if $\Ty(R)(T_1,\,T_2)$ then $T_1$ and $T_2$ have the same shape.
    \item[\rlnm{SMis}] Suppose $\dom(\canon{\Delta}(d_1)) \not\subseteq \dom(\canon{\Delta}(d_2))$ and suppose, for the purpose of obtaining a contradiction, that $\Ty(R)(d_1\,\vv{T_1},\,d_2\,\vv{T_2})$.  Then there is some $k$ such that $\canon{\Delta}(d_1)(k)$ is defined, but $\canon{\Delta}(d_2)(k)$ is not, thus contradicting the first bullet of the definition of $R$.
    \item[\rlnm{SData}] Suppose $(U_{1_i}[\vv{T_1/\alpha}],\,U_{2_i}[\vv{T_2/\alpha}]) \notin \Ty(R)$ and the side conditions on the rule hold.  Then suppose for contradiction that $\Ty(R)(d_1\,\vv{T_1},\,d_2\,\vv{T_2})$.  By definition, it must be that $R(d_1\,\vv{T_1},\,d_2\,\vv{T_2})$, thus contradicting the second bullet.
    \item[\rlnm{SArrL}] Suppose $\Ty(R)(T_1 \to T_2,\,T_1' \to T_2')$.  Then, by definition of $\Ty(R)$ it can only be because $\Ty(R)(T_1',\,T_1)$.
    \item[\rlnm{SArrR}] Analogously.
  \end{description}
\end{proof}

\noindent
Using this result, it suffices to exhibit $R \coloneqq \{(\mathsf{LATm}_0, \mathsf{ATm}_0),\,(\mathsf{LArith}, \mathsf{Arith})\}$ in order to conclude e.g. $\types \mathsf{ATm}_0 \to \mathsf{String} \subtype \mathsf{LATm}_0 \to \mathsf{String}$.  Intuitively, this witness determines the model $\Ty(R)$, which contains the model of Example \ref{ex:subtype-witness}.


\section{Refinement Type Assignment}\label{sec:type-assignment}

In this section, we present a refinement type system whose purpose is to exclude the possibility of pattern-match failure.  To achieve this, the typing rule for pattern-matching requires that cases are exhaustive according to the type of the scrutinised expression.  
However, the system allows for all refinement datatypes and incorporates the above notion of subtyping, which allows for the scrutinised expression to be typed much more precisely than is possible in the underlying type system.

For the purpose of defining the refinement type system, we make some standard Hindley-Damas-Milner assumptions about the underlying type system, namely that type application happens immediately after introducing a variable of polymorphic type and type abstraction happens only at the point of definition.  As a minor simplification, we assume that constants are monomorphic and write $\mathbb{C}(c)$ for the monotype assigned to $c$ axiomatically.
Since this is a refinement type system, we assume that all expressions have already been assigned an underlying type, which we will typically write with an underline to aid readability.  
We will only need to consult these underlying types when they appear in the syntax, e.g. abstraction and type application.

Additionally, we relax the normal definition of a type environment from a function to a relation.
Program variables may, therefore, have many types as long as they refine the same underlying type.
This assumption is equivalent to allowing environment-level intersection types.

\begin{definition}[Type assignment]\label{def:typing}
A \emph{type environment}, typically $\Gamma$ or $\Delta$, is a finite relation between program variables $x$ and type schemes $S$, whose elements are typically written $x:S$.
We require that $x:S_1 \in \Gamma$ and $x:S_2 \in \Gamma$ implies $\under(S_1) = \under(S_2)$.
This ensures that $\under(\Gamma)$ can be defined in the obvious way.
The type assignment system is divided into two sets of rules, for expressions (Figure~\ref{fig:type-system}) and for modules (Figure~\ref{fig:typing-mod}), defining judgements, respectively:
\[
  \Gamma \types e : T \qquad \Gamma \types m : \Delta
\]
in which $\under(\Gamma) \types e : \under(T)$ and $\under(\Gamma) \types m : \under(\Delta)$ are the underlying typings, provided by the programming language, for the expression $e$ and the module $m$ respectively.
\begin{figure}
\[
  \begin{array}{c}
  \prftree[l]{\rlnm{TModE}}{\Gamma \types \epsilon : \Gamma}
  \qquad
    \prftree[l]{\rlnm{TModD}}{\Gamma \types m : \Gamma'}{\Gamma' \cup \{x : T\} \types e : T'}{\types T' \subtype T}{\Gamma \types m \cdot \langle x = \Abs{\vv{\alpha}}{e} \rangle : \Gamma' \cup \{x : \forall \vv{\alpha}.\ T\}}
  \end{array}
\]
\caption{Typing for modules.}\label{fig:typing-mod}
\end{figure}
\end{definition}

\begin{toappendix}
\begin{lemma}[Weakening]
  Suppose $\types \Gamma_2 \subtype \Gamma_1$ and $\types T_1 \subtype T_2$. If $\Gamma_1 \types e : T_1$ then $\Gamma_2 \types e : T_2$.
\end{lemma}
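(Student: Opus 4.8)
The plan is to prove this by induction on the derivation of $\Gamma_1 \types e : T_1$, after splitting off the two ingredients that do the real work. It is cleanest to first establish \emph{environment weakening} --- if $\types \Gamma_2 \subtype \Gamma_1$ and $\Gamma_1 \types e : T$ then $\Gamma_2 \types e : T$ --- and then obtain the stated lemma by one trailing use of the (unrestricted) subsumption rule with $\types T_1 \subtype T_2$. Here $\types \Gamma_2 \subtype \Gamma_1$ is the natural, intersection-aware lifting of subtyping to environments: for every $x : S_1 \in \Gamma_1$ there is some $x : S_2 \in \Gamma_2$ with $\types S_2 \subtype S_1$ (so $\Gamma_2$ may carry additional bindings, as one expects with environment-level intersection). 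Before the induction I would record three auxiliary facts: reflexivity of $\subtype$ (from the preorder lemma just above); that $\subtype$ is closed under type substitution, i.e. $\types S_1 \subtype S_2$ implies $\types S_1[\vv{V}/\vv{\alpha}] \subtype S_2[\vv{V}/\vv{\alpha}]$ for all refinement types $\vv{V}$; and monotonicity of environment extension, i.e. $\types \Gamma_2 \subtype \Gamma_1$ and $\types S_2 \subtype S_1$ imply $\types \Gamma_2 \cup \{x:S_2\} \subtype \Gamma_1 \cup \{x:S_1\}$. Only substitution-closure needs genuine argument; I would derive it from the Simulation lemma (Lemma~\ref{lem:simulation}) by checking that, for any substitution $\sigma$ of refinement types, the set $\{(d_1\,\vv{T_1}\sigma,\, d_2\,\vv{T_2}\sigma) \mid \types d_1\,\vv{T_1} \subtype d_2\,\vv{T_2}\}$ satisfies the two conditions of that lemma. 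The standing presupposition that an underlying typing exists is preserved throughout because $\types T \subtype T'$ forces $\under(T) = \under(T')$ via \rlnm{SShape}, and ordinary weakening holds in the underlying Hindley--Milner system.

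For the induction, the environment is consulted non-trivially only in the variable, abstraction and case-analysis rules; the remaining cases are immediate (subsumption follows by re-applying the subsumption rule after the induction hypothesis, constants and constructors ignore the environment, and application applies the induction hypothesis to both premises with the unchanged hypothesis $\types \Gamma_2 \subtype \Gamma_1$). At a variable occurrence --- which, following the stated Hindley--Milner conventions, comes paired with its type instantiation --- the derivation selects some $x : \forall\vv{\alpha}.\ U \in \Gamma_1$ and concludes the type $U[\vv{V}/\vv{\alpha}]$; by the definition of $\types \Gamma_2 \subtype \Gamma_1$ there is $x : \forall\vv{\alpha}.\ U' \in \Gamma_2$ with $\types U' \subtype U$, so $\Gamma_2$ derives $U'[\vv{V}/\vv{\alpha}]$, and substitution-closure together with subsumption lifts this to $U[\vv{V}/\vv{\alpha}]$.

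The abstraction and case rules are the only cases where bindings are added, but they are painless: the key observation is that the \emph{same} binding is re-added on both sides of the ordering, so reflexivity and monotonicity of extension suffice and the contravariance of the arrow never intrudes (it is absorbed entirely by the trailing subsumption step of the full statement). For $\abs{x:U}{e}$, from $\Gamma_1 \cup \{x:T_1'\} \types e : T_2'$ the induction hypothesis gives $\Gamma_2 \cup \{x:T_1'\} \types e : T_2'$, and we reassemble. For $\matchtm{e}{\cdots}$ we keep the scrutinee's datatype $d\,\vv{V}$ fixed when invoking the induction hypothesis, so exhaustiveness with respect to $d$ and the instantiated pattern-variable types are unchanged; the induction hypothesis on each branch (again re-adding the identical pattern bindings) then lets us rebuild the match under $\Gamma_2$. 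Type abstraction, should it occur in the expression judgement, is handled as abstraction using freshness of the bound type variable. The only real obstacle is getting the definition of $\types \Gamma_2 \subtype \Gamma_1$ simultaneously strong enough to push through the variable case and weak enough to be re-established at every recursive call --- in particular tolerating extra bindings in $\Gamma_2$ --- together with the separate coinductive proof that subtyping is closed under substitution; everything else is bookkeeping.
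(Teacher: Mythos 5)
Your proposal is correct and follows essentially the same route as the paper's proof: prove environment weakening at a fixed type by induction on the derivation (handling \rlnm{TVar} via the subtype binding plus substitution-closure of $\subtype$ and \rlnm{TSub}, and re-adding identical bindings under reflexivity in \rlnm{TAbs} and \rlnm{TCase}), then discharge $\types T_1 \subtype T_2$ with one trailing application of \rlnm{TSub}. The only difference is that you make the substitution-closure of subtyping an explicit auxiliary fact where the paper asserts it ``by definition of subtyping''.
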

\begin{proof}
  We prove that, for all $\Gamma'$, $\Gamma' \subtype \Gamma$ implies $\Gamma' \types e : T$, by induction on $\Gamma \types e : T$.  Then, if also $T \subtype T'$, $\Gamma' \types e : T'$ follows by \rlnm{TSub}.
  \begin{description}
    \item[\rlnm{TVar}] Suppose $\vv{T} :: \vv{\underline{T}}$ and $x : \forall \vv{\alpha}.\ U \in \Gamma$ and let $\Gamma'$ be such that $\Gamma' \subtype \Gamma$.  Then $x : \forall \vv{\alpha}.\ V \in \Gamma'$ with $V \subtype U$. By \rlnm{TVar}, $\Gamma' \types x [\vv{\underline{T}}] : V[\vv{T}/\vv{\alpha}]$.  It follows by definition of subtyping that $V[\vv{T}/\vv{\alpha}] \subtype U[\vv{T}/\vv{\alpha}]$ and therefore the desired result follows by \rlnm{TSub}.
    \item[\rlnm{TSub}\rlnm{TCst}\rlnm{TCon}\rlnm{TApp}] In these cases, the conclusion follows from the hypotheses independently of the environment. 
    \item[\rlnm{TAbs}] Suppose $T_1 :: \underline{T_1}$ and $x \notin \dom(\Gamma)$ and then suppose that $\Gamma'$ is such that $\Gamma' \subtype \Gamma$. Then, by definition and reflexivity of subtyping, also $\Gamma' \cup \{x:T_1\} \subtype \Gamma \cup \{x:T_1\}$.  It follows from the induction hypothesis, therefore, that $\Gamma' \cup \{x:T_1\} \types e : T_2$.  We may assume that $x \notin \dom(\Gamma')$ by the variable convention.  Therefore, the result follows by \rlnm{TAbs}.
    \item[\rlnm{TCase}] Suppose $\dom(\Delta(d)) \subseteq \{k_1,\ldots,k_m\}$.  Suppose $\Gamma' \subtype \Gamma$.  It follows immediately from the induction hypothesis that $\Gamma' \types e : d$.  It follows by reflexivity of subtyping that, for each $i$, $\Gamma' \cup \vv{x:\Delta(d)(k_i)} \subtype \Gamma \cup \vv{x:\Delta(d)(k_i)}$.  Hence, the induction hypothesis gives, for each $i$, $\Gamma' \cup \vv{x:\Delta(d)(k_i)} \types e_i : T$.  The result follows immediately by \rlnm{TCase}.
  \end{description}
\end{proof}
\end{toappendix}

The system is conceptually similar to an underlying ML-style system, but note:
\begin{itemize}
  \item Any suitable refinement datatype $d$ can be used in order to type a datatype constructor or the scrutinee of a case statement.
  \item The notion of subtyping from the previous section is incorporated through a subsumption rule (recall that $\types T_1 \subtype T_2$ implies that $T_1$ and $T_2$ have the same shape according to $\under$).
  \item The pattern-matching rule is restricted by a condition requiring that cases are exhaustive.
  \item The branches of the case expression only need to be typed if the branch is reachable, incorporating path-sensitivity.  This relaxation only makes sense for a refinement type system, because reachability is encoded by choosing an appropriate refinement $d$ in the rule \rlnm{TCase}.  From an operational point of view it makes no difference to the set of computations expressible.
  \item Finally, everywhere a particular underlying type is required by the syntax, an arbitrary choice of refinement type of the appropriate shape can be made in its place.
\end{itemize}

As discussed in the introduction, allowing several types for each term ensures they can be used in different contexts.
This approach is more lightweight than an intersection type system, and arguably easier for programmers to  reason about if types are to be considered as certificates.
When it comes to algorithmic inference, however, the non-deterministic aspect would be problematic.
Instead, in Section \ref{sec:inference}, we rely on \emph{refinement polymorphism} to summarise \emph{every} typing of a variable in some environment compactly by a single constrained type scheme.
The polymorphism of this kind is no different from that of the Hindley-Milner system, which could equally be viewed as an infinite intersection type system, or indeed allowing several typings of the same variable in an environment.
Likewise, it is simpler to define polymorphic constructors and datatypes, than to consider each instantiation separately.

\begin{figure*}
\[
  \begin{array}{c}

  \prftree[r,l]{\small$\left|\begin{array}{l}
    \under(\vv{T}) = \vv{\underline{T}}\\[1mm]
    x : \forall \vv{\alpha}.\ T \in \Gamma
  \end{array}\right.$}
  {\rlnm{TVar}}
  {
    \Gamma \types x\ \vv{\underline{T}} : T[\vv{S/\alpha}]
  }
    \qquad\quad
  \prftree[l]{\rlnm{TCst}}{\Gamma \types c : \mathbb{C}(c)}
    \\[8mm]
  \prftree[r,l]
    {
      \small$\left|
        \begin{array}{l}
    \under(\vv{T}) = \vv{\underline{T}}\\[1mm]
          k \in \dom\,(\canon{\Delta}(d)) \\[1mm]
          \canon{\Delta}(d)(k) = \forall \vv{\alpha}.\ T
        \end{array}
        \right.$
    }
    {\rlnm{TCon}}
    {\Gamma \types k\ \vv{\underline{T}} : T[\vv{S/\alpha}]}
    \qquad\quad
    \prftree[l,r]{$\left|\begin{array}{l}\types T_1 \subtype T_2\end{array}\right.$}{\rlnm{TSub}}{\Gamma \types e : T_1}{\Gamma \types e : T_2}
    \\[8mm]
    \prftree[r,l]{\small$\left|\begin{array}{l}\under(T_1) = \underline{T_1}\\[1mm]x \notin \dom\,\Gamma\end{array}\right.$}{\rlnm{TAbs}}{\Gamma \cup \{x:T_1\} \types e : T_2}{\Gamma \types \abs{x\!:\!\underline{T_1}}{e} : T_1 \to T_2}
    \qquad\quad
    \prftree[l]{\rlnm{TApp}}{\Gamma \types e_1 : T_1 \to T_2}{\Gamma \types e_2 : T_1}{\Gamma \types e_1\ e_2 : T_2}
    \\[8mm]
    \prftree[r,l]{\small$\left|\begin{array}{l}
      \dom(\canon{\Delta}(d)) = \{k_1,\ldots,k_m\}
    \end{array}\right.$}
    {\rlnm{TCase}}
    {
      \prfassumption{\Gamma \types e : d\,\vv{T}}
    }
    {
      \prfassumption{(\forall i \leq m)\: \Gamma \cup \{ \vv{ x_i : \canon{\Delta}(d)(k_i)[\vv{T/\alpha}] } \}\ \types e_i : T}
    }
    {
      \Gamma \types \matchtm{e}{\mathord{\mid}_{i=1}^m\,k_i\ \vv{x_i} \mapsto e_i} : T
    }\\[4mm]
  \end{array}
\]
\caption{Type assignment for expressions.}\label{fig:type-system}
\end{figure*}

\vspace{3cm}
\begin{example}
  Recall the refinements of Example~\ref{ex:lam-refinements} and consider the function $\mathsf{cloSub}$, with underlying type $\mathsf{List}\ (\mathsf{String} \times \mathsf{Lam}) \to \mathsf{Lam} \to \mathsf{Lam}$, whose purpose is to close an applicative term by substituting closed terms everywhere.
  \begin{lstlisting} 
    cloSub m t = 
      case t of
        FVr s   -> lkup m s
        Cst c   -> Cst c
        App u v -> App (cloSub m u) (cloSub m v)
  \end{lstlisting}
  To keep the example simple, we assume that the lookup function $\mathsf{lkup}$ has the following type: $\forall \alpha.\ \mathsf{List}\ (\mathsf{String} \times \alpha) \to \mathsf{String} \to \alpha$ in the environment. 
  Consequently, in our official syntax, e.g. the $\mathsf{FVr}$ case really contains an explicit type application: $\mathsf{lkup}\ \mathsf{Lam}\ m\ s$.
  Then the function $\mathsf{cloSub}$ can be assigned the refinement type\footnote{Here, \textsf{List} and $\times$ can be understood as the trivial refinements of their namesakes, i.e. with all constructors available.}: $\mathsf{List}\ (\mathsf{String} \times \mathsf{ATm}_0) \to \mathsf{ATm} \to \mathsf{ATm}_0$.
  Thus expressing the fact that the application of a \emph{closing} substitution to a arbitrary \emph{applicative} term yields a \emph{closed applicative} term.
  
  This is possible due to a combination of the features of the system.
  First, observe that it is possible, in the abstraction rule, to assume that the bound variable $m$ of underlying type $\mathsf{List}\ (\mathsf{String} \times \mathsf{Lam})$ has type $\mathsf{List}\ (\mathsf{String} \times \mathsf{ATm}_0)$ in \rlnm{TAbs} since it can easily be seen that the former is a refinement of the latter.
  Then it follows that $\mathsf{lkup}\ \mathsf{Lam}\ m\ i$ can be assigned the type $\mathsf{Atm}_0$ by choosing $\mathsf{Atm}_0$ for $T$ in \rlnm{TVar}.
  Second, under the assumption that the bound variable $t$ has refinement type $\mathsf{Atm}$, it follows from \rlnm{TCase} that the variable $c$ that is bound by the case $\mathsf{Cst}\ c$ can be assigned the type $\mathsf{Arith}$.
  Note that the rule \rlnm{TCase} is applicable only because we have chosen the refinement $\mathsf{ATm}$ of $\mathsf{Lam}$ which guarantees that the input will not contain any abstractions.
  Then, in the body of the case, we can choose instead the more specific typing $\mathsf{Cst} : \mathsf{Arith} \to \mathsf{ATm}_0$.
  Similarly, $u$ and $v$ are assigned the type $\mathsf{ATm}$ so that the subexpressions $\mathsf{cloSub}\ m\ u$ and $\mathsf{cloSub}\ m\ v$ in the body of the final case can be assigned the type $\mathsf{ATm}_0$.
  Then the type of the body as a whole, and therefore the entire case analysis, is also $\mathsf{Atm}_0$.
\end{example}

The central problem is typability, for closed expressions: given an underlying datatype environment $\ul{\Delta}$ and a closed module $m$ which is typed in $\ul{\Delta}$, does there exist a refinement type assignment to the functions of $m$?
Typically $m$ will contain library functions whose source is not available to the system, but for which an underlying type is known.
To incorporate such functions we interpret an underlying type environment $\ul{\Gamma}$ as containing trivial refinement types for each such function, i.e. each $d$ occurring in such a type $\Gamma$ denotes the refinement of $d$ that makes available all constructors.

\begin{definition}[Typability]\label{def:typability}
  A triple $\ul{\Delta}$, $\Gamma$ and $m$ constitutes a positive instance of the \emph{refinement typability problem} just if there is a refinement type environment $\Gamma'$ such that $\Gamma \types m : \Gamma'$.  In such a case, we say that $\ul{\Delta};\,\Gamma \types m$ is \emph{refinement typable}.
\end{definition}

\noindent
The rest of the paper concerns the algorithmic solution of the typability problem.



\section{Constructor Set Constraints}\label{sec:constraints}

We assume a countable set of \emph{refinement variables}, ranged over by $X$, $Y$, $Z$ and so on.
The purpose of a refinement variable $X$ is to represent a function in $\Pi d \in \ul{D}.\ \mathcal{P}(\dom\,\ul{\Delta}(d))$.  As described in Section~\ref{sec:refinement}, such functions are in 1-1 correspondence with refinements of $\ul{\Delta}$.
We will abuse notation and write $X$ for both uses (thus the following rather strange-looking equation $X(d) = \dom(X(d))$ holds by interpreting each of the two occurrences of $X$ according to its context.)
\begin{definition}[Constraints]
  A \emph{constructor set expression}, typically $S$, is either a finite set of constructors $\{k_1,\ldots,k_m\}$ or a pair $X(\ul{d})$ consisting of a refinement variable $X$ and an underlying datatype $\ul{d}$.  The underlying type of the constructor set expression is (partially) defined as follows:
  \[
    \under(X(\ul{d})) = \ul{d} \qquad\qquad
    \under(\{k_1,\ldots,k_m\}) = \ul{d} \quad\text{if $\forall i \in [1..m].\ k_i \in \dom\,\ul{\Delta}(\ul{d})$}
  \]
  We consider only those constructor set expressions for which the underlying type is defined.
  We write $\frv(S)$ for the set of refinement variables occurring anywhere in $S$ (which will either be empty or a singleton).

  An \emph{inclusion constraint} is an ordered pair of constructor set expressions, written (suggestively) as $S_1 \subseteq S_2$.  When $S_1$ is a singleton $\{k\}$, we will rather write the pair as $k \in S_2$.  We shall only consider inclusion constraints in which both set expressions have the same underlying type.  The  refinement variables of an inclusion constraint $\frv(S_1 \subseteq S_2)$ are defined by extension from $\frv(S_1)$ and $\frv(S_2)$ in the obvious way.

  A \emph{conditional constraint}, hereafter just \emph{constraint}, is a pair $\phi\ ?\ S_1 \subseteq S_2$ consisting of a set of inclusion constraints $\phi$ and an inclusion constraint $S_1 \subseteq S_2$.  The set $\phi$ is called the \emph{guard} and the inclusion $S_1 \subseteq S_2$ the \emph{body}.  We will only consider conditional constraints in which each element of the guard has shape $k \in X(\ul{d})$.  When the guard of a constraint $\emptyset\ ?\ S_1 \subseteq S_2$ is trivial, we shall usually omit it and write only the body $S_1 \subseteq S_2$.  The set of refinement variables $\frv(\phi\ ?\ \frv(S_1 \subseteq S_2))$ of a constraint  is defined as usual.

  Sometimes we shall guard a constraint set $C$, and write $\phi\ ?\ C$ for the set $\{\psi \cup \phi\ ?\ S_1 \subseteq S_2 \mid \psi\ ?\ S_1 \subseteq S_2\ \text{an element of}\ C\}$.  We write $\frv(C)$ for the set of refinement variables occurring in $C$.

\end{definition}

  Intuitively, an inclusion $S_1 \subseteq S_2$ is satisfied by any assignment to the refinement variables that makes $S_1$ included in $S_2$.  A constraint $\phi\ ?\ S_1 \subseteq S_2$ is satisfied if either some inclusion in the guard is not satisfied or the body is satisfied.

\begin{definition}[Satisfaction]
  A \emph{constructor set assignment}, hereafter just \emph{assignment}, is a total map $\theta$ taking each refinement variable $X$ to a constructor choice function from $\Pi d \in \ul{D}.\ \mathcal{P}(\dom\,\ul{\Delta}(d))$.
  The meaning of a constructor set expression $S$ under an assignment $\theta$ is a set of constructors $\theta\mng{S}$ defined as follows:
  \[
    \theta\mng{X(\ul{d})} = \theta(X)(\ul{d}) \qquad\qquad
    \theta\mng{\{k_1,\ldots,k_m\}} = \{k_1,\ldots,k_m\} 
  \]
  An inclusion constraint $S_1 \subseteq S_2$ is \emph{satisfied} by an assignment $\theta$, written $\theta \models S_1 \subseteq S_2$ just if $\theta\mng{S_1}$ is included in $\theta\mng{S_2}$.
  A constraint $\phi\ ?\ S_1 \subseteq S_2$ is \emph{satisfied} by an assignment $\theta$, written $\theta \models \phi\ ?\ S_1 \subseteq S_2$ just if, whenever $\theta \models k \in X(d)$ for every inclusion constraint $k \in X(d)$ in $\phi$, then $\theta \models S_1 \subseteq S_2$.
\end{definition}

\begin{definition}[Solutions]
  A \emph{solution} to a constraint set $C$ is an assignment $\theta$ satisfying every constraint in $C$, we write $\theta \models C$. We say that $C$ is \emph{solvable}, or \emph{satisfiable}, just if it has a solution.
\end{definition}

\begin{remark}\label{rmk:Horn}
  The full set constraint language is exactly the monadic class of first-order propositions~\cite{bachmair1993set}.
By applying the translation of that paper, it can be shown that guarded constraints of the form laid out above are (monadic) Horn clauses with constructors simply interpreted as constants.
\end{remark}



\section{Type Inference}\label{sec:inference}

\begin{figure}
    \[
      \begin{array}{c}
        \prftree[l]{\rlnm{ISBase}}
        {}
        {
          \types b \subtype b \infers \emptyset
        }
        \qquad\qquad
        \prftree[l]{\rlnm{ISTyVar}}
        {}
        {
          \types \alpha \subtype \alpha \infers \emptyset
        }
        \\[8mm]
        \prftree[l]{\rlnm{ISArr}}
        {
          \types T_{21} \subtype T_{11} \infers C_1
        }
        {
          \types T_{12} \subtype T_{22} \infers C_2
        }
        {
          \types T_{11} \to T_{12} \subtype T_{21} \to T_{22} \infers C_1 \cup C_2
        }
        \\[8mm]
        \prftree[l,r]{\small$\left|
      \begin{array}{l}
        \ul{\Delta}(\ul{d})(k) =\\
        \quad \forall \vv{\alpha}.\ U_1 \to \cdots U_n \to d\ \vv{\alpha}\\[1mm]
        C = \{ X(\ul{d}) \subseteq Y(\ul{d})\}\ \cup\\[1mm]
        \qquad \bigcup_{k} \bigcup_{i=1}^{n} (k \in X(\ul{d}))\ ?\ C_{k_i}
          \\[1mm]
      \end{array}\right.$}
        {\rlnm{ISData}}
          {(\forall k i.)\ \types \inj_X(U_i)[\vv{T_X/\alpha}] \subtype \inj_Y(U_i)[\vv{T_Y/\alpha}] \infers C_{k_i}}
        {
          \types \inj_X\,d\ \vv{T_X} \subtype \inj_Y\,d\ \vv{T_Y} \infers C}
      \end{array}
    \]
  \caption{Inference for subtype inequalities.}\label{fig:inf-sub}
\end{figure}

Since our system is effectively syntax directed (the subsumption rule can be factored into the other syntax-directed rules), type inference follows a standard pattern of constraint generation and satisfiability checking (see e.g. \cite{odersky-sulzmann-wehr-TSPOS1999}).
The constraints are subtype inequalities over refinement variables, but it is easily seen that, in our restricted setting, such inequalities are equivalent to conditional inclusion constraints between refinement variables and sets of datatype constructors.
To enable this approach, we extend the language of types so to allow datatypes parametrised by refinement variables.
\begin{definition}[Extended Types]
  The \emph{extended types} are monotypes extended with datatypes built over refinement variables:
\[
  T,\,U,\,V \Coloneqq \cdots \mid \inj_X\,d\ \vv{T}
\]

Note that the type arguments to an injected datatype identifier are also extended.
Expressions of the form $(\inj_X\ \mathsf{List}) (\inj_Z\ \mathsf{Int})$ are, therefore, well-formed.  Recall from Section~\ref{sec:refinement} that refinement datatype identifiers are of the form $\inj_\Delta\,d$, with $d$ an underlying datatype identifier, and should be thought of as specifying the refinement of $d$ whose datatype definition is given by $\Delta$.
The task of inference is to determine constraints on these $\Delta$ that enable a typing to be assigned and check that the constraints have a solution.

For convenience, we shall implicitly lift injections to any type, or sequence of types, written $\inj_X\,T$, so that the injection is distributed over datatypes in $T$.
In the context of extended types, we will associate a substitution action $\theta T$ with each constructor set assignment $\theta$ by lifting the definition $\theta(\inj_X\,d) \coloneqq \inj_{\theta(X)}\,d$ homomorphically over all extended types.  Finally, we write $\frv(T)$ for the set of refinement variables occurring in injections in $T$.
\end{definition}

We also adopt an extension of type schemes that are constrained:
\begin{definition}[Constrained Type Scheme]
  We subsume the type scheme $S$ by the \emph{constrained type scheme}, which has shape: $\forall \vv{\alpha}.\ \forall \vv{X}.\ C \implies T$, where $C$ is a constraint set and $T$ is an \emph{extended type}.  We define $\frv(\forall\vv{\alpha}.\ \forall \vv{X}.\ C \implies T) = (\frv(C) \cup \frv(T)) \setminus \vv{X}$.  A \emph{constrained type environment} is a finite mapping from program variables to constrained type schemes, whose elements are written $x:S$.  We define $\frv(\Gamma)$ in the obvious way.
\end{definition}
As is typical, there is generally no ``best'' monotype solution to a set of inclusion constraints, so constrained type schemes give us an internal representation for the set of all types assignable to a module-level function.  
For example, assuming constant combinator $\mathbf{K}$ defined as usual, it can be seen that the module-level recursive function $f = \abs{x\!:\!\mathsf{Lam}}{\mathbf{K}\ [\mathsf{Lam},\,\mathsf{Lam}]\ x\ (f\ (f\ x))}$ can be assigned the constrained type scheme : $\forall X Y.\ C \implies \inj_X\,\mathsf{Lam} \to \inj_Y\,\mathsf{Lam}$, with $C$:
\[
  \begin{array}{lcl}
  X(\mathsf{Lam}) \subseteq Y(\mathsf{Lam})&\phantom{woo}&
  Y(\mathsf{Lam}) \subseteq X(\mathsf{Lam})\\ 
  \mathsf{Cst} \in X(\mathsf{Lam})\ ?\ X(\mathsf{Arith}) \subseteq Y(\mathsf{Arith})&\phantom{woo}&
  \mathsf{Cst} \in Y(\mathsf{Lam})\ ?\ Y(\mathsf{Arith}) \subseteq X(\mathsf{Arith})
  \end{array}
\] 
Intuitively, its input flows to its output and conversely, so we require $\inj_X\,\mathsf{Lam} \subtype \inj_Y\,\mathsf{Lam}$ and $\inj_Y\,\mathsf{Lam} \subtype \inj_X\,\mathsf{Lam}$ (which is encoded by the above set constraints when we view the refinements $X$ and $Y$ as functions specifying the choice of constructors).  However, there is obviously no ``best'' instantiation of refinement variables $X$ and $Y$.

Constrained type environments can be understood as compact descriptions of ``ordinary'' type environments (in the sense of Definition \ref{def:typing}), which is made precise as follows.

\begin{definition}
  Define $\bananas{\Gamma}$ for the type environment that can be obtained from the \emph{closed} constrained type schemes in $\Gamma$, by instantiation of refinement quantifiers with every possible solution, that is, supposing $\Gamma$ is closed: $
    \bananas{\Gamma} \coloneqq \{\ x:\forall \vv{\alpha}.\ \theta T \mid  \theta \models C \wedge (x:\forall\vv{\alpha}.\ \forall \vv{X}.\ C \implies T) \in \Gamma\ \}
  $.
\end{definition}

Typical presentations of type inference by constraint generation involve choosing fresh type variables, which are then constrained.
Since we work with refinement types, it is more convenient to choose fresh refinement type templates, which are just refinement types that are everywhere parametrised by fresh refinement variables --- in the setting of refinement types, at the point at which inference would choose a fresh type, the underlying shape of the type is already known.
We write $\fresh(X)$ to assert that $X$ must be a fresh refinement variable (i.e. not already used in the current scope).
We extend the notion to fresh types $T$ of underlying shape $\ul{T}$.
\begin{definition}[Fresh Types]
  We write $\fresh_{\ul{T}}(T)$ for the following inductive predicate.
  \begin{itemize}
    \item For all $\alpha \in \mathbb{A}$, $\fresh_\alpha(\alpha)$.
    \item For all $b \in \mathbb{B}$, $\fresh_b(b)$.
    \item For $\ul{d} \in \ul{D}$, if $\fresh_{\ul{T}}(T)$ for every $T$ in $\vv{T}$, and $\fresh(X)$ then $\fresh_{\ul{d}}(\inj_X(\ul{d})\ \vv{T})$
    \item For all $T_1,T_2 \in \Ty\ \canon{D}$, $\ul{T_1},\ul{T_2} \in \Ty\ \ul{D}$, if $\fresh_{\ul{T_1}}(T_1)$ and $\fresh_{\ul{T_2}}(T_2)$ then $\fresh_{\tiny{\ul{T_1} \to \ul{T_2}}}(T_1 \to T_2)$.
  \end{itemize}
  The definition guarantees that $\under(T) = \ul{T}$.
  We extend the notion to sequences of types, writing $\fresh_{\vv{\ul{T}}}(\vv{T})$ to denote that the two sequences $\vv{T}$ and $\vv{\ul{T}}$ have the same length and are related pointwise by freshness.
\end{definition}

\begin{figure}
  \[
    \begin{array}{c}
    \prftree[l]{\rlnm{ICst}}
      {}
      {
        \Gamma \types c : \ul{T} \infers \ul{T},\,\emptyset
      }
      \\[8mm]
      \prftree[l,r]{\small$\left|\begin{array}{l}k \in \dom\,(\ul{\Delta}(\ul{d}))\\[1mm]\fresh(X)\ \textrm{and}\ \fresh_{\vv{\ul{T}}}(\vv{T})\end{array}\right.$}
      {\rlnm{ICon}}
      {
        \Gamma \types k\ \vv{\ul{T}} : \ul{V} \infers \inj_X\,\ul{V}\ \vv{T},\,\{ k \in X(d)\}
      }
      \\[8mm]
      \prftree[r,l]{
        \small$\left|\begin{array}{l}
          x:\forall \vv{\alpha}.\ \forall \vv{X}.\ C \implies U \in \Gamma \\[1mm]
          \fresh(\vv{Y})\ \textrm{and}\ \fresh_{\vv{\ul{T}}}(\vv{T})
        \end{array}\right.$}
      {\rlnm{IVar}}
      {}
      {
        \Gamma \types x\ \vv{\underline{T}} : \underline{V} \infers U[\vv{Y/X}][\vv{T/\alpha}],\,C[\vv{Y/X}]
      }
      \\[8mm]
      \prftree[r,l]{\small$\left|\begin{array}{l}\fresh_{\ul{T_1}}(T_1)\end{array}\right.$}{\rlnm{IAbs}}
      {
        \Gamma \cup \{x:T_1\} \types e : \ul{T_2} \infers T_2,\,C
      }
      {
        \Gamma \types \abs{x\!:\!\underline{T_1}}{e} : \ul{T_1} \to \ul{T_2} \infers T_1 \to T_2,\,C
      }
      \\[8mm]
      \prftree[l]{\rlnm{IApp}}
      {
        \Gamma \types e_1 : \ul{T_1} \to \ul{T_2} \infers T_1 \to T_2,\,C_1
      }
      {
        \Gamma \types e_2 : \ul{T_1} \infers T_3,\,C_2
      }
      {
        \types T_3 \subtype T_1 \infers C_3
      }
      {
        \Gamma \types e_1\ e_2 : \ul{T_2} \infers T_2,\,C_1 \cup C_2 \cup C_3
      }
      \\[8mm]
      \prflineextra=0em
      \prftree[l,r]{\!\small\begin{tabular}{c}$\left|\!
        \begin{array}{l}
          \fresh_{\ul{T}\cdots\ul{T}}(T\cdot\vv{T_i})\\[1mm]
          C =C_0 \cup\ \{X(\ul{d}) \subseteq \{k_1,\ldots,k_m\}\}\\[.5mm]\hspace{6mm}\cup \bigcup_{i=1}^m (k_i \in X(\ul{d})\ ?\ (C_i \cup C_i'))
            \\[1mm]
            \ul{\Delta}(\ul{d})(k_i) = \\[.5mm]
            \quad\forall \vv{\alpha}.\ A_1 \to \cdots A_n \to d\ \vv{\alpha}
        \end{array}\right.$\\[2mm]\phantom{hello}\\[2mm]\hphantom{hi}\end{tabular}}
          {\rlnm{ICase}}
      {
        \prfassumption{\begin{array}{c}
          (\forall i \leq m)\ \types T_i \subtype T \infers C_i'\\[1mm]
          \Gamma \types e : \ul{d}\ \vv{\ul{T}} \infers \inj_X\,\ul{d}\ \vv{T},\ C_0 \\[1mm]
          (\forall i \leq m)\ \Gamma \cup \vv{x_{i}}:(\inj_X \vv{A})[\vv{T/\alpha}] \types e_i \infers T_i,\,C_i
        \end{array}}
      }
      {
        \begin{array}{l}
        \Gamma \types \matchtm{e}{\mid_{i=1}^m k_i\ \vv{x_i} \mapsto e_i} : \ul{T} \infers T,\,C
        \end{array}
      }\\[4mm]
    \end{array}
  \]
  \caption{Inference for expressions.}\label{fig:inf-exp}
  \end{figure}

\begin{definition}[Inference]
Inference is split into three parts: for subtyping (Figure \ref{fig:inf-sub}), for expressions (Figure \ref{fig:inf-exp}) and for modules (Figure \ref{fig:inf-mod}) using three judgement forms, respectively:
\[
    \types T_1 \subtype T_2 \infers C \qquad
    \Gamma \types e : \ul{T} \Longrightarrow T,\,C \qquad
    \Gamma \types m \Longrightarrow \Gamma',\,C
\]
Given two (extended) types $T_1$ and $T_2$ we infer a set of constraints $C$ under which the former will be a subtype of the latter using the system of judgements $\types T_1 \subtype T_2 \infers C$.
For expressions in context $\Gamma \types e : \ul{T}$, we infer (extended) monotypes $T$ and the constraints $C$ under which they are permissible using a system of judgements of the form $\Gamma \types e : \ul{T} \Longrightarrow T,\,C$.
In such judgements, $\Gamma$ a constrained type environment, i.e. a finite map from term variables to constrained type schemes.  We will omit the underlying type when not important.
The rules are given in Figure~\ref{fig:inf-exp}.
Constrained refinement type schemes are inferred for module-level definitions using a system of judgements of shape $\Gamma \types m \Longrightarrow \Gamma',\,C$.
The definitions are given in Figure~\ref{fig:inf-mod}.
The systems can be read algorithmically by regarding the quantities before the $\Longrightarrow$ as inputs the quantities afterwards as outputs (however, it should be noted that, assuming regular datatypes only, the subtyping relation must be computed by achieving a fixed point explicitly).
\end{definition}

Constrained type generation via these systems of rules follows a well established pattern for expressions and modules (see e.g. \cite{odersky-sulzmann-wehr-TSPOS1999} for a general treatment of the non-refinement case), so we concentrate on the inference rules for subtyping.
Like the more standard inference rules for expressions and modules, the inference rules for subtyping generate a derivation tree and a system of constraints whose solution guarantees the correctness of the corresponding instance of the derivation tree.
However, in the case of subtyping, the derivation tree is not a proof in the system of Figure \ref{fig:subtype-simple}, which is for the complement of the subtyping relation, but rather a proof that the solution constitutes a simulation in the sense of Lemma \ref{lem:simulation}.
For example, the conclusion of \rlnm{ISData} yields the constraints $\{X(d) \subseteq Y(d)\} \cup \bigcup_{k\in\mathbb{K}} \bigcup_{i=1}^{n} (k \in X(\ul{d}))\ ?\ C_{k_i}$.
The first part of this constraint encodes the first bullet of Lemma \ref{lem:simulation}: the environment $\canon{\Delta}$ at $\inj_Y\,d$ must include all constructors included by the same environment at $\inj_X\,d$.
Since, for any refinement $X$, $\dom(\canon{\Delta}(\inj_X\,d)) = X(d)$ (recall the notational abuse adopted at the start of Section \ref{sec:constraints}), we arrive at $X(d) \subseteq Y(d)$.
The second part of this constraint encodes the second bullet of the lemma: \emph{if} $k \in \dom(\canon{\Delta}(\inj_X\,d))$ (and, therefore, $k \in \dom(\canon{\Delta}(\inj_Y\,d))$), then the corresponding argument types are again related --- in inference we recursively infer constraints on the relationship between the types and guard the constraints by $k \in X(d)$.

\begin{theoremrep}[Soundness and completeness of $\subtype$-inference]\label{thm:correctness-subtype-inf}
  Let $T_1$ and $T_2$ be extended types and 
  suppose $\types T_1 \subtype T_2 \infers C$.  Then, for all assignments $\theta$: $\types \theta T_1 \subtype \theta T_2 \ \textit{iff}\  \theta \models C$.
\end{theoremrep}
\begin{proof}
  The proof is by induction on the inference judgement.
  \begin{description}
    \item[\rlnm{ISBase} \rlnm{ISTyVar}] Obvious.
    \item[\rlnm{ISArr}] 
      In the forward direction, suppose $\types \theta T_{11} \to \theta T_{12} \subtype \theta T_{21} \to \theta T_{22}$.  By inversion, necessarily $\types \theta T_{21} \subtype T_{11}$ and $\theta T_{12} \subtype T_{22}$.  In this case, $C = C_1 \cup C_2$.  By the induction hypothesis, $\theta \models C_1$ and $\theta \models C_2$, so $\theta \models C$ as required.\\[1mm]
      In the backward direction, suppose $\theta \models C_1 \cup C_2$.  Then $\theta \models C_1$ and $\theta \models C_2$ and it follows from the induction hypothesis that $\types \theta T_{21} \subtype T_{11}$ and $\theta T_{12} \subtype T_{22}$.  Hence, \rlnm{SArr} is satisfied and so $\types \theta T_{11} \to \theta T_{12} \subtype \theta T_{21} \to \theta T_{22}$.
    \item[\rlnm{ISData}]
      In the forward direction, suppose $\types \inj_{\theta(X)}\,d\ \vv{T_1}  \subtype \inj_{\theta(Y)}\,d\ \vv{T_2}$ and suppose $\ul{\Delta}(d)(k)$ is defined.
      Then it follows from \rlnm{SMis} that $\dom(\canon{\Delta}(\inj_{\theta(X)}\,d)) \not\subseteq \dom(\canon{\Delta}(\inj_{\theta(Y)}\,d))$.
      Note that, by definition, $\dom(\canon{\Delta}(\inj_{\theta(Z)}\,d)) = \theta(Z)(d)$, for any refinement variable $Z$.  Hence, $\theta \models X(d) \subseteq Y(d)$.
      To see that, also, $\theta \models k \in X(d)\ ?\ C$, assume $\theta \models k \in X(d)$.
      Therefore, also $\canon{\Delta}(\inj_{\theta(X)}\,d))(k)$ and $\canon{\Delta}(\inj_{\theta(Y)}\,d))(k)$ are defined.
      Let us fix $\canon{\Delta}(\inj_{\theta(X)}\,d))(k) = \forall \vv{\alpha}.\ A_{X_1} \to \cdots A_{X_m} \to d\ \vv{\alpha}$, and $\canon{\Delta}(\inj_{\theta(Y)}\,d))(k) = \forall \vv{\alpha}.\ A_{Y_1} \to \cdots A_{Y_m} \to d\ \vv{\alpha}$.
      Hence, by \rlnm{SSim}, $\types A_{X_i}[\vv{T_1/\alpha}] \subtype A_{Y_i}[\vv{T_2/\alpha}]$ for each $i \in [1 .. m]$.
      Note that, by definition, $\canon{\Delta}(\inj_{Z}\,d))(k) = \inj_{Z}(\ul{\Delta}(k))$ for any refinement variable $Z$.
      It follows from the induction hypothesis, therefore, that $\theta \models C$, as required.\\[2mm]
      In the backward direction, suppose (i) $\theta \models (k \in X(d)\ ?\ C)$ and (ii) $\theta \models X(d) \subseteq Y(d)$.
      Then, by (ii), \rlnm{SMis} is satisfied.
      To see that \rlnm{SSim} is also satisfied, suppose $\canon{\Delta}(\theta(X))(k)$ and $\canon{\Delta}(\theta(Y))(k)$ are defined.
      Then, by (i), $\theta \models C$ and it follows from the induction hypothesis that $\types A_{X_i}[\vv{T_1/\alpha}] \subtype A_{Y_i}[\vv{T_2/\alpha}]$.
      Since, by definition, $\canon{\Delta}(\inj_{Z}\,d))(k) = \inj_{Z}(\ul{\Delta}(k))$ for any refinement variable $Z$, \rlnm{SSim} is satisfied and so it must be that $\types \inj_{\theta(X)}\, d \subtype \inj_{\theta(Y)}\, d$ as required.
  \end{description}
\end{proof}

\begin{figure}
  \[
    \begin{array}{c}
    \prftree[l]{\rlnm{IModE}}{\Gamma \types \epsilon \Longrightarrow \Gamma}
    \\[6mm]
    \prftree[l,r]{\small$\left|\begin{array}{l}\fresh_{\ul{T}}(T)\\[1mm]\vv{X} = \frv(T)\end{array}\right.$}{\rlnm{IModD}}{
      \Gamma \types m \Longrightarrow \Gamma'
    }
    {
      \Gamma' \cup \{x : T\} \types e \Longrightarrow T',\,C_1
    }
    {
      \types T' \subtype T \infers C_2
    }
    {
      \Gamma \types m \cdot \langle x: \forall \vv{\alpha}.\ \ul{T} = \Abs{\vv{\alpha}}{e} \rangle \Longrightarrow \Gamma' \cup \{x : \forall \vv{\alpha}.\ \forall \vv{X}.\ C_1 \cup C_2 \implies T \}
    }
    \end{array}
  \]
\caption{Inference for modules.}\label{fig:inf-mod}
\end{figure}

\begin{toappendix}

  Additionally, we define the equivalence of two refinement variable assignments $\sigma$ and $\theta$ modulo a set of refinement variables $S$, written $\sigma \equiv_S \theta$, by requiring that they are identical on the variables in $S$, i.e. $\forall X \in S.\ \sigma(X) = \theta(X)$.

\begin{lemma}\label{lem:term-sound}
  Let $\Gamma$ be a closed constrained type environment, $\Gamma'$ a type environment, $e$ be an expression, $V \in \ETy$, and $C$ a constraint set.
  Suppose $\Gamma \types e \infers V,\,C$, $\Gamma' \subtype \bananas{\Gamma}$, and there exists a refinement variable substitution $\theta$ such that $\theta \models C$, then there exists a type $T \in \Ty\ \canon{D}$ such that:
  \begin{enumerate}[(i)]
    \item $\Gamma' \types e : T$
    \item $\types T \subtype \theta V$
  \end{enumerate}
\end{lemma}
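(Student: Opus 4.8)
The plan is to prove the statement by structural induction on the derivation of $\Gamma \types e \infers V,\,C$, with one case for each rule of Figure~\ref{fig:inf-exp}. In every case the witness $T$ will in fact be produced as a subtype of $\theta V$, so clause~(ii) closes either by reflexivity of $\subtype$ or by a short chain of transitivities, and the real work is building the type-assignment derivation for clause~(i). The external ingredients are: Theorem~\ref{thm:correctness-subtype-inf}, which turns every subtyping side-condition $\types T_1 \subtype T_2 \infers C'$ that occurs inside an inference rule into a genuine fact $\types \theta T_1 \subtype \theta T_2$ (since $\theta$ solves a constraint set containing $C'$), ready for use with \rlnm{TSub}; the fact that $\subtype$ is a preorder, so reflexivity and transitivity are available; a routine closure property of $\subtype$ under substitution of refinement-free types for type variables; and the definition of $\bananas{\cdot}$, which lets us read off from $\Gamma' \subtype \bananas{\Gamma}$, for each refinement-closed scheme of $\Gamma$, a binding in $\Gamma'$ that subsumes an arbitrary ground instance of that scheme.

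The base cases are direct. For \rlnm{ICst} take $T = V$ and apply \rlnm{TCst}; since $V$ carries no refinement variables, $\theta V = V$ and (ii) is reflexivity. For \rlnm{ICon}, the single constraint $\{k \in X(\ul d)\}$ gives $k \in \theta(X)(\ul d)$, hence $k$ is available in the refinement $\inj_{\theta(X)}\,\ul d$; applying \rlnm{TCon} with that refinement and instantiating the type parameters by $\theta\vv{T}$ yields precisely $\theta V$. For \rlnm{IVar}, where $x : \forall\vv{\alpha}.\forall\vv{X}.\ C \implies U \in \Gamma$ and $[\vv{Y}/\vv{X}]$ is the fresh renaming, instantiate the refinement quantifiers of this scheme by the assignment $\rho$ with $\rho(X_i) = \theta(Y_i)$ (and $\rho = \theta$ elsewhere); since $\theta \models C[\vv{Y}/\vv{X}]$ we get $\rho \models C$, so $x : \forall\vv{\alpha}.\ \rho U \in \bananas{\Gamma}$ and hence $x : \forall\vv{\alpha}.\ U' \in \Gamma'$ with $U' \subtype \rho U$; \rlnm{TVar} (instantiating $\vv{\alpha}$ by $\theta\vv{T}$) followed by \rlnm{TSub} and the substitution property gives $\Gamma' \types x\,\vv{\ul T} : U'[\theta\vv{T}/\vv{\alpha}]$ with $U'[\theta\vv{T}/\vv{\alpha}] \subtype \theta V$, so take $T = U'[\theta\vv{T}/\vv{\alpha}]$.

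The cases \rlnm{IApp} and \rlnm{IAbs} are routine uses of the induction hypothesis. In \rlnm{IApp}, the hypothesis on $e_1$ yields a type that, being a subtype of $\theta T_1 \to \theta T_2$, must be an arrow $T_a \to T_b$ with $\theta T_1 \subtype T_a$ and $T_b \subtype \theta T_2$; the hypothesis on $e_2$ yields $T_c \subtype \theta T_3$; Theorem~\ref{thm:correctness-subtype-inf} applied to $\types T_3 \subtype T_1 \infers C_3$ gives $\theta T_3 \subtype \theta T_1$; chaining, $T_c \subtype T_a$, so \rlnm{TSub} then \rlnm{TApp} give $\Gamma' \types e_1\,e_2 : T_b$. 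A technical wrinkle already visible here, and sharper in \rlnm{ICase}, is that the premises of these rules are derived under environments extended with fresh \emph{monotype} bindings, which carry free refinement variables and so lie outside the domain of $\bananas{\cdot}$ as defined; I will therefore prove the statement in the strengthened form where $\Gamma$ may also contain monotype bindings and the relating hypothesis additionally requires $\Gamma'$ to bind each such $x:T$ to a subtype of $\theta T$. Then \rlnm{IAbs} is handled by extending $\Gamma'$ with $x : \theta T_1$, applying the strengthened hypothesis to the body, and concluding with \rlnm{TAbs}.

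The main obstacle is \rlnm{ICase}. The induction hypothesis on the scrutinee produces a datatype $T^e \subtype \theta(\inj_X\,\ul d\ \vv{T}) = \inj_{\theta(X)}\,\ul d\ \theta\vv{T}$, so by the shape of $\subtype$, $T^e = \inj_f\,\ul d\ \vv{T'}$ with $f(\ul d)$ contained in $\theta(X)(\ul d)$ and the argument types related componentwise. The constraint $X(\ul d) \subseteq \{k_1,\dots,k_m\}$ bounds the available constructors within the branches actually present; after subsuming $e$ to a refinement of $\ul d$ whose constructor set matches the exhaustiveness requirement of \rlnm{TCase}, this supplies the side-condition of that rule. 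For each surviving constructor $k_i$, the guard $k_i \in X(\ul d)$ holds under $\theta$, so $\theta$ solves the previously-guarded sets $C_i$ and $C_i'$; the pattern-variable types $\canon\Delta(d)(k_i)[\dots/\vv{\alpha}]$ demanded by \rlnm{TCase} are, via the scrutinee subtyping, subtypes of the $\theta$-images of the pattern-variable types used by the inference rule, so the strengthened hypothesis applies to $e_i$; finally $\types T_i \subtype T \infers C_i'$ together with Theorem~\ref{thm:correctness-subtype-inf} lets us subsume every branch to the common type $\theta V$, and \rlnm{TCase} assembles the derivation. The genuinely delicate points — which I expect to take the most care — are the exhaustiveness bookkeeping, reconciling the inference rule's ``all branches, guarded'' shape with \rlnm{TCase}'s requirement that the scrutinee refinement's constructors be exactly the handled ones, and checking that the strengthened environment hypothesis is re-established at each binder.
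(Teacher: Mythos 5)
Your proposal is correct and follows essentially the same route as the paper's proof: induction on the inference derivation, invoking Theorem~\ref{thm:correctness-subtype-inf} to discharge the subtyping side-conditions, and using the guard constraints ($k \in X(\ul d)$ and $X(\ul d) \subseteq \{k_1,\ldots,k_m\}$) to recover constructor availability and exhaustiveness in the \rlnm{TCon} and \rlnm{TCase} cases. The one place you go beyond the paper is in explicitly strengthening the induction hypothesis to admit open monotype bindings introduced by \rlnm{IAbs} and \rlnm{ICase} — the paper's proof silently applies the inductive hypothesis to such extended environments even though they fall outside the stated hypothesis that $\Gamma$ is closed, so your bookkeeping is a legitimate tightening of the same argument rather than a different approach.
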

\begin{proof}
    The proof is by induction of $\infers$.
    \begin{description}
      \item[\rlnm{ICst}]
        In this case, $e$ is of shape $c$ and we may assume $\mathbb{C}(c) = V$.
        Let $T = \mathbb{C}(c)$.
        By \rlnm{TCst}, we have that $\Gamma' \types e : T$.
        As $V$ is an unrefined type, it is invariant under any substitution, i.e. $\theta V = V$, and hence (ii) is trivial satisfied.

      \item[\rlnm{ICon}]
        In this case, $e$ is of shape $k$, and $V = \inj_X(\ul{T_1}) \rightarrow \cdots{} \rightarrow \inj_X(\ul{T_m}) \rightarrow \inj_X(d)$ for fresh $X$ where $\ul{\Delta}(d)(k) = \ul{T_1} \rightarrow \cdots{} \rightarrow \ul{T_m}$.
        The constraint set $C$ is $\{ k \in X(d) \}$.
        We know, therefore, that $\theta$ is a substitution mapping $X$ to a constructor choice function $f$, such that $k \in \dom(\Delta^*(\inj_f(d))$.
        Let $T_i = \Delta^*(\inj_f(d))(k)(i)$ and $T = T_1 \rightarrow \cdots{} \rightarrow T_n \rightarrow \inj_f(d)$.
        By the definition of $\Delta^*$, $\theta V = T$.
        Finally, by \rlnm{TCon}, we have that $\Gamma' \types k : T$ as required.

      \item[\rlnm{IVar}]
        In this case, $e$ is of shape $x\ \vv{\ul{T}}$.
        Let us assume that $x:\forall \vv{\alpha}.\ \forall \vv{X}.\ C \implies U \in \Gamma$.
        Hence, for fresh $\vv{Y}$ and $\vv{T}$, $V$ is $U[\vv{Y/X}][\vv{T/\alpha}]$, and $\theta$ solves $C[\vv{Y/X}]$.
        We may assume $x:\forall \vv{\alpha}.\ \sigma U \in \Gamma'$, for some $\sigma$.
        Thus by \rlnm{IVar} $\Gamma' \types x \vv{\ul{T}} : (\sigma U)[\vv{T/\alpha}]$.
        Define a new model $\theta'$ as follows:
        \[
          \theta'(Z) =
            \begin{cases}
              \theta(Y_i) &\textrm{if}\ Z = X_i\\
              \sigma(X_i) &\textrm{if}\ Z = Y_i\\
              T_i         &\textrm{if}\ Z = T_i\\
              \sigma(Z)   &\textrm{otherwise}\
            \end{cases}
        \]
        The consistency of this definitions follows from the freshness of $\vv{T}$ and $\vv{V}$.
        Clearly this solves $C$, and $\theta' V = \theta' U[\vv{Y/X}][\vv{T/\alpha}] = (\sigma U)[\vv{T/\alpha}]$ trivially satisfying (ii).

      \item[\rlnm{IAbs}]
        In this case $e = \lambda x. e_1$, and $V = V_1 \to V_2$ where $V_1$ is fresh.
        Suppose $\theta \models C$.
        As $C$ is also the constraint set of the hypotheses, by induction, there exists a type $T_2$, such that $\Gamma' \cup \{x : T_1\} \types e_1 : T_2$, $T_2 \subtype \theta V_2 $, and $\theta V_1 = T_1$.
        By \rlnm{TAbs} we may derive $\Gamma' \types \lambda x.\ e_1 : T_1 \rightarrow T_2$.
        Let $T = T_1 \to T_2$, then clearly we have $\types T \subtype \theta V$ as required.

      \item[\rlnm{IApp}]
        In this case $e = e_1\ e_2$ and $C = C_1 \cup C_2 \cup C_3$.
        Hence by the rules of inference, necessarily $\Gamma \types e_1 \infers V_1 \to V_2,\ C_1$, $\Gamma \types e_2 \infers V_3,\ C_2$, and $\types V_3 \subtype V_1 \infers C_3$.
        Suppose $\theta \models C$.
        As this is also a solution to $C_3$ by Theorem \ref{thm:correctness-subtype-inf}, we have that $\theta V_3 \subtype \theta V_1$.
        By the induction hypothesis $\Gamma' \types e_1 : T_1 \to T_2$ and $\Gamma' \types e_2 : T_3$ with $\theta V_1 \subtype T_1$, $T_2 \subtype \theta V_2$, and $T_3 \subtype \theta V_3$.
        Using the transitivity of subtyping, we have that $T_3 \subtype T_1$.
        Hence by \rlnm{TSub} $\Gamma' \types e_2 : T_1$, and so by \rlnm{TApp} $\Gamma' \types e_1\ e_2 : T_2$ as required.

      \item[\rlnm{ICase}]
        In this case $e = \matchtm{e_0}{\mid_{i=1}^m k_i\ \vv{x_i} \mapsto e_i}$, and $C = C_0 \cup \bigcup_{i=1}^m k_i \in X(d)\ ?\ (C_i \cup C_i') \cup \{X(d) \subseteq \{k_1,\ \cdots{},\ k_m\}\}$.
        Let us suppose that $\theta$ maps $X$ to the constructor choice function $f$.
        It follows from inference that $\Gamma \types e \infers \inj_X(d),\ C_0$, and as $\theta \models C_0$, thus $\Gamma' \types e : T_0$ such that $\types T_0 \subtype \theta \inj_f(d)$.

        For each $i \leq m$, let us consider the case when $k_i \in \dom(\Delta^*(\inj_f(d)))$.
        By induction $\Gamma' \cup \{ \vv{x_i} : \Delta^*(\inj_f(d))(k_i) \} \types e_i : T_i$ for some $T_i \subtype \theta V_i$.
        As $C_i'$ must be satisfied by $\theta$, we also have that $\theta V_i \subtype \theta V$, and so $\Gamma' \cup \{ \vv{x_i} : \Delta^*(\inj_f(d))(k_i) \} \types e_i : \theta V$.

        As $\dom(\Delta^*(\inj_f(d))) \subseteq \{k_1, \cdots{}, k_m\}$, by \rlnm{TCase} we have that $\Gamma' \types e : \theta V$ as required.
    \end{description}
\end{proof}

\begin{lemma}\label{lem:term-complete}
  Let $\Gamma$ be a constrained type environment, $\Gamma'$ a type environment, $e$ be an expression, $T \in \Ty\ \canon{D}$, $V \in \ETy$, $\sigma$ a refinement variable substitution with domain $\frv(\Gamma)$ and $C$ a constraint set.
  Suppose $\Gamma' \types e : T$ and $\Gamma \types e \infers V,\,C$ and $\bananas{\sigma\Gamma} \subtype \Gamma'$.  Then there is $\theta$ such that:
  \begin{enumerate}[(i)]  
    \item $\sigma \equiv_{\frv(\Gamma)} \theta$
    \item and $\types \theta V \subtype T$
    \item and $\theta \models C$.
  \end{enumerate}
\end{lemma}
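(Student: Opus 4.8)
The plan is to argue by induction on the derivation of the declarative judgement $\Gamma' \types e : T$, exploiting the fact that the inference judgement $\Gamma \types e \infers V,\,C$ is syntax-directed: fixing $e$ determines the shapes of $V$, of $C$, and of the $\infers$-sub-derivations for the immediate subexpressions of $e$, which are precisely what the induction hypothesis consumes (this is the dual situation to Lemma~\ref{lem:term-sound}, which was driven by $\infers$). The witness $\theta$ will be assembled from the assignments returned by the induction hypotheses on subexpressions; the freshness side-conditions of Figures~\ref{fig:inf-sub}--\ref{fig:inf-exp} together with the variable convention guarantee that the refinement variables introduced by distinct sub-derivations are pairwise disjoint and disjoint from $\frv(\Gamma)$, so the assemblage is well defined, and any variable outside every scope may be set arbitrarily. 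Two facts are used throughout: a fresh type template of underlying shape $\ul U$ can be sent, via a suitable assignment of its distinct fresh refinement variables, to \emph{any} refinement type of underlying shape $\ul U$; and subtyping is a preorder (proved earlier) that is moreover preserved by substituting refinement types for type variables on both sides --- a routine coinduction through Lemma~\ref{lem:simulation}.

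The structural cases are routine. For \rlnm{TCst} take $\theta = \sigma$ on $\frv(\Gamma)$ and arbitrary elsewhere, since $V$ is unrefined and $C = \emptyset$. For \rlnm{TSub}, which concludes $\Gamma' \types e : T_2$ from $\Gamma' \types e : T_1$ and $\types T_1 \subtype T_2$, apply the induction hypothesis to the premise with the \emph{same} $V,\,C$ (inference never inspects the declarative type) and close under transitivity of $\subtype$. For \rlnm{TCon} and \rlnm{TAbs} the inference rule introduces fresh variables and templates; choose $\theta$ on them so that $\theta V$ equals the declarative type exactly --- whence (ii) is reflexivity --- and so that any generated constraint $k \in X(d)$ holds. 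For \rlnm{TApp} apply the induction hypothesis to both premises, glue the two disjoint witnesses into $\theta$, and discharge the constraint $C_3$ coming from $\types T_3 \subtype T_1 \infers C_3$ via Theorem~\ref{thm:correctness-subtype-inf}, using the subtyping facts obtained by inverting the declarative application together with transitivity. For \rlnm{TVar} one additionally unfolds $\bananas{\cdot}$: from $\bananas{\sigma\Gamma} \subtype \Gamma'$ and $x : \forall\vv{\alpha}.\,U' \in \Gamma'$ one extracts a solution $\rho$ of the instantiated constraint $\sigma C$ carried by $x$ in $\Gamma$ and a subtyping $\types \rho(\sigma U) \subtype U'$; letting $\theta$ send the fresh copies $\vv{Y}$ of the quantified refinement variables to $\rho(\vv{X})$ and the fresh templates to the declarative type arguments makes (iii) immediate and reduces (ii) to $\types \rho(\sigma U) \subtype U'$ after the substitution property noted above.

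The main obstacle is \rlnm{TCase}. Here the declarative rule types \emph{all} branches and forces the scrutinee's refinement datatype $\inj_f\,\ul{d}$ to carry \emph{exactly} the matched constructors $\{k_1,\dots,k_m\}$, while \rlnm{ICase} speculatively infers a type $T_i$ and constraints $C_i$ for every branch but guards the branch obligations by $k_i \in X(\ul{d})$. First, the induction hypothesis on the scrutinee yields $\theta_0$ with $\theta_0 \models C_0$ and $\types \inj_{\theta_0(X)}\ul{d}\,(\theta_0 \vv{T}) \subtype \inj_f\,\ul{d}\,\vv{T'}$, where $\vv{T'}$ are the declarative type arguments. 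Since this inequality holds, rules \rlnm{SMis} and \rlnm{SSim} cannot refute it, which gives at once (a) $\theta_0(X)(\ul{d}) \subseteq \{k_1,\dots,k_m\}$, discharging the constraint $X(\ul{d}) \subseteq \{k_1,\dots,k_m\}$, and (b) for every $k_i \in \theta_0(X)(\ul{d})$, each argument type of $k_i$ instantiated through $\theta_0$ is a subtype of the corresponding one instantiated through $f$. Using (b), for exactly the reachable branches --- those with $k_i \in \theta_0(X)(\ul{d})$, which by (a) lie among $1,\dots,m$ and are therefore typed in the declarative derivation --- the environment hypothesis $\bananas{\sigma_i(\Gamma \cup \{\vv{x_i}:(\inj_X\,\vv{A})[\vv{T/\alpha}]\})} \subtype \Gamma' \cup \{\vv{x_i} : \canon{\Delta}(\inj_f\,\ul{d})(k_i)[\vv{T'/\alpha}]\}$ holds, where $\sigma_i$ extends $\sigma$ by $\theta_0$'s values on $X$ and $\frv(\vv{T})$; so the induction hypothesis applies to that branch and returns some $\theta_i$, whereas for an unreachable branch the guard $k_i \in X(\ul{d})$ is false under $\theta_0$ and there is nothing to prove. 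Finally glue $\theta_0$, the $\theta_i$, and the choice $\theta V = T$ (recall the inferred type of the whole match is the fresh result template $V$) into one $\theta$, well-definedness again being by freshness: then $\theta \models C_0$ and $\theta \models X(\ul{d}) \subseteq \{k_1,\dots,k_m\}$ from $\theta_0$, $\theta \models C_i$ from $\theta_i$, and $\theta \models C_i'$ by Theorem~\ref{thm:correctness-subtype-inf} from $\types \theta_i T_i \subtype T = \theta V$ for each reachable $i$, so $\theta \models C$; conditions (i) and (ii) hold by construction, (ii) because $\theta V = T$.
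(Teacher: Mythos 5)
Your proposal is correct and follows essentially the same route as the paper's proof: induction on the declarative derivation (leaning on the syntax-directedness of $\infers$), gluing the witness assignments from the induction hypotheses via the freshness side-conditions, unfolding $\bananas{\cdot}$ in the variable case, discharging the generated subtyping constraints through Theorem~\ref{thm:correctness-subtype-inf}, and in the \rlnm{TCase} case using \rlnm{SMis}/\rlnm{SSim} on the scrutinee's subtyping to restrict attention to the reachable branches before applying the induction hypothesis. The details of the $\theta$ constructions match the paper's case-by-case argument.
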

\begin{proof}
  The proof is by induction on $\types'$.
  \begin{description}
    \item[\rlnm{TVar}]
      In this case $e$ is of shape $x\ \vv{\ul{T}}$.  
      Assume $\under(\vv{T}) = \vv{\ul{T}}$, $(x:\forall \vv{\alpha}.\ U) \in \Gamma'$ and $U[\vv{T/\alpha}] = T$.
      Assume $\bananas{\sigma\Gamma} \subtype \Gamma'$.
      By definition, in $\Gamma$ there is some $x:\forall \vv{\alpha}.\ \forall \vv{X}.\ C' \implies U'$ and there is some $\tau$ such that $\types \tau(\sigma U') \subtype U$ and $\tau \models \sigma C'$.
      By definition, $V$ is of shape $U'[\vv{Y/X}][\vv{T'/\alpha}]$ for fresh $\vv{Y}$ and fresh types $\vv{T'}$.  Moreover, $C$ is of the form $C'[\vv{Y/X}]$.  
      Since the $\vv{T'}$ are fresh, there is a substitution $\theta'$ such that $\theta'\vv{T'} = \vv{T}$.
      Define $\theta$ as follows:
      \[
        \theta(Z) \coloneqq 
          \begin{cases}
            \theta'(Z) & \text{if $Z \in \frv(\vv{T'})$} \\
            \sigma(Z) & \text{if $Z \in \frv(\Gamma)$} \\
            \tau(X_i) & \text{if $Z = Y_i \in \vv{Y}$} \\
            \tau(Z) & \text{otherwise}
          \end{cases}
      \]
      Note that the freshness of $\vv{Y}$ and $\vv{T'}$ ensure the exclusivity of the four cases and requirement (i) of the theorem is satisfied.
      Observe that:
      \[
        \theta(U'[\vv{Y/X}][\vv{T'/\alpha}]) = (\theta (U'[\vv{Y/X}]))[\theta\vv{T'}/\alpha] = (\theta (U'[\vv{Y/X}]))[\vv{T}/\alpha]
      \]
      Next, since $\theta(X_i) = \theta(Y_i)$ for any $X_i \in \vv{X}$ and the codomain of $\tau$ is closed, it follows that:
      \[
        \theta (U'[\vv{Y/X}]) = (\theta U')[\theta(\vv{Y})/\vv{X}] = (\theta U')[\theta(\vv{X})/\vv{X}] = \theta U'
      \]
      Finally, by the disjointness of the cases, the fact that the codomain of $\sigma$ is necessarily closed, and $\frv(U') \subseteq \vv{X} \cup \frv(\Gamma)$, $\theta U' = \tau(\sigma U')$.
      Hence, overall $\theta(U'[\vv{Y/X}][\vv{T'/\alpha}]) = \tau(\sigma U')[\vv{T}/\alpha]$ and it follows by definition of subtyping that $\types \tau(\sigma U')[\vv{T}/\alpha] \subtype U[\vv{T}/\alpha]$ so that requirement (ii) is satisfied.
      Finally, note that $\theta \models C'[\vv{Y}/\vv{X}]$ iff $\theta \models C'$ since $\theta(Y_i) = \theta(X_i)$.
      Then, by definition, and the closedness of the codomain of $\sigma$, $\theta \models C'$ iff $\theta \models \sigma C'$.
      Then, since $\frv(C') \subseteq \vv{X} \cup \frv(\Gamma)$, it follows that $\theta \models \sigma C'$ iff $\tau \models \sigma C'$, which was an assumed.
      Hence, $\theta \models C'[\vv{Y}/\vv{X}]$ and requirement (iii) is satisfied.   

    \item[\rlnm{TCst}]
      In this case, $e$ is of shape $c$ and we may assume $\mathbb{C}(c) = T$.  Since $V=\mathbb{C}(c)$ and $C=\emptyset$, Any assignment $\theta$ extending $\sigma$ will satisfy the three requirements.

    \item[\rlnm{TCon}] 
      In this case, $e$ is a constructor $k\ \vv{\ul{T}}$ and $T$ is of the form $T_1 \to \cdots{} T_m \to \inj_f\,d\ \vv{T}$ with $T_1 \to \cdots \to T_m \to \inj_d\,d\ \vv{T} = \canon{\Delta}(\inj_g\,d)(k)[\vv{T/\alpha}]$ and $k \in \dom\,(\canon{\Delta}(\inj_f\,d))$ for some fresh $\vv{T}$.
      Hence, by definition, $T = \inj_f\,(\ul{\Delta}(d)(k))$ (*) and $k \in f(d)$ (***).
      In this case, $V$ is $\inj_X\,\ul{T}$ with $\ul{T} = \ul{\Delta}(d)(k) \to d$ and $X$ fresh.
      Moreover, $C$ is $\{k \in X(d)\}$  
      Therefore, we can define $\theta$ as follows:
      \[
        \theta(Z) = 
          \begin{cases}
            f & \text{if $Z = X$} \\
            \sigma(Z) & \text{otherwise} \\
          \end{cases}
      \]
      By freshness of $X$, this guarantees requirement (i).  
      By (*) above, we have $\inj_f\,\ul{T} \subtype T$, ensuring requirement (ii).
      Finally, the requirement (iii) is satisfied by (***).

    \item[\rlnm{TAbs}]
      In this case, $e$ is an abstraction $\abs{x:\ul{T_1}}{e'}$ and $T$ is of shape $T_1 \to T_2$.  We may assume that $\under(T_1) = \ul{T_1}$ and $x \notin \dom\,\Gamma$.
      Assume $\types \bananas{\sigma\Gamma} \subtype \Gamma'$.
      From inference we have $V$ necessarily of shape $V_1 \to V_2$ with $\fresh_{\ul{T_1}}(V_1)$ and $\Gamma \cup \{x:V_1\} \types e \infers T_2,\,C$.  
      By the freshness of $V_1$, it follows that there is some $\sigma'$ such that $\sigma'V_1 = T_1$ and $\sigma'(Z) = \sigma(Z)$ on any $Z \notin \frv(V_1)$.  
      Hence, $\bananas{\sigma\Gamma} \cup \{x:T_1\} = \bananas{\sigma'(\gamma \cup \{x:V_1\})}$ and, by definition, $\types \bananas{\sigma\Gamma} \cup \{x:T_1\} \subtype \Gamma' \cup \{x:T_1\}$.
      Therefore, it follows from the induction hypothesis that there is an assignment $\theta$ satisfying (a) $\theta \equiv_{\frv(\Gamma \cup \{x:V_1\})} \sigma'$, (b) $\types \theta V_2 \subtype T_2$ and (c) $\theta \models C$.  Then this $\theta$ works also as a witness to the main result since (a) implies $\theta \equiv_{\frv(\Gamma)} \sigma$, (a) and (b) together imply $\types \theta V_1 \to V_2 \subtype T_1 \to T_2$ and (c) is exactly requirement (iii).

    \item[\rlnm{TApp}]
      In this case, $e$ is an application $e_1\,e_2$.  From inference we have, necessarily, $\Gamma \types e_1 \infers V_1 \to V_2,\,C_1$, $\Gamma \types e_2 : V_3,\,C_2$ and $\types V_3 \subtype V_1 \infers C_3$.  
      It follows immediately from the induction hypothesis that there are assignments $\theta_1$ and $\theta_2$ such that (a) $\theta_i \equiv_{\frv(\Gamma)} \sigma$, (b1) $\types \theta_1 V_1 \to V_2 \subtype T_1 \to T_2$, (b2) $\types \theta_2 V_3 \subtype T_1$, (c1) $\theta_1 \models C_1$ and (c2) $\theta_2 \models C_2$.  
      Define $\theta$ as follows:
      \[
        \theta(Z) =
          \begin{cases}
            \theta_1(Z) & \text{if $Z \in \frv(V_1 \to V_2)$} \\
            \theta_2(Z) & \text{if $Z \in \frv(V_3)$} \\
            \sigma(Z) & \text{otherwise}
          \end{cases}
      \]
      This is well defined since one can easily verify by inspection that the inference system guarantees that the only overlap in refinement variables between sibling branches is in the free refinement variables of the environment, and $\theta_1$ and $\theta_2$ agree on this by (a).  
      This construction therefore satisfies requirement (i).  
      Furthermore, (b1) implies $\types \theta V_2 \subtype T_2$.
      Finally, it follows from (c1) and (c2) that $\theta \models C_1 \cup C_2$.
      By (b1) and (b2) we have $\types \theta V_3 \subtype T_1$ and $\types T_1 \subtype \theta V_1$ so, by transitivity, also $\types \theta V_3 \subtype \theta V_1$.
      Hence, it follows from the completeness of subtype inference (Theorem~\ref{thm:correctness-subtype-inf}) that, therefore, $\theta \models C_3$.

    \item[\rlnm{TCase}]
      In this case, $e$ is of shape $\matchtm{e'}{\shortmid_{i=1}^m k_i\ \vv{x_i} \mapsto e_i}$ and we may assume that $\{i_1,\ldots,i_n\} \subseteq \{1,\ldots,m\}$ and $\dom(\canon{\Delta}(d)) = \{k_{i_1},\ldots,k_{i_n}\}$ (*).
      We may also assume that $d = \inj_f\,{\ul{d}}$ for some choice $f$.
      From inference, we have necessarily $\Gamma \types e : \infers \inj_X\,\ul{d},\,C_0$ and, for all $i \in \{1\ldots m\}$, $\types V_i \subtype V \infers C_i'$ (**) and $\Gamma \cup \{\vv{x_i} : \inj_X(\ul{\Delta}(\ul{d})(k_i))\} \types e_i \infers V_i,\,C_i$; where $X$, $V$ and each $V_i$ are fresh.  
      Since $V$ is fresh, there is some substitution $\theta'$ such that $\theta'\,V = T$ (***).
      It follows from the induction hypothesis that there is $\theta_0$ such that (a0) $\theta_0 \equiv_{\frv(\Gamma)} \sigma$, (b0) $\types \theta_0\,\inj_X\,\ul{d} \subtype d$ and (c0) $\theta_0 \models C_0$.
      Set $\sigma'(X) = \theta_0(X)$ and $\sigma'(Z) = \sigma(Z)$ for all other $Z$.
      Then, for all $j \in \{1\ldots n\}$: 
      \[
        \bananas{\sigma \Gamma} \cup \{\vv{x_{i_j}} : \canon{\Delta}(\inj_{\theta_0(X)}\,\ul{d})(k_{i_j})\} = \bananas{\sigma' (\Gamma \cup \{\vv{x_{i_j}} : \inj_X\,(\ul{\Delta}(\ul{d})(k_{i_j}))\}}
      \]
      By (b0) and \rlnm{SSim}, it follows that $\types \bananas{\sigma \Gamma} \cup \{\vv{x_{i_j}} : \canon{\Delta}(\inj_{\theta_0(X)}\,\ul{d})(k_{i_j})\} \subtype \bananas{\sigma \Gamma} \cup \{\vv{x_{i_j}} : \canon{\Delta}(d)(k_{i_j})\}$.
      Hence it follows from the induction hypothesis that there are substitutions $\theta_{i_j}$ (for each $j \in \{1\ldots n\}$) such that (ai) $\theta_{i_j} \equiv_{\frv(\Gamma \cup \{\vv{x_{i_j}} : \inj_X\,(\ul{\Delta}(\ul{d})(k_{i_j}))\})} \sigma'$, (bi) $\theta_{i_j}\,V_i \subtype T$ and (ci) $\theta_{i_j} \models C_{i_j}$.
      Define $\theta$ as follows:
      \[
        \theta(Z) = 
          \begin{cases}
            \theta'(Z) & \text{if $Z \in \frv(V)$} \\
            \theta_0(Z) & \text{if $X = Z$ or $Z \in \frv(C_0)$} \\
            \theta_{i_j}(Z) & \text{if $Z \in \frv(V_{i_j}) \cup \frv(C_{i_j})$} \\
            \sigma(Z) & \text{otherwise}
          \end{cases}
      \]
      The use of freshness guarantees the well-definedness since refinement variables introduced in different branches are distinct and so the only variables shared are those in $\Gamma$, on which all agree by (a0) and (ai).
      Hence, requirement (i) is satisfied.
      It follows from (***) that $\theta\,V = T$ and by (bi) $\types \theta\,V_{i_j} \subtype T$.
      Hence, $\types \theta\,V_{i_j} \subtype \theta\,V$ and it follows from the completeness of subtype inference, Theorem~\ref{thm:correctness-subtype-inf}, that $\theta \models C'_{i_j}$.
      By (*) and \rlnm{SMis}, $\dom\,(\canon{\Delta}(\inj_{\theta(X)}\,\ul{d})) \subseteq \{k_{i_1},\ldots,k_{i,n}\}$ and so $\theta(X)(\ul{d}) \subseteq \{k_{i_1},\ldots,k_{i,n}\}$.
      Consequently, by the foregoing and (ci):
      \[
        \theta \models \{X(d) \subseteq \{k_i,\ldots,k_m\}\} \bigcup_{i=1}^m k_i \in X(\ul{d}) ? (C_i \cup C_i')
      \]
      Finally, requirement (iii) is satisfied by also taking into account (c0).

    \item[\rlnm{TSub}] 
      In this case we may assume $\types T_1 \subtype T_2 = T$.  It follows from the induction hypothesis that there is $\theta$ such that (i) $\theta \equiv_{\frv(\Gamma)} \sigma$, (ii) $\types \theta\,V \subtype T_1$ and (iii) $\theta \models C$.  Since, by transitivity, $\types \theta\,V \subtype T_2$, the same $\theta$ acts as witness to the result.
 \end{description}
\end{proof}

\end{toappendix}

The following states the correctness of type inference for expressions in a closed environment (e.g. for module-level definitions).  The appendix contains a proof for the general case.

\begin{theoremrep}[Soundness and completeness of expression inference]\label{thm:inf-exp-correctness}
  Let $\Gamma$ be a constrained type environment, $e$ an expression, $V$ an extended refinement, $C$ a set of constraints and let $\Gamma \types e \infers V,\,C$.  Then, for all refinement types $T$: $\bananas{\Gamma} \types e : T  \ \textit{iff}\  \exists \theta.\; {\theta \models C} \;\wedge\; {\types \theta V \subtype T}$.
\end{theoremrep}
\begin{proof}
  Follows immediately from Lemmas \ref{lem:term-sound} and \ref{lem:term-complete}.
\end{proof}

\begin{toappendix}

\begin{lemma}\label{lem:mod-inf-soundness}
  Let $\Gamma$, and $\Gamma'$ be closed constrained type environments, $\Delta \subtype \bananas{\Gamma}$, and $m$ a module.
  If $\Gamma \types m \infers \Gamma'$, and $\bananas{\Gamma'}$ isn't empty, then there exists a $\Delta' \subtype \bananas{\Gamma'}$ such that $\Delta \types m : \Delta'$.
\end{lemma}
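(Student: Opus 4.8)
The plan is to argue by induction on the derivation of $\Gamma \types m \infers \Gamma'$, that is, on the structure of the module $m$, following the two rules of Figure~\ref{fig:inf-mod}. In the base case \rlnm{IModE} we have $m = \epsilon$ and $\Gamma' = \Gamma$; taking $\Delta' \coloneqq \Delta$, the judgement $\Delta \types \epsilon : \Delta$ holds by \rlnm{TModE}, and $\Delta' = \Delta \subtype \bananas{\Gamma} = \bananas{\Gamma'}$ is just the hypothesis. So all of the work is in the step case.

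In the step case \rlnm{IModD} the module is $m_0 \cdot \langle x : \forall\vv{\alpha}.\ \ul{T} = \Abs{\vv{\alpha}}{e}\rangle$, and the rule supplies: an intermediate environment $\Gamma_0$ with $\Gamma \types m_0 \infers \Gamma_0$; a fresh template $T$ with $\fresh_{\ul{T}}(T)$ and $\vv{X} = \frv(T)$; a body inference $\Gamma_0 \cup \{x : T\} \types e \infers T',\,C_1$; a subtype inference $\types T' \subtype T \infers C_2$; and $\Gamma' = \Gamma_0 \cup \{x : \forall\vv{\alpha}.\ \forall\vv{X}.\ C_1 \cup C_2 \implies T\}$. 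Reading the hypothesis ``$\bananas{\Gamma'}$ isn't empty'' as the statement that every constraint set occurring in a scheme of $\Gamma'$ is satisfiable, we obtain both that $\bananas{\Gamma_0}$ is nonempty --- so the induction hypothesis applies to $m_0$, yielding $\Delta_0 \subtype \bananas{\Gamma_0}$ with $\Delta \types m_0 : \Delta_0$ --- and that $C_1 \cup C_2$ has a solution $\theta$. From $\theta \models C_2$ and soundness of $\subtype$-inference (Theorem~\ref{thm:correctness-subtype-inf}) we get $\types \theta T' \subtype \theta T$. To type the body declaratively I close the body inference by $\theta$: using a routine substitution property of the inference relation (inference commutes with refinement-variable assignments), $\Gamma_0 \cup \{x : \theta T\} \types e \infers \theta T',\,\theta C_1$, now over a \emph{closed} constrained environment (recall $\theta\Gamma_0 = \Gamma_0$), where $\theta C_1$ is ground and holds since $\theta \models C_1$, hence trivially satisfiable. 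Soundness of expression inference (Lemma~\ref{lem:term-sound}), applied with this closed environment and the type environment $\Delta_0 \cup \{x : \theta T\}$ --- which refines $\bananas{\Gamma_0 \cup \{x : \theta T\}}$ since $\Delta_0 \subtype \bananas{\Gamma_0}$ and $\types \theta T \subtype \theta T$ --- then gives a refinement type $T''$ with $\Delta_0 \cup \{x : \theta T\} \types e : T''$ and $\types T'' \subtype \theta T'$, so $\types T'' \subtype \theta T$ by transitivity. We may therefore set $\Delta' \coloneqq \Delta_0 \cup \{x : \forall\vv{\alpha}.\ \theta T\}$ and apply \rlnm{TModD} (with $\Delta \types m_0 : \Delta_0$, the body typing, and $\types T'' \subtype \theta T$) to conclude $\Delta \types m : \Delta'$. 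The containment $\Delta' \subtype \bananas{\Gamma'}$ is then immediate by construction: the schemes coming from $\Delta_0$ are $\subtype$-below schemes of $\bananas{\Gamma_0} \subseteq \bananas{\Gamma'}$ ($\bananas{\cdot}$ is monotone in the environment) by the induction hypothesis, and the new scheme $\forall\vv{\alpha}.\ \theta T$ actually occurs in $\bananas{\Gamma'}$ because $\theta \models C_1 \cup C_2$.

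I expect the main obstacle to be the reconciliation of the \emph{open} environment $\Gamma_0 \cup \{x : T\}$ in which the body is analysed --- it carries the fresh refinement variables $\vv{X} = \frv(T)$, and the recursive occurrence of $x$ in $e$ is bound there at the monomorphic template $T$ --- with the closed-environment hypothesis of Lemma~\ref{lem:term-sound}. The fix is the closing step above: instantiating at the chosen solution $\theta$ replaces $T$ by the ground type $\theta T$ and the guards by a ground, satisfied residual, mirroring the $\sigma\Gamma$ device already used in the proofs of Lemmas~\ref{lem:term-sound} and \ref{lem:term-complete}; Milner-style monomorphic recursion is respected because both \rlnm{IModD} and \rlnm{TModD} assign $x$ exactly this monomorphic type inside its own definition. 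Two smaller points also deserve care: establishing the substitution property of the inference judgement (straightforward by induction on the inference derivation), and fixing the intended reading of ``$\bananas{\Gamma'}$ is nonempty'' as the satisfiability of all inferred constrained schemes --- this is what makes $\theta$ available and what propagates to $\bananas{\Gamma_0}$ so the induction hypothesis can be invoked.
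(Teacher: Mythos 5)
Your proof is correct and follows essentially the same route as the paper's: induction on the inference derivation, extracting a solution $\theta$ of $C_1 \cup C_2$ from the nonemptiness hypothesis, invoking soundness of $\subtype$-inference and Lemma~\ref{lem:term-sound} on the body, and closing with \rlnm{TModD}. Your explicit ``closing step'' (instantiating the open environment $\Gamma_0 \cup \{x:T\}$ at $\theta$ before applying Lemma~\ref{lem:term-sound}) is a detail the paper's proof glosses over, and is a reasonable way to reconcile the lemma's closed-environment hypothesis with the fresh variables $\vv{X}$.
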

\begin{proof}
    The proof is by induction of $\infers$.
  \begin{description}
    \item[\rlnm{IModE}]
      In this case the module is empty, i.e. $m=\epsilon$, and $\Gamma' = \Gamma$.
      Let $\Delta' = \Delta$ which is a subenvironment of $\bananas{\Gamma'}$.
      By \rlnm{TModE} we can derive $\Delta \types m : \Delta'$ as required.

    \item[\rlnm{IModD}]
      If the module has shape $m \cdot \langle x: \forall \vv{\alpha}.\ \ul{T} = \Abs{\vv{\alpha}}{e} \rangle$, and $\Gamma' = \Gamma'' \cup \{x : \forall \vv{\alpha}.\ \forall \vv{X}.\ \restr{\Sat(C_1 \cup C_2)}{\vv{X}} \implies V \}$.
      Let us assume $\Gamma'' \cup \{x : V\} \types e \infers V',\ C_1$, and $\types V' \subtype V \infers C_2$ for some fresh $V$ with $\vv{X} = \frv(V)$.
      The induction hypothesis provides a $\Delta'' \subtype \bananas{\Gamma''}$ such that $\Delta \types m : \Delta''$.
      As $\bananas{\Gamma'}$ isn't empty, there must exist a solution to $\restr{\Sat(C_1 \cup C_2)}{\vv{X}}$.
      By lemma \ref{thm:equi-consistency} we therefore have that both $C_1$ and $C_2$ are solvable by some refinement variable substitution $\theta$.
      Additionally, there must some type $T' \subtype \theta V'$ such that $\Delta' \cup \{ x : T \} \types e : T'$ where $T = \theta V$ by lemma \ref{lem:term-sound}.
      As $\theta V' \subtype \theta V$ we have that $T' \subtype T$.
      Thus by \rlnm{TModE} we have that $\Delta \types m \cdot \langle x: \forall \vv{\alpha}.\ \ul{T} = \Abs{\vv{\alpha}}{e} \rangle : \Delta' \cup \{x : \forall \vv{\alpha}.\ T \}$ as required.
  \end{description}
\end{proof}

\begin{lemma}\label{lem:mod-inf-completeness}
  Let $\Gamma$ and $\Gamma'$ be constrained type environments, $\Delta$ and $\Delta'$ type environments, and $m$ a module.
  Suppose $\bananas{\Gamma} \subtype \Delta$, and $\Delta \types m : \Delta'$ and $\Gamma \types m \infers \Gamma'$, then $\bananas{\Gamma'} \subtype \Delta'$.
\end{lemma}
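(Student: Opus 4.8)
The plan is to prove the lemma by induction on the derivation of $\Gamma \types m \infers \Gamma'$ (equivalently, on the structure of $m$), mirroring the case analysis of the soundness Lemma~\ref{lem:mod-inf-soundness} but running each step in the opposite direction. Since modules are closed, I may assume that $\Gamma$ --- and hence every intermediate environment produced by module inference --- is closed, so that $\bananas{\cdot}$ behaves as intended. The base case $m = \epsilon$ is immediate: \rlnm{IModE} forces $\Gamma' = \Gamma$, and the only rule deriving $\Delta \types \epsilon : \Delta'$ is \rlnm{TModE}, which forces $\Delta' = \Delta$; so the goal $\bananas{\Gamma'} \subtype \Delta'$ is exactly the hypothesis.

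In the inductive case $m = m_0 \cdot \langle x : \forall\vv{\alpha}.\ \ul{T} = \Abs{\vv{\alpha}}{e}\rangle$, inverting \rlnm{IModD} gives $\Gamma \types m_0 \infers \Gamma''$, a fresh template $T$ with $\fresh_{\ul T}(T)$ and $\vv X = \frv(T)$, derivations $\Gamma'' \cup \{x:T\} \types e \infers T',\,C_1$ and $\types T' \subtype T \infers C_2$, and $\Gamma' = \Gamma'' \cup \{x : \forall\vv{\alpha}.\forall\vv X.\ C_1 \cup C_2 \implies T\}$. Dually, inverting \rlnm{TModD} gives $\Delta \types m_0 : \Delta''$, a monotype $U_0$ with $\under(U_0) = \ul T$, a type $U_0'$ with $\types U_0' \subtype U_0$, a derivation $\Delta'' \cup \{x : U_0\} \types e : U_0'$, and $\Delta' = \Delta'' \cup \{x : \forall\vv{\alpha}.\ U_0\}$. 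First I would apply the induction hypothesis to the $m_0$-premises to obtain $\bananas{\Gamma''} \subtype \Delta''$.

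The crux is then to invoke the completeness lemma for expression inference, Lemma~\ref{lem:term-complete}, with a suitably chosen auxiliary substitution. Because $\fresh_{\ul T}(T)$ and $\under(U_0) = \ul T$, a straightforward induction on the freshness derivation produces a substitution $\sigma$ with domain $\vv X$ such that $\sigma T = U_0$; and since $\Gamma''$ is closed, $\bananas{\sigma(\Gamma'' \cup \{x:T\})} = \bananas{\Gamma''} \cup \{x : U_0\} \subtype \Delta'' \cup \{x : U_0\}$, using $\bananas{\Gamma''} \subtype \Delta''$ and reflexivity of $\subtype$ for the $x$-binding. Lemma~\ref{lem:term-complete} then yields $\theta$ with $\theta \equiv_{\vv X} \sigma$, $\types \theta T' \subtype U_0'$ and $\theta \models C_1$. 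Since $\theta T = \sigma T = U_0$ and, by transitivity, $\types \theta T' \subtype U_0' \subtype U_0 = \theta T$, the completeness direction of Theorem~\ref{thm:correctness-subtype-inf} gives $\theta \models C_2$ as well, hence $\theta \models C_1 \cup C_2$. Therefore $x : \forall\vv{\alpha}.\ \theta T \in \bananas{\Gamma'}$, i.e.\ $x : \forall\vv{\alpha}.\ U_0 \in \bananas{\Gamma'}$; combined with $\bananas{\Gamma''} \subseteq \bananas{\Gamma'}$ (the new $x$-binding contributes nothing outside $x$, by the variable convention), this shows every binding of $\Delta' = \Delta'' \cup \{x : \forall\vv{\alpha}.\ U_0\}$ is covered by a $\subtype$-smaller binding of $\bananas{\Gamma'}$, i.e.\ $\bananas{\Gamma'} \subtype \Delta'$. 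If the actual module rule records $\restr{\Sat(C_1 \cup C_2)}{\vv X}$ rather than $C_1 \cup C_2$ in the scheme for $x$, the same $\theta$ still works, since it solves $\Sat(C_1 \cup C_2)$ and a fortiori its restriction.

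The step I expect to be the main obstacle is pinning down the auxiliary substitution $\sigma$ together with the agreement relations: one must exploit (a) closedness of $\Gamma''$, so that $\sigma$ disturbs only the fresh variables $\vv X$ and the inductive conclusion $\bananas{\Gamma''} \subtype \Delta''$ survives under $\sigma$; (b) the fact that a fresh template can be instantiated to \emph{any} refinement type of the same underlying shape; and (c) careful tracking of which refinement variables $\sigma$ and $\theta$ agree on when chaining the three subtyping facts $\types \theta T' \subtype U_0'$, $\types U_0' \subtype U_0$ and $\theta T = U_0$. The remaining ingredients --- the inversions, the distribution of $\bananas{\cdot}$ over environment unions, and the variable-convention bookkeeping for $x$ --- are routine.
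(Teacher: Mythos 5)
Your proposal is correct and follows essentially the same route as the paper's own proof: induction on the module structure, inversion of \rlnm{IModD}/\rlnm{TModD}, the induction hypothesis for the prefix, then Lemma~\ref{lem:term-complete} instantiated with a substitution sending the fresh template to the declared type, followed by transitivity of subtyping and the completeness direction of Theorem~\ref{thm:correctness-subtype-inf} to discharge $C_2$. If anything, your write-up is more careful than the paper's about which type the fresh template is mapped to and about the agreement conditions between $\sigma$ and $\theta$.
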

\begin{proof}
    The proof is by induction of $\types$.
  \begin{description}
    \item[\rlnm{TModE}]
      If the module is empty, then $\Delta = \Delta'$ and $\Gamma = \Gamma'$.
      Therefore $\Delta' \subtype \bananas{\Gamma'}$ trivially follows from the hypothesis.

    \item[\rlnm{TModD}]
      Suppose the module is of shape $m \cdot \langle x = \Abs{\vv{\alpha}}{e} \rangle$, and therefore $\Delta' = \Delta'' \cup \{x : \forall \vv{\alpha}.\ T\}$ for some $\Delta''$ and $T$ such that $\Delta \types m : \Delta''$, $\Delta'' \cup \{ x : \forall \vv{\alpha}.\ T\} \types e : T'$, and $\types T' \subtype T$.
      As $\Gamma \types m \cdot \langle x = \Abs{\vv{\alpha}}{e} \rangle \infers \Gamma'$, we know that $\Gamma'$ has shape $\Gamma'' \cup \{ x:\forall \vv{\alpha}.\ \forall \vv{X}.\ C_1 \cup C_2 \implies V\}$ such that $\Gamma \types m \infers \Gamma''$, $\Gamma'' \cup \{x : V\} \types e \infers V', C_1$, and $\types V' \subtype V \infers C_2$.
      By the induction hypothesis, therefore, we have that $\bananas{\Gamma''} \subtype \Delta''$.
      As $V$ is fresh we can instantiate lemma \ref{lem:term-complete} with a substitution that maps it to $T'$.
      Therefore, there is some $\theta$ that solves $C_1$ with $\theta V' \subtype T'$, and $\theta V = T$.
      Hence $\Delta'$ is $\Gamma'$ with $x$ instantiated by $\theta$ all other variables instantiated as in $\Delta''$.
      As $\theta V' \subtype T'$ trivially holds, we also have that $\theta \models C_2$.
      Thus $\bananas{\Gamma'} \subtype \Delta'$ as required.

  \end{description}
\end{proof}
\end{toappendix}

Finally, we can state the overall correctness of inference for modules.

\begin{theoremrep}[Soundness and completeness of module inference]\label{thm:inf-mod-correctness}
  Suppose $\Gamma$ and $\Gamma'$ are closed constrained type environments, $m$ a module and $\Gamma \types m \infers \Gamma'$.  Then, for all type environments $\Delta$:
  \[
    \bananas{\Gamma} \types m : \Delta \quad{\textit{iff}}\quad \bananas{\Gamma'} \subtype \Delta
  \]
\end{theoremrep}
\begin{proof}
  Follows immediately from Lemmas \ref{lem:mod-inf-soundness} and \ref{lem:mod-inf-completeness}.
\end{proof}

\section{Saturation}\label{sec:saturation}

The solvability of constraints can be determined by a process of saturation under all possible consequences.  This is a generalisation of the transitive closure of simple inclusion constraint graphs, and a particular instance of Horn clause resolution more generally.  For our constraint language, saturated constraint sets have a remarkable property: they can be restricted to any subset of their variables whilst preserving solutions.



\begin{definition}[Atomic constraints]
 A constraint is said to be \emph{atomic} just if its body is one of the following four shapes:
  \[
    X(\underline{d}) \subseteq Y(\underline{d})
    \qquad
    X(\underline{d}) \subseteq \{k_1,\ldots,k_m\}
    \qquad
    k \in X(\underline{d})
    \qquad
    k \in \emptyset
  \]
  An atomic constraint is said to be \emph{trivially unsatisfiable} if it is of shape $\emptyset\ ?\ k \in \emptyset$.  A constraint set is said to be \emph{trivially unsatisfiable} just if it contains a trivially unsatisfiable constraint.
\end{definition}

  By applying standard identities of basic set theory, every constraint is equivalent to a set of atomic constraints.
  In particular, a constraint of the form $k \in \{k_1,\ldots,k_m\}$ is equivalent to the empty set of atomic constraints (i.e. can be eliminated) whenever $k$ is one of the $k_i$.

\begin{definition}[Saturated constraint sets]
  An atomic constraint set, i.e. one that only contains atomic constraints, is said to be \emph{saturated} just if it is closed under the saturation rules in Figure~\ref{fig:resolution}.
  We write $\Sat(C)$ for the saturated atomic constraint set obtained by iteratively applying the saturation rules to $C$.  Note, c.f. Remark \ref{rmk:Horn}, all three rules correspond to special cases of resolution.
\end{definition}

\begin{figure*}
  \[
    \begin{array}{c}
    \prftree[l]{\rlnm{Transitivity}}{
      \phi\ ?\ S_1 \subseteq S_2
    }
    {
      \psi\ ?\ S_2 \subseteq S_3
    }
    {
      \phi \cup \psi\ ?\ S_1 \subseteq S_3 
    }
    \\[6mm]
    \prftree[l]{\rlnm{Satisfaction}}{
      \phi\ ?\ k \in X(d)
    }
    {
      \psi,\, k \in X(d)\ ?\ S_1 \subseteq S_2
    }
    {
      \phi \cup \psi ?\ S_1 \subseteq S_2
    }
    \\[6mm]
    \prftree[l]{\rlnm{Weakening}}{
      \phi\ ?\ X(d) \subseteq Y(d)
    }
    {
      \psi,\, k \in Y(d)\ ?\ S_1 \subseteq S_2
    }
    {
      \phi \cup \psi,\, k \in X(d)\ ?\ S_1 \subseteq S_2
    }
    \end{array}
  \]
\caption{Saturation rules.}\label{fig:resolution}
\end{figure*}

The \rlnm{Transitivity} rule closes subset inequalities under transitivity, but must keep track of the associated guards by taking the union.  
The \rlnm{Satisfaction} rule allows for a guard atom $k \in X(d)$ to be dropped whenever the same atom constitutes the body of another constraint in the set (but the other guards from both must be preserved).
Finally, the \rlnm{Weakening} rule allows for replacing $Y(d)$ in a guard by $X(d)$ when it is known to be no larger, thus weakening the constraint.
Saturation under these rules preserves and reflects solutions:

\begin{theoremrep}[Saturation equivalence]\label{thm:equi-consistency}
  For any assignment $\theta$, $\theta \models C$ iff $\theta \models \Sat(C)$.
\end{theoremrep}
\begin{proof}
We shall show that the resolution rules preserve solutions (naturally they reflect solutions too, since they do not remove constraints).
In each case we shall assume the derived guard holds, otherwise there is nothing to show.
  \begin{itemize}
    \item Suppose $\phi\ ?\ S_1 \subseteq S_2$ and $\psi\ ?\ S_2 \subseteq S_3$ appear in $C$, and both $\theta$ satisfies both $\phi$ and $\psi$.
      Then we must have $\theta S_1 \subseteq \theta S_2$ and $\theta S_2 \subseteq \theta S_3$, as $\theta$ solves $C$.
      From transitivity of subset relation, it follows that $\theta S_1 \subseteq \theta S_2$. If the guards do not hold there is nothing to show.

    \item
      Suppose $\phi\ ?\ \dom(X(d)) \subseteq \dom(Y(d))$ and $\psi \cup (Y, \ul{d}) \mapsto k\ ?\ S_1 \subseteq S_2$ appear in $C$, and that $\theta$ satisfies $\phi \cup \psi \cup (X, \ul{d}) \mapsto k$.
      Therefore $\dom((\theta X)(d)) \subseteq \dom((\theta Y)(d))$ as $\theta$ solves $C$.
      Additionally, $k \in \dom((\theta X)(\ul{d}))$, and so $k \in \dom((\theta Y)(\ul{d}))$.
      Thus $\psi \cup (Y, \ul{d}) \mapsto k$ holds under $\theta$, and $\theta S_1 \subseteq \theta S_2$.

    \item
      Suppose $\phi\ ?\ k \in \dom(X(d))$ and $(X, \ul{d}) \mapsto k \cup \psi\ ?\ S_1 \subseteq S_2$ appear in $C$, and $\theta$ satisfies both $\phi$ and $\psi$.
      Then we have that $k \in \dom((\theta X)(d))$.
      Clearly the guard of $S_1 \subseteq S_2$ then must also hold and $\theta S_1 \subseteq \theta S_2$ as required.
  \end{itemize}
\end{proof}

\begin{toappendix}
\begin{definition}\label{def:theta}
  We say that $\tau$ is a \emph{partial solution} of some constraint set $C$, if $\tau$ solves the restriction of $C$ to the domain of $\tau$, i.e. $\tau$ solves $\restr{C}{\dom(\tau)}$.

  Additionally, we construct the \emph{extended solution} $\theta_\tau$ (or just $\theta$ when $\tau$ is implied) as follows.
  For each refinement variable $X$ not in the domain of $\tau$, define $(\theta X)(d)(k) = \ul{\Delta}(d)(k)$ whenever there exists some $\phi\ ?\ S \subseteq \dom(X(d)) \in C$, such that $\phi$ holds under $\tau$, and $k \in \tau S$.
\end{definition}

\begin{lemma}\label{lem:theta-tau}
  Suppose $\tau$ is a partial solution of a saturated constraint set $C$.
  For any constraint $\phi\ ?\ S_1 \subseteq S_2 \in C$ such that $\phi$ satisfied by $\theta_\tau \circ \tau$, then there is a constraint with the same body, i.e. of the form $\psi\ ?\ S_1 \subseteq S_2$, such that $\tau$ satisfies $\psi$.
\end{lemma}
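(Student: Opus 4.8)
Write $\theta$ for the extended solution $\theta_\tau$ of Definition~\ref{def:theta} (so $\theta$ restricted to $\dom(\tau)$ is $\tau$). The crucial preliminary observation is that, by that definition, whenever $Z \notin \dom(\tau)$ and $k \in \theta(Z)(d)$ there is a \emph{single} constraint $\phi'\ ?\ S \subseteq Z(d) \in C$ whose guard $\phi'$ is satisfied by $\tau$ and which has $k \in \tau\mng{S}$; and since $C$ is atomic (being saturated), and since ``$\tau$ satisfies $\phi'$'' already forces every refinement variable of $\phi'$ to lie in $\dom(\tau)$, such a witness necessarily has $\phi'$ built over $\dom(\tau)$ only and has $S$ equal either to the singleton $\{k\}$ or to $W(d)$ for some $W \in \dom(\tau)$ with $k \in \tau(W)(d)$. (That a witness of this \emph{one-step, grounded} form always exists is exactly where closure of $C$ under \rlnm{Transitivity} is used.)

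I would then prove the statement by induction on $\mu(\phi)$, the number of guard atoms $k \in X(d)$ in $\phi$ with $X \notin \dom(\tau)$. When $\mu(\phi) = 0$ every atom of $\phi$ is over $\dom(\tau)$, so $\theta$ satisfying $\phi$, together with $\restr{\theta}{\dom(\tau)} = \tau$, gives that $\tau$ satisfies $\phi$, and $\psi \coloneqq \phi$ works. When $\mu(\phi) > 0$, fix a bad atom $k \in Z(d) \in \phi$ and write $\phi = \phi'' \cup \{k \in Z(d)\}$. Since $\theta$ satisfies $\phi$ we have $k \in \theta(Z)(d)$ and hence, by the observation above, a witness $\phi'\ ?\ S \subseteq Z(d) \in C$ of grounded form. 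If $S = \{k\}$, then $\phi'\ ?\ k \in Z(d) \in C$ and \rlnm{Satisfaction} applied to it and $\phi\ ?\ S_1 \subseteq S_2$ puts $\phi' \cup \phi''\ ?\ S_1 \subseteq S_2$ into $C$; if $S = W(d)$ with $W \in \dom(\tau)$, then \rlnm{Weakening} applied to $\phi'\ ?\ W(d) \subseteq Z(d)$ and $\phi\ ?\ S_1 \subseteq S_2$ puts $\phi' \cup \phi'' \cup \{k \in W(d)\}\ ?\ S_1 \subseteq S_2$ into $C$. In either case the new constraint has the same body $S_1 \subseteq S_2$; its guard is still satisfied by $\theta$ ($\phi''$ since $\phi'' \subseteq \phi$, and $\phi'$ and $k \in W(d)$ since these are already satisfied by $\tau$ and $\theta$ extends $\tau$); and its $\mu$-value is strictly smaller, since we dropped the bad atom $k \in Z(d)$ and added only atoms over $\dom(\tau)$. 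The induction hypothesis applied to the new constraint supplies $\psi\ ?\ S_1 \subseteq S_2 \in C$ with $\tau$ satisfying $\psi$, which is exactly what is wanted (and automatically $\psi$ is over $\dom(\tau)$).

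The main obstacle is getting the preliminary observation right --- in particular, understanding that $\theta_\tau$ is defined by a one-step rule rather than a recursion, and that this is legitimate only because $C$ is transitively closed. Granting that, the remainder is a two-way case split on the shape of the witnessing constraint, an application of the corresponding saturation rule (\rlnm{Satisfaction} or \rlnm{Weakening}), and the bookkeeping that the count of ``bad'' guard atoms strictly decreases --- all routine.
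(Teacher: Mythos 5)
Your proof is correct and follows essentially the same route as the paper's: induction on the number of guard atoms over variables outside $\dom(\tau)$, extracting the witnessing constraint from the definition of $\theta_\tau$, and discharging the bad atom via \rlnm{Satisfaction} or \rlnm{Weakening} according to whether the witness's left-hand side is a singleton or a variable over $\dom(\tau)$. (One small quibble: the existence of the one-step witness is immediate from Definition~\ref{def:theta} itself rather than a consequence of closure under \rlnm{Transitivity}, which is instead needed later, in Lemma~\ref{lem:sat-consistency}.)
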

\begin{proof}
  Our proof shall be by induction of the cardinality of $\{ (X,\ k) \mid X \not\in \dom(\tau) \}$.

  If the cardinality is zero, then $\tau$ must satisfy $\phi$ and so we are done.

  We now consider the case when the cardinality is $n + 1$, and let us assume the induction hypothesis holds for $n$.
  Suppose $(X,\ k) \in \phi = k$ and $X \not\in \dom(\tau)$.
  As $\theta_\tau \circ \tau$ satisfies $\phi$ we have that that $k \in \dom((\theta_\tau X)(\ul{d}))$ for some $\ul{d}$.
  And so by construction $\psi\ ?\ S \subseteq \dom(X(\ul{d})) \in C$ with $k \in \tau S$ and $\tau$ satisfying $\psi$.
  As $C$ is saturated $S$ must either be the singleton $k$, or of the form $\dom(Y(\ul{d}))$:
  \begin{itemize}
    \item
      In the first case we may apply the rule \rlnm{Satisfaction} as $C$ is saturated to conclude $\psi \cup \phi - (X,\ k)\ ?\ S_1 \subseteq S_2 \in C$.

    \item
      In the second case we may apply the rule \rlnm{Weakening} to conclude $\psi \cup \phi \cup (Y,\ k) - (X,\ k)\ ?\ S_1 \subseteq S_2 \in C$.
  \end{itemize}

  In either case we are now left with a guard which $\theta_\tau \circ \tau$ satisfies and has $n$ pairs with refinement variables that are not in the domain of $\tau$, hence we can apply the induction hypothesis.
\end{proof}

This lemma expresses the fact that the extended solution $\theta_\tau$ does not make any arbitrary choices --- whenever it satisfies the guard of a constraint, the constraint must already have been satisfied according to resolution.

\begin{lemma}\label{lem:sat-consistency}
  Suppose $C$ is a saturated constraint set which doesn't contain $\emptyset\ ?\ k \in \emptyset$, and suppose $\tau$ is a partial solution $C$.
  Then $\theta_\tau \circ \tau$ solves $C$.
\end{lemma}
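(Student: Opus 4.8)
The plan is to prove directly that the combined assignment $\rho := \theta_\tau \circ \tau$ (agreeing with $\tau$ on $\dom(\tau)$ and with $\theta_\tau$ elsewhere) satisfies every constraint of $C$. So I would fix an arbitrary $\gamma = (\phi\ ?\ S_1 \subseteq S_2) \in C$, assume $\rho \models \phi$, and show $\rho \models S_1 \subseteq S_2$. The first move, in every case, is to invoke Lemma~\ref{lem:theta-tau}: since $\rho$ satisfies the guard $\phi$, there is another constraint $\psi\ ?\ S_1 \subseteq S_2 \in C$ with the \emph{same} body whose guard $\psi$ is already satisfied by $\tau$ and mentions only variables of $\dom(\tau)$. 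This reduces the problem to reasoning about $\tau$, the partial-solution hypothesis, and the saturation closure properties of $C$; what remains is a case analysis on the shape of $S_1 \subseteq S_2$ and on whether the refinement variables it mentions lie inside $\dom(\tau)$.

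The routine cases are those in which every refinement variable occurring in $S_1 \subseteq S_2$ lies in $\dom(\tau)$: then $\psi\ ?\ S_1 \subseteq S_2$ is itself a constraint of $\restr{C}{\dom(\tau)}$, and since $\tau$ solves that restriction and $\tau \models \psi$ we get $\tau \models S_1 \subseteq S_2$, hence $\rho \models S_1 \subseteq S_2$ because $\rho$ agrees with $\tau$ on $\dom(\tau)$. This disposes of the bodies $k \in X(\ul{d})$, $X(\ul{d}) \subseteq Y(\ul{d})$ and $X(\ul{d}) \subseteq \{k_1,\ldots,k_m\}$ whenever all of their variables are in $\dom(\tau)$. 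The body $k \in \emptyset$ also falls here: if $\psi = \emptyset$ then $\emptyset\ ?\ k \in \emptyset \in C$, contradicting the hypothesis; otherwise $\psi\ ?\ k \in \emptyset$ lies in $\restr{C}{\dom(\tau)}$, so $\tau$ solving it together with $\tau \models \psi$ forces $k \in \tau\mng{\emptyset} = \emptyset$, a contradiction --- so $\rho$ cannot actually satisfy the guard of such a constraint and $\gamma$ holds vacuously.

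The only delicate cases are those in which the body mentions a variable $X \notin \dom(\tau)$, so that $\rho(X)(\ul{d}) = \theta_\tau(X)(\ul{d})$ is governed by Definition~\ref{def:theta}. The key observation is that $\theta_\tau(X)(\ul{d})$ contains exactly the constructors \emph{forced} by lower-bound constraints $\phi_0\ ?\ S \subseteq X(\ul{d}) \in C$ whose guard $\phi_0$ is true under $\tau$ (with $k \in \tau\mng{S}$); so to verify an upper bound on $X$ one composes each such lower-bound constraint with it. Concretely, when the body is $X(\ul{d}) \subseteq Y(\ul{d})$ or $X(\ul{d}) \subseteq \{k_1,\ldots,k_m\}$ with $X \notin \dom(\tau)$, I take any $k \in \theta_\tau(X)(\ul{d})$ witnessed by $\phi_0\ ?\ S \subseteq X(\ul{d}) \in C$, and apply the \rlnm{Transitivity} rule --- which, since $C$ is saturated, puts $\phi_0 \cup \psi\ ?\ S \subseteq Y(\ul{d})$ (respectively $\phi_0 \cup \psi\ ?\ S \subseteq \{k_1,\ldots,k_m\}$, after the standard normalisation of finite-set inclusions to atomic form) into $C$. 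If $Y \in \dom(\tau)$, or for the finite-set body, this new constraint lies in $\restr{C}{\dom(\tau)}$ with guard true under $\tau$, and $\tau$ solving the restriction gives $k \in \tau\mng{Y(\ul{d})}$ (resp. $k \in \{k_1,\ldots,k_m\}$); if instead $Y \notin \dom(\tau)$, the same new constraint is precisely of the shape Definition~\ref{def:theta} consumes to place $k$ into $\theta_\tau(Y)(\ul{d}) = \rho(Y)(\ul{d})$. The symmetric sub-case --- body $X(\ul{d}) \subseteq Y(\ul{d})$ with $Y \notin \dom(\tau)$ --- and the body $k \in X(\ul{d})$ with $X \notin \dom(\tau)$ go the same way, the latter handled directly by Definition~\ref{def:theta} taking $S = \{k\}$.

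I expect the main obstacle to be exactly this last family of cases: one must check that composing constraints via \rlnm{Transitivity} always lands either inside $\restr{C}{\dom(\tau)}$ (where the partial-solution hypothesis applies) or in precisely the form Definition~\ref{def:theta} uses to build $\theta_\tau$, so that the propagation performed by $\theta_\tau$ and the closure guaranteed by saturation line up exactly --- in particular that no composed constraint is "lost" by involving a variable outside $\dom(\tau)$ that is neither $X$ nor a target of the propagation. Lemma~\ref{lem:theta-tau} is what makes all of this go through, since it licenses replacing the $\rho$-satisfied guard of $\gamma$ by a $\tau$-satisfied guard over $\dom(\tau)$ without changing the body.
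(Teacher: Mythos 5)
Your proposal is correct and follows essentially the same route as the paper's proof: reduce the guard to a $\tau$-satisfied one via Lemma~\ref{lem:theta-tau}, then case-split on whether the body's refinement variables lie in $\dom(\tau)$, using the partial-solution hypothesis for the in-domain cases and the combination of \rlnm{Transitivity}-saturation with the construction of $\theta_\tau$ in Definition~\ref{def:theta} for the rest. The only (harmless) difference is that you explicitly dispatch the $k \in \emptyset$ body using the no-trivially-unsatisfiable-constraint hypothesis, which the paper's case analysis leaves implicit.
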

\begin{proof}
  We need only consider the constraints in $C$ which reference some refinement variable not in the domain of $\tau$.

  Let $\phi\ ?\ S_1 \subseteq S_2$ be such a constraint.
  If $\theta_\tau \circ \tau$ doesn't hold on $\phi$ then there is nothing to show.
  On the other hand, we may apply the above lemme to deduce that there is a constraint $\psi\ ?\ S_1 \subseteq S_2 \in C$ with $\psi$ satisfied by $\tau$.

  We now consider the possible forms of $S_1$ and $S_2$:

  \begin{itemize}
    \item
      Suppose both $S_1$ and $S_2$ are in the domain of $\tau$.
      Then $\psi\ ?\ S_1 \subseteq S_2$ is in the restriction of $C$ to the domain of $\tau$ and as $\tau$ satisfies $\psi$ we have that $\tau S_1 \subseteq \tau S_2$ as required.

    \item
      Suppose $S_1$ is in the domain of $\tau$, but $S_2 = \dom(X(d))$ for some $X \not\in \dom(\tau)$.
      If $k \in \tau S_1$, then by construction $k \in \dom((\theta X)(d))$, and so the subset relation holds.

    \item
      Suppose $S_1 = \dom(X(d))$ for some $X \not\in \dom(\tau)$.
      Let $k \in \dom((\theta X)(d))$, we shall show that $k \in (\theta \circ \tau) S_2$.
      By construction $\psi'\ ?\ S \subseteq \dom(X(d)) \in C$ such that $k \in \tau S$ and for some $\psi'$ which $\tau$ satisfies.
      As $C$ is saturated we can deduce that $\psi' \cup \psi\ ?\ S \subseteq S_2 \in C$.
      Note that $\psi'$ and $\psi$ hold under $\tau$.

      If the constraint $S_2$ is in the domain of $\tau$, then $\tau S \subseteq \tau S_2$ by the definition of a partial solution, and so $k \in \tau S_2$ as required.

      If $S_2=\dom(Y(d))$ for some $Y \not\in \dom(\tau)$, however, by the definition of $\theta_\tau$, $k \in \dom((\theta_\tau Y)(d))$.
  \end{itemize}
\end{proof}

\end{toappendix}

\noindent
If there are no trivially unsatisfiable constraints in $\Sat(C)$, then a solution can be constructed as follows.
  For each variable $X$ occurring in $C$, define a function $\theta X$ as follows:
  \[
    (\theta X)(\underline{d}) \coloneqq \{k \mid k \in X(d)\ \text{is in}\ \Sat(C) \}
  \] 
  Then $\theta$ solves $\Sat(C)$.
 Conversely, if there is a trivially unsatisfiable constraint in $\Sat(C)$, then $\Sat(C)$ is unsolvable and, by the equivalence theorem, it follows that $C$ has no solution either.

\begin{theoremrep}
  $C$ is unsatisfiable iff $\Sat(C)$ is trivially unsatisfiable.
\end{theoremrep}
\begin{proof}
  The consistency of $\Sat(C)$ follows from Lemma \ref{lem:sat-consistency} with partial solution $\tau$ being empty and then Theorem \ref{thm:equi-consistency} gives the consistency of $C$.

  In the backward direction, Theorem \ref{thm:equi-consistency} gives consistency of $\Sat(C)$.  Since no substitution can solve $k \in \emptyset$, it follows that therefore $k \in \emptyset \notin \Sat(C)$.
\end{proof}




\section{Restriction and Complexity}\label{sec:analysis}

\newcommand{\bigoh}{\mathcal{O}}

In practice, having established that a constraint set is solvable, we are only interested in the solutions for a certain subset of the refinement variables.
For example if, as we have seen, the constraints $C$ describe a set of types $\{T\theta \mid \theta\ \text{solves}\ C\}$ of the module level functions, then we may consider two solutions $\theta_1$ and $\theta_2$ to be the same whenever they agree on $\frv(T)$.
We call the free refinement variables of $T$, in this case, the \emph{interface variables}.

\begin{definition}
  Let $C$ be a saturated constraint set and let $I$ be some set of refinement variables, called the \emph{interface variables}.
  Then define the \emph{restriction of $C$ to $I$}, written $\restr{C}{I}$, as the set $\{\phi\ ?\ S_1 \subseteq S_2 \in C \mid \frv(\phi) \cup \frv(S_1) \cup \frv(S_2) \subseteq I \}$.
\end{definition}

The restriction of $C$ to $I$ is quite severe, since it simply \emph{discards} any constraint not solely comprised of interface variables.
However, a remarkably strong property of the rules in Figure \ref{fig:resolution} is that, whenever $C$ is solvable, \emph{every} solution of $\restr{\Sat(C)}{I}$ may be extended\footnote{In the sense of Theorem \ref{thm:restr-ext}.} to a solution of $\Sat(C)$ (and therefore of $C$) --- independent of the choice of $I$!  
Since every solution of $\Sat(C)$ trivially restricts to a solution of $\restr{\Sat(C)}{I}$ (the latter has fewer constraints over fewer variables), it follows that the solutions of $\restr{\Sat(C)}{I}$ are \emph{exactly} the restrictions of the solutions of $C$.


\begin{example}\label{ex:sat}
Consider the following constraint set $C$ by way of an illustration:
\[
  \begin{array}{rlcrl}
    \star&\mathsf{Cst} \in X_1(\mathsf{Lam}) &\phantom{woo}&
    &X_1(\mathsf{Lam}) \subseteq X_2(\mathsf{Lam}) \\
    &\mathsf{FVr} \in X_2(\mathsf{Lam})\ ?\ X_2(\mathsf{Lam}) \subseteq X_3(\mathsf{Lam}) &\phantom{woo}&
    \star&X_3(\mathsf{Lam}) \subseteq \{\mathsf{FVr},\,\mathsf{Cst}\}
  \end{array}
\]
This set is not saturated and, consequently, there is no guarantee that the restriction of this set to an interface results in a constraint system whose solutions can generally be extended to solutions of the original set $C$.  For example, if we restrict this set to the interface $I = \{X_1,X_3\}$, the effect will be to retain the two starred constraints.  But then 
\[
  \theta(X)(\ul{d}) \coloneqq
    \begin{cases}
      \{ \mathsf{Cst},\, \mathsf{FVr},\, \mathsf{App} \} & \text{if $X = X_1$ and $\ul{d} = \mathsf{Lam}$} \\
      \emptyset & \text{otherwise}
    \end{cases}
\]
is a solution of $\restr{C}{I}$ that does not extend to any solution of $C$.
However, after saturation, $\Sat(C)$ consists of the following:
\[
  \small
  \begin{array}{rlcrl}
   \star &\mathsf{Cst} \in X_1(\mathsf{Lam}) &\phantom{woo}&
    &X_1(\mathsf{Lam}) \subseteq X_2(\mathsf{Lam}) \\
    &\mathsf{Cst} \in X_2(\mathsf{Lam}) &\phantom{woo}&
    &\mathsf{FVr} \in X_2(\mathsf{Lam})\ ?\ X_2(\mathsf{Lam}) \subseteq X_3(\mathsf{Lam}) \\
    &\mathsf{FVr} \in X_2(\mathsf{Lam})\ ?\ X_1(\mathsf{Lam}) \subseteq X_3(\mathsf{Lam}) &\phantom{woo}&
    &\mathsf{FVr} \in X_2(\mathsf{Lam})\ ?\ \mathsf{Cst} \in X_3(\mathsf{Lam}) \\
    &\mathsf{FVr} \in X_1(\mathsf{Lam})\ ?\ X_2(\mathsf{Lam}) \subseteq X_3(\mathsf{Lam}) &\phantom{woo}&
   \star &\mathsf{FVr} \in X_1(\mathsf{Lam})\ ?\ X_1(\mathsf{Lam}) \subseteq X_3(\mathsf{Lam}) \\
   \star &\mathsf{FVr} \in X_1(\mathsf{Lam})\ ?\ \mathsf{Cst} \in X_3(\mathsf{Lam}) &\phantom{woo}&
   \star &X_3(\mathsf{Lam}) \subseteq \{\mathsf{FVr},\,\mathsf{Cst}\} \\
    &\mathsf{FVr} \in X_2(\mathsf{Lam})\ ?\ X_2(\mathsf{Lam}) \subseteq \{\mathsf{FVr},\,\mathsf{Cst}\} &\phantom{woo}&
    &\mathsf{FVr} \in X_2(\mathsf{Lam})\ ?\ X_1(\mathsf{Lam}) \subseteq \{\mathsf{FVr},\,\mathsf{Cst}\} \\
    &\mathsf{FVr} \in X_1(\mathsf{Lam})\ ?\ X_2(\mathsf{Lam}) \subseteq \{\mathsf{FVr},\,\mathsf{Cst}\} &\phantom{woo}&
   \star &\mathsf{FVr} \in X_1(\mathsf{Lam})\ ?\ X_1(\mathsf{Lam}) \subseteq \{\mathsf{FVr},\,\mathsf{Cst} \}
  \end{array}
\]
In particular, the constraint $\mathsf{FVr} \in X_1(Lam)\ ?\ X_1(Lam) \subseteq X_3(Lam)$ which will be retained in the restriction $\restr{\Sat(C)}{I}$, which consists of the starred constraints.  
Consequently, the above assignment $\theta$ is ruled out.
Indeed, one can easily verify that every solution of $\restr{\Sat(C)}{I}$ extends to a solution of $\Sat(C)$ and hence of $C$.
\end{example}

\begin{theorem}[Restriction/Extension]\label{thm:restr-ext}
  Suppose $C$ is a saturated constraint set and $I$ is a subset of variables.  Let $\theta$ be a solution for $\restr{C}{I}$.  Then there is a solution $\theta'$ for $C$ satisfying, for all $X \in I$:
  \[
    \theta'(X)(\ul{d}) = \theta(X)(\ul{d})
  \]
\end{theorem}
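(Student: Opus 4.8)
The plan is to realise $\theta'$ as the canonical extension of $\theta$ supplied by the partial-solution machinery of Definition~\ref{def:theta}, so that essentially all of the work is already packaged in Lemma~\ref{lem:sat-consistency}.

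First I would dispose of the trivial-unsatisfiability side condition that this machinery needs. A trivially unsatisfiable constraint has the form $\emptyset\ ?\ k \in \emptyset$, and its set of free refinement variables is empty, hence trivially contained in $I$; so if $C$ contained such a constraint it would also occur in $\restr{C}{I}$, contradicting the hypothesis that $\theta$ solves $\restr{C}{I}$. Thus $C$ contains no trivially unsatisfiable constraint. Next I would take $\tau \coloneqq \restr{\theta}{I}$, the restriction of the assignment $\theta$ to the variables in $I$, and check that $\tau$ is a \emph{partial solution} of $C$ in the sense of Definition~\ref{def:theta}: its domain is $I$, so $\restr{C}{\dom(\tau)} = \restr{C}{I}$, and every constraint of $\restr{C}{I}$ mentions only variables of $I$, on which $\tau$ and $\theta$ agree; hence $\tau \models \restr{C}{\dom(\tau)}$ follows from $\theta \models \restr{C}{I}$.

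Now I would invoke Lemma~\ref{lem:sat-consistency} directly: since $C$ is saturated, is not trivially unsatisfiable, and $\tau$ is a partial solution, the extended assignment $\theta_\tau \circ \tau$ solves all of $C$. Setting $\theta' \coloneqq \theta_\tau \circ \tau$ discharges the solvability requirement. For the agreement requirement, I would read off from Definition~\ref{def:theta} that $\theta_\tau$ only assigns values to refinement variables \emph{outside} $\dom(\tau) = I$; hence for every $X \in I$ we have $\theta'(X) = \tau(X) = \theta(X)$, and in particular $\theta'(X)(\ul{d}) = \theta(X)(\ul{d})$ for every $\ul{d}$, which is exactly what is claimed.

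The substance of the argument, of course, lives in Lemmas~\ref{lem:theta-tau} and~\ref{lem:sat-consistency}, which a self-contained proof would have to establish, and I expect Lemma~\ref{lem:theta-tau} to be the genuine obstacle. Its content is that the extended solution $\theta_\tau$ never makes an arbitrary choice: whenever $\theta_\tau \circ \tau$ satisfies the guard $\phi$ of a constraint $\phi\ ?\ S_1 \subseteq S_2 \in C$, there is already a constraint $\psi\ ?\ S_1 \subseteq S_2 \in C$ with the same body and a guard $\psi$ satisfied by $\tau$ alone. The proof is an induction on the number of guard atoms $k \in X(d)$ of $\phi$ with $X \notin \dom(\tau)$: each such atom made true by $\theta_\tau$ must, by the very definition of $\theta_\tau$, arise from some $\psi'\ ?\ S \subseteq X(d) \in C$ with $k \in \tau S$; because $C$ is saturated, either $S$ is the singleton $\{k\}$ and \rlnm{Satisfaction} applies, or $S = Y(d)$ and \rlnm{Weakening} applies, and in both cases one obtains a constraint with the same body but strictly fewer undetermined guard atoms, to which the induction hypothesis applies. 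Given Lemma~\ref{lem:theta-tau}, Lemma~\ref{lem:sat-consistency} itself is a routine case split on the shapes of $S_1$ and $S_2$ (both in $\dom(\tau)$; target outside $\dom(\tau)$; source outside $\dom(\tau)$), using the definition of $\theta_\tau$ and, in the last subcase, once more the closure of $C$ under \rlnm{Transitivity}.
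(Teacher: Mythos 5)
Your proposal is correct and follows essentially the same route the paper intends: restrict $\theta$ to $I$ to obtain a partial solution $\tau$ in the sense of Definition~\ref{def:theta}, extend it to $\theta_\tau \circ \tau$, and conclude via Lemma~\ref{lem:sat-consistency} (with the real work living in Lemma~\ref{lem:theta-tau}). Your observation that a trivially unsatisfiable constraint would survive every restriction, and hence cannot be present when $\restr{C}{I}$ is solvable, correctly discharges the side condition of Lemma~\ref{lem:sat-consistency}.
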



Although our inference procedure is compositional, i.e. it breaks modules down into top-level definitions, and terms down into sub-terms that can be analysed in isolation, this is no guarantee of its efficiency.
As we have described it in Section~\ref{sec:inference}, the number of constraints associated with a function definition depends on the size of the definition --- constraints are generated at most syntax nodes and propagated to the root.
In fact, as is well known for constrained type inference, the situation is worse than simply this, because a whole set of constraints is imported from the environment when inferring for a program variable $x$.
The number of constraints associated with $x$ will again depend on the size of the definition of $x$ and the number of constraints associated with any functions that $x$ depends on, and so the number of constraints can become exponential in the number of function definitions\footnote{Although it is known that this can be avoided by a clever  representation in the case of constraints that are only simple variable/variable inclusions \cite{RN52}.}.

Let us fix $N$ to be the number of module-level function definitions, $K$ the maximum number of constructors associated with any datatype and $D$ the maximum number of datatypes associated with any slice (for \textsf{Lam}, this is 2).
A simple analysis of the shape of constraints yields the bound:

\begin{toappendix}
  Let $S$ be the size of the largest function definition.
  Let $K$ be the maximum number of constructors associated with any datatype, let $D$ be the maximum number of datatypes in any slice (i.e. for \textsf{Lam} this is 2) and let $Q$ be the maximum size of any underlying type.
\end{toappendix}

\begin{lemmarep}\label{lem:max-constraints}
  There are $\bigoh(Kv^2D \cdot 2^{vKD+K})$ atomic constraints over $v$ refinement variables.
\end{lemmarep}
\begin{proof}
  Each non-trivially-unsatisfiable atomic constraint $\phi\ ?\ S_1 \subseteq S_2$ can be understood as a choice one of $2^K$ subsets of constructors for each of the $vD$ pairs of refinement variable and underlying datatype (for $\phi$, we assume that each variable is associated with a particular slice), followed by a choice of one of $vD$ variable and datatype pairs for the head, which will appear in either the $S_1$ or $S_2$ position depending on the next choice, and then either: another variable (the underlying datatype is determined by previous choice) for the other position, one of $K$ constructors (for $S_1$) or a choice of the $2^K$ subsets of constructors or a variable datatype pair (for $S_2$).
  Therefore, there are at most $2^{KvD} \cdot vD \cdot (v + K + 2^K)$ possible constraints over these refinement variables and:
  \[
    2^{KvD} \cdot vD \cdot (v + K + 2^K) = v^2D2^{KvD} + KvD2^{KvD} + vD2^{KvD+K)} = \bigoh(Kv^2D \cdot 2^{KvD + K})
  \]
\end{proof}

Suppose $\Gamma \types e \infers T,\,C$.  As it stands, the number of variables $v$ occurring in $C$ will depend upon the size of $e$ and the size of every definition that $e$ depends on.
We use restriction to break this dependency between the size of constraint sets and the size of the program.  
We compute $\restr{C}{I}$ for each constraint set $C$ generated by an inference (i.e. at every step) with the interface $I$ taken to be the free variables of the context $ \frv(\Gamma) \cup \frv(T)$.
Consequently, the number of variables $v$ occurring in $\restr{C}{I}$ only depends on the number of refinement variables that are free in $\Gamma$ and $T$.
Since all the refinement variables of module level function types are generalised by \rlnm{IModD} (assuming, as is usual, that inference for modules occurs in a closed environment), it follows that the only free refinement variables in $\Gamma$ are those introduced during the inference of $e$, as a result of inferring under abstractions and case expressions. 
Thus $v$ becomes independent of the number of function definitions.

Moreover, if we assume that function definitions are in $\beta$-normal form and that the maximum case expression nesting depth within any given definition is fixed (i.e. does not grow with the size of the program) then it follows that the number of refinement variables free in the environment is bounded by the size of the underlying type of $e$.
Clearly, the number of free refinement variables in $T$ is also bounded by its underlying type.

Consequently, for a constraint set $C$ arising by an inference $\Gamma \types e \infers T,\,C$ and then restricted to its context, the number of constraints given by Lemma~\ref{lem:max-constraints} only depends on the size of the underlying types assigned to the $e$ and, in the case of datatypes, the size of their definitions (slices).  
If we consider scaling our analysis to larger and larger programs to mean programs consisting of more and more functions, with bounded growth in the size of types and the size of individual function definitions then we may reasonably consider all these parameters fixed.  Recall that $N$ is the number of function definitions in the module. We have:

\begin{toappendix}

Suppose the maximum level of \textsf{case} expression nesting is $M$ and let $V = M + 2Q$.

\begin{lemma}\label{lem:max-inferred}
  Suppose $\Gamma \types e \infers T,\,C$ with $C$ restricted to the context. The  number of constraints in $C$ is $\bigoh(KV^2D \cdot 2^{KVD+K})$.
\end{lemma}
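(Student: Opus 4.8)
The plan is to obtain Lemma~\ref{lem:max-inferred} as an immediate corollary of Lemma~\ref{lem:max-constraints}. That lemma bounds the number of atomic constraints over $v$ refinement variables by $\bigoh(Kv^2D \cdot 2^{vKD+K})$, a quantity that is monotone in $v$, so it suffices to show that the number $v$ of refinement variables occurring in the restricted set $\restr{C}{I}$ is at most $V = M + 2Q$ and then substitute. By the definition of restriction, $\frv(\restr{C}{I}) \subseteq I = \frv(\Gamma) \cup \frv(T)$, the interface used at this inference step, so $v \le |\frv(\Gamma)| + |\frv(T)|$ and it remains to bound the two summands. The bound $|\frv(T)| \le Q$ is immediate, since $T$ is an extended refinement whose underlying shape is the underlying type of $e$, and the number of injection variables appearing in it cannot exceed the size of that type.

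The real content is the bound $|\frv(\Gamma)| \le M + Q$, which I would prove by induction on the inference derivation $\Gamma \types e \infers T,\,C$, maintaining the invariant that the refinement variables free in the current environment are exactly those introduced by the enclosing applications of \rlnm{IAbs} and \rlnm{ICase}. The base case is module inference: in \rlnm{IModD} every refinement variable of a module-level scheme is generalised (the rule quantifies over all of $\frv(T)$), so the schemes carried in $\Gamma$ for already-defined functions are closed and contribute nothing. For the inductive step, an enclosing \rlnm{IAbs} adds $x:T_1$ with $T_1$ a fresh type of the known underlying shape $\ul{T_1}$, contributing at most $|\ul{T_1}|$ fresh variables; an enclosing \rlnm{ICase} adds bindings $\vv{x_i}:(\inj_X\vv{A})[\vv{T/\alpha}]$ whose free refinement variables lie in $\{X\} \cup \frv(\vv{T})$, since $\inj_X$ merely distributes the single scrutinee variable $X$ over the underlying constructor argument types $\vv{A}$. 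Exploiting $\beta$-normality of definitions --- so that the argument types of a maximal block of enclosing abstractions assemble into a shape-refinement of a single underlying type of size $\le Q$ --- together with the hypothesis that case nesting does not exceed $M$, one argues that these contributions aggregate to at most $Q + M$.

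The step I expect to be the main obstacle is exactly this aggregation. It is essential that $|\frv(\Gamma)|$ grows only additively, as $M + Q$, and not multiplicatively as something like $MQ$: the final bound is exponential in $v$, so a multiplicative blow-up here would make the analysis worst-case exponential in the nesting depth rather than in a fixed type-size parameter. Pinning this down requires care about how the type arguments $\vv{T}$ carried into case branches are charged --- establishing that across all enclosing case expressions they do not introduce more than a constant number of underlying types' worth of fresh variables, and that each enclosing case contributes only its single scrutinee variable $X$ beyond that. Once $v \le V$ is established, the statement follows directly by instantiating Lemma~\ref{lem:max-constraints} and using monotonicity of its bound in $v$.
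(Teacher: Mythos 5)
Your proposal is correct and follows essentially the same route as the paper: restriction confines the variables of $C$ to $\frv(\Gamma)\cup\frv(T)$, the environment contributes at most $Q$ variables from enclosing \rlnm{IAbs} steps and $M$ from enclosing \rlnm{ICase} steps (module-level schemes being closed by generalisation), $T$ contributes at most $Q$ more, and the bound follows by substituting $v \le V = M + 2Q$ into Lemma~\ref{lem:max-constraints}. The paper states this in one terse paragraph without the inductive scaffolding or the explicit worry about the type arguments carried into case bindings, but the underlying argument is the same.
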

\begin{proof}
  By applying restriction after each inference, we are guaranteed that the refinement variables occurring in $C$ are a subset of those occurring in $\Gamma$ and $T$.  The only free refinement variables in $\Gamma$ are those that were introduced as a result of inferring under a lambda abstraction (i.e. introduced by the \rlnm{IAbs} rule) or those introduced as a result of inferring under a case statement (i.e. introduced by the \rlnm{ICase} rule).  There are at most $Q$ introduced by abstractions and a further $M$ introduced by case matching (each case only introduces a single refinement variable $X$, independently of the complexity of the datatype of the scrutinees).  
  The type $T$ introduces at most another $Q$ many.  
  
\end{proof}

\begin{lemma}\label{lem:inf-complexity}
  The complexity of type inference is $\bigoh(N S K^5V^4D^2 \cdot 2^{K(VD(K+1)+2)})$
\end{lemma}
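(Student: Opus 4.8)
The plan is to read off the running time of the inference algorithm of Section~\ref{sec:inference} step by step, using Lemma~\ref{lem:max-inferred} to bound the variables that survive restriction at each node, Lemma~\ref{lem:max-constraints} to bound the size of a saturated set over that many variables, and a routine cost model for the resolution-style closure of Figure~\ref{fig:resolution}. First I would fix the shape of the computation: module inference (Figure~\ref{fig:inf-mod}) processes the $N$ top-level definitions one at a time via \rlnm{IModD}, and each definition is handled by one syntax-directed run of expression inference (Figure~\ref{fig:inf-exp}) over its body, visiting $\bigoh(S)$ syntax nodes; hence there are $\bigoh(NS)$ inference steps altogether. At each step the algorithm (i) collects the (already restricted) constraint sets returned for the immediate subexpressions --- and, at a variable occurrence \rlnm{IVar}, the restricted constraint component of that variable's scheme imported from $\Gamma$; (ii) adds $\bigoh(K)$ new atomic constraints and runs $\bigoh(K)$ subtyping subderivations of the form $\types T_1 \subtype T_2 \infers C$ (at most one per branch in \rlnm{ICase}, which is the worst case for fan-out); (iii) unions everything, computes $\Sat$ of the union, and tests for trivial unsatisfiability; and (iv) restricts the result to the interface $\frv(\Gamma)\cup\frv(T)$. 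Steps (ii) and (iv) and the satisfiability test are at most linear in, or a bounded fixpoint over, the size of the set, so the per-step cost is dominated by the saturation in (iii).

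The crux is therefore to bound, for an arbitrary node, the number $v$ of distinct refinement variables occurring in the set handed to $\Sat$, and I claim $v \le \tfrac{(K+1)V}{2}$, so that $v=\bigoh(KV)$ independently of the program size. By Lemma~\ref{lem:max-inferred}, restriction has already pruned each subderivation's set down to the variables free in that subderivation's context and result type; summing the contributions at a \rlnm{ICase} node (the worst case) we get the $\le M+Q$ variables free in $\Gamma$ (from the proof of Lemma~\ref{lem:max-inferred}: $\le Q$ from enclosing abstractions, $\le M$ from enclosing case scrutinees), the single scrutinee variable $X$, the $\le Q$ variables of the fresh result type $T$, and $\le Q$ variables from each of the $\le K$ branch result types $T_i$ returned by the recursive calls. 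The pattern-variable types $(\inj_X\vv A)[\vv T/\vv\alpha]$ introduce nothing new, since $\inj_X$ shares the single variable $X$ across the whole type and $\vv T$ is already the scrutinee type's parameter sequence, and subtyping subderivations introduce no fresh variables at all. Using $Q \le V/2$ in the sum $M+Q+1+Q+KQ$ gives the stated bound; the \rlnm{IAbs}, \rlnm{IApp}, \rlnm{ICon} and leaf cases have strictly smaller fan-out and are dominated by this one.

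With $v \le \tfrac{(K+1)V}{2}$ fixed, Lemma~\ref{lem:max-constraints} bounds the saturated set by $n_{\max} = \bigoh(Kv^2D\cdot 2^{vKD+K})$, and substituting $v$ makes the exponent $vKD+K=\tfrac{K(K+1)VD}{2}+K$. A saturation run examines $\bigoh(n_{\max}^2)$ ordered pairs of constraints (each of the three rules of Figure~\ref{fig:resolution} fires on a pair) and each firing manipulates guards of size $\bigoh(KvD)$, so its cost is $\bigoh(n_{\max}^2\cdot KvD)$; the exponential factor is $2^{2(\tfrac{K(K+1)VD}{2}+K)} = 2^{K(K+1)VD+2K} = 2^{K(VD(K+1)+2)}$, which is exactly the exponent in the statement. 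Collecting the polynomial factors --- the $(Kv^2D)^2$ from $n_{\max}^2$, the $KvD$ from guard manipulation, and the $\bigoh(K)$ subterm sets and subtyping subderivations combined at the node --- and simplifying with $v=\bigoh(KV)$ yields the per-step bound $\bigoh(K^5V^4D^2\cdot 2^{K(VD(K+1)+2)})$; multiplying by the $\bigoh(NS)$ steps gives the claimed total.

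The main obstacle is the variable count of the second paragraph: one must verify that restriction, applied eagerly after every inference step, genuinely confines the working constraint set at each node to $\bigoh(KV)$ variables --- a quantity independent of the number of function definitions --- and, in particular, that the $\le K$ branches of a case collectively contribute only $\bigoh(KQ)$ variables rather than $\bigoh(K)$ copies of the whole interface. This hinges on Lemma~\ref{lem:max-inferred} having already pruned the recursive subderivations and on the way \rlnm{ICase} shares a single refinement variable across all of a constructor's argument bindings. Once that count is pinned down, what remains is the arithmetic above together with a standard analysis of resolution closure and of the explicit fixed-point computation used for subtyping (which, being over types of bounded underlying size built from a slice of $\le D$ datatypes, is bounded by the same parameters and so is subsumed).
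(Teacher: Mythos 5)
Your overall strategy --- $\bigoh(NS)$ syntax nodes, per-node cost dominated by saturation, refinement variables confined by eager restriction (Lemma~\ref{lem:max-inferred}), constraint counts from Lemma~\ref{lem:max-constraints}, with \rlnm{ICase} as the worst case --- is exactly the paper's. The gap is in the final arithmetic, and it traces to your cost model for saturation. You charge $\bigoh(n_{\max}^2\cdot KvD)$ with $n_{\max}=\bigoh(Kv^2D\cdot 2^{vKD+K})$ the maximum closure size over the $v$ variables of the combined set. Expanding gives $\bigoh(K^3v^5D^3\cdot 2^{2(vKD+K)})$, and with $v=\Theta(KV)$ the polynomial factor is $K^8V^5D^3$, not the claimed $K^5V^4D^2$; the assertion that ``collecting the polynomial factors\ldots yields $K^5V^4D^2$'' does not follow from your own formula. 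The exponent matches only because you set $v\le(K+1)V/2$, but that inequality is not implied by your own tally either: you count $M+Q$ (environment) $+\,Q$ (result) $+\,1$ (scrutinee) $+\,KQ$ (branch results) $=V+KQ+1$, and $V+KQ+1>(K+1)V/2$ whenever $Q+1>(K-1)M/2$ (e.g.\ when $M=0$), so in general your derivation yields exponent $2(vKD+K)$ strictly larger than $K(VD(K+1)+2)$.

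The paper avoids both problems by charging saturation \emph{asymmetrically}: time $\bigoh(n\cdot Kv^2D\cdot 2^{KvD+K})$ for an \emph{input} of size $n$ over $v$ variables. At an \rlnm{ICase} node the input is a union of $\bigoh(K)$ sets, each \emph{already restricted} to an interface of at most $V$ variables, so $n=\bigoh(K^2V^2D\cdot 2^{KVD+K})$ carries $V$, not $KV$, in its exponent; only the second factor, the closure size, uses $v\le KV$. The product is then $K^5V^4D^2\cdot 2^{KVD+K^2VD+2K}=K^5V^4D^2\cdot 2^{K(VD(K+1)+2)}$, as stated. Your symmetric $n_{\max}^2$ model discards the fact that one side of the bound can be taken from the small, pre-restricted input, which is precisely what keeps one of the two exponential factors at $2^{KVD+K}$ rather than $2^{K^2VD+K}$. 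None of this affects Theorem~\ref{thm:inf-complexity} downstream (everything is still constant once $K$, $V$, $D$, $M$, $Q$ are fixed), but as a derivation of the bound actually claimed by the lemma, your last step fails.
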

\begin{proof}
  For each subterm of each function definition, inference must generate new constraints and apply the saturation and restriction operations to the set of constraints obtained from combining the outputs of its recursive calls.
  Using a standard fixpoint computation, and the number of possible constraints over a given set of variables calculated in Lemma \ref{lem:max-constraints}, saturation of a constraint set of size $n$ involving $v$ refinement variables can be achieved in time $\bigoh(nKv^2D \cdot 2^{KvD+K})$.  Hence the time taken to process each subterm will be dominated by the time by saturation.  The number of constraints that constitutes the input to saturation for a given subterm is a function of the inference rule applied, its premises and side conditions.  The worst-case is \rlnm{ICase} which has at most $K$ premises and a number of side conditions.  The total number of constraints $C$ before saturation and restriction, in this case, is the sum of the sizes of each $C_i$, $C_i'$, $C_0$ and a single constraint contributed by the side condition.  Each of these subsets is already restricted to its context so, by Lemma~\ref{lem:max-inferred}, their sizes are at most $\bigoh(KV^2D \cdot 2^{KVD+K})$, giving an overall upper bound for $C$ of $\bigoh(K^2V^2D \cdot 2^{KVD+K})$.
  The total number of refinement variables occurring in this set is given by those free in the $\Gamma$, $T$ and each $T_i$.  The context $\Gamma$ and type $T$ contribute at most $V$ many and the $T_i$ together contribute at most a further $KQ$.  Consequently, the number is bounded above by $KV$.  Hence, saturation can be computed in time $\bigoh(K^5V^4D^2 \cdot 2^{K(VD(K+1)+2)})$.
\end{proof}

\end{toappendix}

\begin{theoremrep}\label{thm:inf-complexity}
  Under the assumption that the size of types and the size of individual function definitions is bounded, the complexity of type inference is $\bigoh(N)$.
\end{theoremrep}
\begin{proof}
  Follows immediately from Lemma~\ref{lem:inf-complexity} by fixing all other parameters.
\end{proof}



\section{Implementation}\label{sec:implementation}
We implemented a prototype of our inference algorithm for Haskell as a GHC plugin.
The user can run our type checker as another stage of compilation with an additional command line flag.
In addition to running the type checker on individual modules, an interface binary file is generated, enabling other modules to use the constraint information in separate compilations.
It is available from: \url{https://github.com/bristolpl/intensional-datatys}.

Our plugin processes GHC's core language \cite{sulzmann-et-al-2007}, which is significantly more powerful than the small language presented here.
Specifically, it must account for higher-rank types (including existentials), casts and coercions, type classes.
We have not implemented a treatment of these features in our prototype and so any occurrences are not analysed.
Furthermore, we disallow empty refinements of single-constructor datatypes (e.g. records).
This relatively small departure from the theory is a substantial improvement to the efficiency of the tool due to the number of records and newtypes that are found in typical Haskell programs.

Since we do not analyse the dependencies of packages, datatypes that are defined outside the current package are treated as base types and not refined
The resulting analysis provides a certificate of safety for some package modulo the safe use of its dependencies.

\clearpage
In addition to missing cases, the tool uses the results of internal analyses in GHC to identify pattern matching cases that will throw an exception.  For example, the following code will be considered as potentially unsafe.
\begin{lstlisting}
      nnf2dnf (Lit a) = [[a]]
      nnf2dnf (Or p q) = List.union (nnf2dnf p) (nnf2dnf q)
      nnf2dnf (And p q) = distrib (nnf2dnf p) (nnf2dnf q)
      nnf2dnf _ = error "Impossible case!"
\end{lstlisting}


\subsection{Performance}
We recorded benchmarks on a 2.20GHz Intel\textregistered Core\texttrademark i5-5200U with 4 cores and 8.00GB RAM.
We used the following selection of projects from the Hackage database:

\begin{itemize}
  \item \texttt{aeson} is a performant JSON serialisation library.
  \item The \texttt{containers} package provides a selection of classic functional data structures such as sets and finite maps.
        The Data.Sequence module from this package contains machine generated code that lacks the typical modularity and structure of hand written code.
        For example, it contains an automatically generated set of 6 mutually recursive functions\footnote{Since they are mutually recursive, they are processed together before generalisation and thus act as a single complex type.}, each with a complex type and deeply nested matching.
        The corresponding interface is in excess of 80 refinement variables.
        This module could not be processed to completion in a small amount of time and so we have omitted it from the results.
        We will explore how best to process examples that violate our complexity assumptions in follow-up work.
  \item \texttt{extra} is a collection of common combinators for datatypes and control flow.
  \item \texttt{fgl} (Functional Graph Library) provides an inductive representation of graphs.
  \item \texttt{haskeline} is a command-line interface library
  \item \texttt{parallel} is Haskell's default library for parallel programming
  \item \texttt{sbv} is an SMT based automatic verification tool for Haskell programs.
  \item The \texttt{time} library contains several representations of time, clocks and calendars.
  \item \texttt{unordered-containers} provides hashing-based containers, for either performant code or datatypes without a natural ordering.
\end{itemize}

For each module we recorded the average time elapsed in milliseconds across 10 runs and the number of top-level definitions (N).
We note both the total number of refinement variables generated during inference (V) and the largest interface (I).
The contrast between these two figures gives some indication of how intractable the analysis may become be without the restriction operator.
Naturally, constant factors will vary considerably between modules (not in correspondence with their size) and so our results also include the number of constructors (K) that appear in the largest datatype, and the number of datatypes (D) in the largest slice.

The benchmarks in Figure \ref{tvl:bench-summaries} provide a summary of the results for each project, i.e. the total time taken\footnote{The total time taken is the sum of the time taken to analyse each module independently doesn't include start up costs etc.}
, the total number of top-level definitions, the total number of refinement variables, the maximum interface size, the largest number of constructors associated with a datatype, and the largest slice.
The full dataset can be found in the appendices and a virtual machine image for recreating the benchmarks can be downloaded from: \url{https://doi.org/10.5281/zenodo.4072906}.

Figure \ref{tvl:bench-summaries} also contains the number of warnings found in each packages.
However, many of them stem from the same incomplete pattern.
For example, 70 of the warnings from the sbv package are located in one function.
All of these warnings were due to the tools limited, and thus extremely conservative, approach to handle features of GHC outside of its scope, such as typeclasses and encapulsation via the module system, so we are optimistic about future work.

These packages were selected to test the tool in a range of contexts and at scale.
We did not find any true positives, but it is not surprising since large packages with many downloads on Hackage are likely to be quite mature.

\begin{table}
  \caption{Benchmark Summaries}\label{tvl:bench-summaries}
  \begin{tabular}{|l|l|l|l|l|l|l|l|}
  \hline
  Name                 & N    & K  & V      & D  & I  & Warnings & Time (ms) \\\hline
  aeson                & 728  & 13 & 20466  & 6  & 14 & 0        & 79.37     \\\hline
  containers           & 1792 & 5  & 25237  & 2  & 23 & 18       & 118.26    \\\hline
  extra                & 332  & 3  & 5438   & 3  & 7  & 0        & 61.53     \\\hline
  fgl                  & 700  & 2  & 18403  & 2  & 12 & 8        & 94.32     \\\hline
  haskeline            & 1384 & 15 & 29389  & 19 & 27 & 0        & 111.67    \\\hline
  parallel             & 110  & 1  & 959    & 2  & 18 & 0        & 10.18     \\\hline
  pretty               & 222  & 8  & 3675   & 4  & 16 & 11       & 23.86     \\\hline
  sbv                  & 5076 & 44 & 171869 & 49 & 46 & 79       & 518.91    \\\hline
  time                 & 484  & 7  & 9753   & 6  & 10 & 9        & 134.16    \\\hline
  unordered-containers & 474  & 5  & 7761   & 3  & 24 & 2        & 30.56     \\\hline
\end{tabular}
\end{table}

\begin{toappendix}

  The following is a listing of benchmark results with packages split by module.

  \subsection{Package \texttt{aeson-1.5.2.0}}

  \begin{tabular}{|l|l|l|l|l|l|l|}
\hline
Name & N & K & V & D & I & Time (ms)\\\hline
Data.Aeson & 15 & 6 & 198 & 2 & 3 & 3.01\\\hline
Data.Aeson.Encode & 2 & 0 & 1 & 0 & 1 & 2.75\\\hline
Data.Aeson.Encoding & 0 & 0 & 0 & 0 & 0 & 2.72\\\hline
Data.Aeson.Encoding.Builder & 71 & 6 & 1534 & 1 & 6 & 3.34\\\hline
Data.Aeson.Encoding.Internal & 67 & 6 & 723 & 2 & 7 & 2.65\\\hline
Data.Aeson.Internal & 0 & 0 & 0 & 0 & 0 & 2.85\\\hline
Data.Aeson.Internal.Functions & 3 & 0 & 10 & 0 & 0 & 3.33\\\hline
Data.Aeson.Internal.Time & 0 & 0 & 0 & 0 & 0 & 3.26\\\hline
Data.Aeson.Parser & 0 & 0 & 0 & 0 & 0 & 2.62\\\hline
Data.Aeson.Parser.Internal & 63 & 6 & 2252 & 2 & 9 & 3.20\\\hline
Data.Aeson.Parser.Time & 7 & 1 & 35 & 1 & 1 & 2.81\\\hline
Data.Aeson.Parser.Unescape & 0 & 0 & 0 & 0 & 0 & 3.01\\\hline
Data.Aeson.Parser.UnescapePure & 19 & 13 & 1109 & 2 & 5 & 4.09\\\hline
Data.Aeson.QQ.Simple & 2 & 6 & 111 & 1 & 2 & 7.12\\\hline
Data.Aeson.TH & 271 & 4 & 10251 & 3 & 11 & 2.88\\\hline
Data.Aeson.Text & 13 & 6 & 326 & 1 & 4 & 3.99\\\hline
Data.Aeson.Types & 1 & 1 & 7 & 1 & 2 & 2.81\\\hline
Data.Aeson.Types.Class & 0 & 0 & 0 & 0 & 0 & 2.76\\\hline
Data.Aeson.Types.FromJSON & 96 & 6 & 2054 & 6 & 14 & 2.93\\\hline
Data.Aeson.Types.Generic & 1 & 0 & 1 & 0 & 1 & 3.77\\\hline
Data.Aeson.Types.Internal & 46 & 6 & 564 & 2 & 8 & 3.64\\\hline
Data.Aeson.Types.ToJSON & 27 & 6 & 484 & 3 & 10 & 2.88\\\hline
Data.Attoparsec.Time & 16 & 1 & 745 & 1 & 2 & 2.94\\\hline
Data.Attoparsec.Time.Internal & 8 & 1 & 61 & 1 & 1 & 3.98\\\hline
\end{tabular}

  \clearpage

  \subsection{Package \texttt{containers-0.6.2.1}}

  \begin{tabular}{|l|l|l|l|l|l|l|}
\hline
Name & N & K & V & D & I & Time (ms)\\\hline
Data.Containers.ListUtils & 11 & 3 & 61 & 1 & 4 & 3.97\\\hline
Data.Graph & 69 & 2 & 1411 & 2 & 8 & 2.87\\\hline
Data.IntMap & 8 & 0 & 162 & 0 & 2 & 2.83\\\hline
Data.IntMap.Internal & 438 & 3 & 5519 & 2 & 23 & 2.62\\\hline
Data.IntMap.Internal.Debug & 0 & 0 & 0 & 0 & 0 & 2.43\\\hline
Data.IntMap.Internal.DeprecatedDebug & 4 & 0 & 84 & 0 & 1 & 2.95\\\hline
Data.IntMap.Lazy & 0 & 0 & 0 & 0 & 0 & 2.62\\\hline
Data.IntMap.Merge.Lazy & 0 & 0 & 0 & 0 & 0 & 15.55\\\hline
Data.IntMap.Merge.Strict & 15 & 3 & 152 & 2 & 4 & 2.71\\\hline
Data.IntMap.Strict & 0 & 0 & 0 & 0 & 0 & 3.08\\\hline
Data.IntMap.Strict.Internal & 146 & 3 & 1541 & 2 & 14 & 9.76\\\hline
Data.IntSet & 0 & 0 & 0 & 0 & 0 & 2.55\\\hline
Data.IntSet.Internal & 288 & 5 & 3690 & 2 & 9 & 2.89\\\hline
Data.Map & 10 & 0 & 200 & 0 & 2 & 2.86\\\hline
Data.Map.Internal & 379 & 4 & 5697 & 2 & 15 & 2.85\\\hline
Data.Map.Internal.Debug & 20 & 2 & 538 & 1 & 3 & 3.33\\\hline
Data.Map.Internal.DeprecatedShowTree & 4 & 0 & 86 & 0 & 1 & 2.46\\\hline
Data.Map.Lazy & 0 & 0 & 0 & 0 & 0 & 2.57\\\hline
Data.Map.Merge.Lazy & 0 & 0 & 0 & 0 & 0 & 2.72\\\hline
Data.Map.Merge.Strict & 0 & 0 & 0 & 0 & 0 & 1.77\\\hline
Data.Map.Strict & 0 & 0 & 0 & 0 & 0 & 3.07\\\hline
Data.Map.Strict.Internal & 137 & 2 & 2013 & 2 & 12 & 2.81\\\hline
Data.Set & 0 & 0 & 0 & 0 & 0 & 2.75\\\hline
Data.Set.Internal & 211 & 2 & 3362 & 2 & 11 & 2.86\\\hline
Data.Tree & 24 & 1 & 497 & 1 & 7 & 2.84\\\hline
Utils.Containers.Internal.BitQueue & 13 & 1 & 153 & 2 & 4 & 4.02\\\hline
Utils.Containers.Internal.BitUtil & 5 & 0 & 22 & 0 & 0 & 3.40\\\hline
Utils.Containers.Internal.Coercions & 2 & 0 & 6 & 0 & 0 & 4.58\\\hline
Utils.Containers.Internal.PtrEquality & 2 & 0 & 19 & 0 & 0 & 3.04\\\hline
Utils.Containers.Internal.State & 2 & 1 & 5 & 1 & 1 & 3.69\\\hline
Utils.Containers.Internal.StrictMaybe & 3 & 2 & 15 & 1 & 1 & 3.69\\\hline
Utils.Containers.Internal.StrictPair & 1 & 0 & 4 & 0 & 1 & 3.40\\\hline
Utils.Containers.Internal.TypeError & 0 & 0 & 0 & 0 & 0 & 2.74\\\hline
\end{tabular}

  \clearpage

  \subsection{Package \texttt{extra-1.7.3}}

  \begin{tabular}{|l|l|l|l|l|l|l|}
\hline
Name & N & K & V & D & I & Time (ms)\\\hline
Control.Concurrent.Extra & 20 & 3 & 552 & 3 & 7 & 2.41\\\hline
Control.Exception.Extra & 19 & 0 & 348 & 0 & 0 & 2.85\\\hline
Control.Monad.Extra & 57 & 0 & 342 & 0 & 0 & 2.93\\\hline
Data.Either.Extra & 9 & 0 & 94 & 0 & 0 & 2.91\\\hline
Data.IORef.Extra & 4 & 0 & 40 & 0 & 0 & 3.50\\\hline
Data.List.Extra & 119 & 3 & 1963 & 1 & 6 & 3.47\\\hline
Data.List.NonEmpty.Extra & 17 & 0 & 84 & 0 & 0 & 2.36\\\hline
Data.Tuple.Extra & 16 & 0 & 58 & 0 & 0 & 4.71\\\hline
Data.Typeable.Extra & 0 & 0 & 0 & 0 & 0 & 2.76\\\hline
Data.Version.Extra & 4 & 0 & 122 & 0 & 0 & 2.63\\\hline
Extra & 0 & 0 & 0 & 0 & 0 & 3.26\\\hline
Numeric.Extra & 5 & 0 & 19 & 0 & 0 & 3.46\\\hline
Partial & 0 & 0 & 0 & 0 & 0 & 3.23\\\hline
System.Directory.Extra & 8 & 0 & 265 & 0 & 0 & 2.72\\\hline
System.Environment.Extra & 0 & 0 & 0 & 0 & 0 & 3.55\\\hline
System.IO.Extra & 35 & 0 & 923 & 0 & 0 & 2.47\\\hline
System.Info.Extra & 2 & 0 & 2 & 0 & 0 & 2.70\\\hline
System.Process.Extra & 5 & 0 & 264 & 0 & 0 & 4.35\\\hline
System.Time.Extra & 12 & 1 & 362 & 1 & 3 & 2.29\\\hline
Text.Read.Extra & 0 & 0 & 0 & 0 & 0 & 2.98\\\hline
\end{tabular}

  \clearpage

  \subsection{Package \texttt{fgl-5.7.0.2}}

  \begin{tabular}{|l|l|l|l|l|l|l|}
\hline
Name & N & K & V & D & I & Time (ms)\\\hline
Data.Graph.Inductive & 1 & 0 & 20 & 0 & 0 & 2.95\\\hline
Data.Graph.Inductive.Basic & 17 & 0 & 452 & 0 & 0 & 2.47\\\hline
Data.Graph.Inductive.Example & 53 & 1 & 5236 & 1 & 1 & 2.81\\\hline
Data.Graph.Inductive.Graph & 102 & 1 & 1920 & 1 & 1 & 4.26\\\hline
Data.Graph.Inductive.Internal.Heap & 19 & 2 & 314 & 1 & 6 & 3.71\\\hline
Data.Graph.Inductive.Internal.Queue & 5 & 1 & 44 & 1 & 4 & 3.40\\\hline
Data.Graph.Inductive.Internal.RootPath & 6 & 1 & 123 & 1 & 2 & 3.22\\\hline
Data.Graph.Inductive.Internal.Thread & 20 & 0 & 133 & 0 & 0 & 3.66\\\hline
Data.Graph.Inductive.Monad & 23 & 0 & 349 & 0 & 0 & 3.31\\\hline
Data.Graph.Inductive.Monad.IOArray & 7 & 1 & 272 & 1 & 1 & 2.47\\\hline
Data.Graph.Inductive.Monad.STArray & 8 & 1 & 282 & 1 & 1 & 2.93\\\hline
Data.Graph.Inductive.NodeMap & 60 & 1 & 678 & 1 & 4 & 3.81\\\hline
Data.Graph.Inductive.PatriciaTree & 23 & 1 & 871 & 1 & 2 & 2.79\\\hline
Data.Graph.Inductive.Query & 0 & 0 & 0 & 0 & 0 & 2.74\\\hline
Data.Graph.Inductive.Query.ArtPoint & 16 & 1 & 442 & 1 & 5 & 3.40\\\hline
Data.Graph.Inductive.Query.BCC & 20 & 0 & 500 & 0 & 0 & 3.18\\\hline
Data.Graph.Inductive.Query.BFS & 29 & 1 & 625 & 2 & 9 & 3.10\\\hline
Data.Graph.Inductive.Query.DFS & 44 & 0 & 467 & 0 & 0 & 3.14\\\hline
Data.Graph.Inductive.Query.Dominators & 36 & 0 & 655 & 0 & 0 & 4.05\\\hline
Data.Graph.Inductive.Query.GVD & 10 & 2 & 209 & 2 & 3 & 2.88\\\hline
Data.Graph.Inductive.Query.Indep & 9 & 0 & 208 & 0 & 0 & 4.08\\\hline
Data.Graph.Inductive.Query.MST & 13 & 2 & 219 & 2 & 10 & 3.76\\\hline
Data.Graph.Inductive.Query.MaxFlow & 20 & 0 & 322 & 0 & 1 & 2.63\\\hline
Data.Graph.Inductive.Query.MaxFlow2 & 51 & 2 & 2500 & 2 & 12 & 2.89\\\hline
Data.Graph.Inductive.Query.Monad & 58 & 1 & 700 & 1 & 6 & 2.87\\\hline
Data.Graph.Inductive.Query.SP & 12 & 2 & 177 & 2 & 10 & 3.19\\\hline
Data.Graph.Inductive.Query.TransClos & 15 & 0 & 215 & 0 & 0 & 3.62\\\hline
Data.Graph.Inductive.Tree & 8 & 1 & 306 & 1 & 1 & 3.87\\\hline
Paths\_fgl & 15 & 0 & 164 & 0 & 0 & 3.13\\\hline
\end{tabular}

  \clearpage

  \subsection{Package \texttt{haskeline-0.8.0.1}}

  \begin{tabular}{|l|l|l|l|l|l|l|}
\hline
Name & N & K & V & D & I & Time (ms)\\\hline
System.Console.Haskeline & 136 & 15 & 2343 & 17 & 18 & 3.55\\\hline
System.Console.Haskeline.Backend & 6 & 15 & 55 & 7 & 2 & 4.28\\\hline
System.Console.Haskeline.Backend.DumbTerm & 48 & 15 & 484 & 7 & 12 & 3.65\\\hline
System.Console.Haskeline.Backend.Posix & 51 & 15 & 1843 & 8 & 10 & 3.96\\\hline
System.Console.Haskeline.Backend.Posix.Encoder & 8 & 2 & 98 & 2 & 2 & 4.28\\\hline
System.Console.Haskeline.Backend.Terminfo & 148 & 15 & 2108 & 7 & 13 & 3.98\\\hline
System.Console.Haskeline.Backend.WCWidth & 10 & 1 & 168 & 1 & 6 & 4.25\\\hline
System.Console.Haskeline.Command & 28 & 15 & 327 & 8 & 7 & 4.09\\\hline
System.Console.Haskeline.Command.Completion & 47 & 15 & 807 & 13 & 13 & 3.68\\\hline
System.Console.Haskeline.Command.History & 52 & 15 & 1131 & 11 & 12 & 2.90\\\hline
System.Console.Haskeline.Command.KillRing & 35 & 15 & 360 & 9 & 6 & 3.59\\\hline
System.Console.Haskeline.Command.Undo & 17 & 15 & 138 & 8 & 3 & 3.91\\\hline
System.Console.Haskeline.Completion & 40 & 1 & 948 & 1 & 6 & 3.37\\\hline
System.Console.Haskeline.Directory & 0 & 0 & 0 & 0 & 0 & 3.95\\\hline
System.Console.Haskeline.Emacs & 73 & 15 & 2328 & 9 & 21 & 3.55\\\hline
System.Console.Haskeline.History & 16 & 1 & 406 & 1 & 3 & 6.40\\\hline
System.Console.Haskeline.IO & 11 & 15 & 185 & 19 & 6 & 4.53\\\hline
System.Console.Haskeline.InputT & 65 & 15 & 1999 & 17 & 18 & 3.35\\\hline
System.Console.Haskeline.Internal & 16 & 15 & 417 & 17 & 16 & 4.86\\\hline
System.Console.Haskeline.Key & 26 & 15 & 647 & 3 & 2 & 5.61\\\hline
System.Console.Haskeline.LineState & 67 & 2 & 710 & 3 & 6 & 5.61\\\hline
System.Console.Haskeline.Monads & 12 & 0 & 21 & 0 & 0 & 4.84\\\hline
System.Console.Haskeline.Prefs & 26 & 15 & 769 & 8 & 5 & 4.35\\\hline
System.Console.Haskeline.Recover & 2 & 0 & 98 & 0 & 0 & 3.98\\\hline
System.Console.Haskeline.RunCommand & 33 & 15 & 582 & 8 & 27 & 3.81\\\hline
System.Console.Haskeline.Term & 44 & 15 & 628 & 7 & 4 & 3.82\\\hline
System.Console.Haskeline.Vi & 367 & 15 & 9789 & 12 & 27 & 3.50\\\hline
\end{tabular}

  \clearpage

  \subsection{Package \texttt{parallel-3.2.2.0}}

  \begin{tabular}{|l|l|l|l|l|l|l|}
\hline
Name & N & K & V & D & I & Time (ms)\\\hline
Control.Parallel & 2 & 0 & 0 & 0 & 0 & 2.98\\\hline
Control.Parallel.Strategies & 87 & 1 & 769 & 2 & 18 & 3.47\\\hline
Control.Seq & 21 & 0 & 190 & 0 & 0 & 3.74\\\hline
\end{tabular}

  \subsection{Package \texttt{pretty-1.1.3.6}}

  \begin{tabular}{|l|l|l|l|l|l|l|}
\hline
Name & N & K & V & D & I & Time (ms)\\\hline
Text.PrettyPrint & 0 & 0 & 0 & 0 & 0 & 3.09\\\hline
Text.PrettyPrint.Annotated & 0 & 0 & 0 & 0 & 0 & 3.91\\\hline
Text.PrettyPrint.Annotated.HughesPJ & 154 & 8 & 3148 & 3 & 16 & 4.23\\\hline
Text.PrettyPrint.Annotated.HughesPJClass & 5 & 8 & 32 & 3 & 3 & 4.29\\\hline
Text.PrettyPrint.HughesPJ & 58 & 8 & 472 & 4 & 7 & 5.23\\\hline
Text.PrettyPrint.HughesPJClass & 5 & 8 & 23 & 4 & 3 & 3.11\\\hline
\end{tabular}

  \subsection{Package \texttt{sbv-8.7.5}}
  \small
  \begin{tabular}{|l|l|l|l|l|l|l|}
\hline
Name & N & K & V & D & I & Time (ms)\\\hline
Data.SBV & 0 & 0 & 0 & 0 & 0 & 3.54\\\hline
Data.SBV.Char & 48 & 44 & 709 & 43 & 6 & 3.11\\\hline
Data.SBV.Client & 20 & 44 & 1007 & 43 & 5 & 3.38\\\hline
Data.SBV.Client.BaseIO & 118 & 0 & 418 & 0 & 4 & 3.27\\\hline
Data.SBV.Compilers.C & 262 & 44 & 12934 & 48 & 11 & 3.29\\\hline
Data.SBV.Compilers.CodeGen & 67 & 44 & 1919 & 49 & 8 & 3.07\\\hline
Data.SBV.Control & 1 & 2 & 5 & 1 & 3 & 3.75\\\hline
Data.SBV.Control.BaseIO & 50 & 0 & 142 & 0 & 5 & 2.87\\\hline
Data.SBV.Control.Query & 224 & 44 & 7932 & 43 & 20 & 3.52\\\hline
Data.SBV.Control.Types & 7 & 31 & 283 & 1 & 2 & 5.38\\\hline
Data.SBV.Control.Utils & 311 & 44 & 12922 & 44 & 33 & 4.02\\\hline
Data.SBV.Core.AlgReals & 49 & 2 & 1444 & 2 & 4 & 4.11\\\hline
Data.SBV.Core.Concrete & 55 & 14 & 2443 & 7 & 16 & 3.90\\\hline
Data.SBV.Core.Data & 47 & 44 & 261 & 43 & 6 & 3.46\\\hline
Data.SBV.Core.Floating & 69 & 44 & 1553 & 43 & 15 & 3.52\\\hline
Data.SBV.Core.Kind & 16 & 14 & 724 & 1 & 3 & 3.46\\\hline
Data.SBV.Core.Model & 229 & 44 & 3848 & 43 & 14 & 3.07\\\hline
Data.SBV.Core.Operations & 284 & 44 & 9524 & 43 & 37 & 5.50\\\hline
Data.SBV.Core.Sized & 53 & 44 & 701 & 43 & 7 & 3.72\\\hline
Data.SBV.Core.Symbolic & 235 & 44 & 6695 & 43 & 46 & 3.84\\\hline
Data.SBV.Dynamic & 17 & 44 & 274 & 44 & 9 & 3.59\\\hline
Data.SBV.Either & 48 & 44 & 864 & 43 & 13 & 3.10\\\hline
Data.SBV.Internals & 3 & 0 & 1 & 0 & 0 & 3.10\\\hline
Data.SBV.List & 80 & 44 & 1184 & 43 & 12 & 3.54\\\hline
Data.SBV.Maybe & 30 & 44 & 531 & 43 & 13 & 3.33\\\hline
Data.SBV.Provers.ABC & 1 & 31 & 57 & 9 & 3 & 3.74\\\hline
Data.SBV.Provers.Boolector & 1 & 31 & 71 & 9 & 3 & 3.58\\\hline
Data.SBV.Provers.CVC4 & 6 & 31 & 134 & 9 & 3 & 3.45\\\hline
Data.SBV.Provers.MathSAT & 3 & 31 & 101 & 9 & 3 & 3.31\\\hline
Data.SBV.Provers.Prover & 93 & 44 & 1374 & 48 & 11 & 3.46\\\hline
Data.SBV.Provers.Yices & 1 & 31 & 50 & 9 & 3 & 3.36\\\hline
Data.SBV.Provers.Z3 & 2 & 31 & 89 & 9 & 3 & 3.07\\\hline
Data.SBV.RegExp & 24 & 44 & 278 & 43 & 8 & 5.47\\\hline
Data.SBV.SMT.SMT & 118 & 44 & 5193 & 42 & 8 & 3.55\\\hline
Data.SBV.SMT.SMTLib & 9 & 44 & 785 & 14 & 43 & 3.33\\\hline
Data.SBV.SMT.SMTLib2 & 285 & 44 & 16859 & 13 & 36 & 3.32\\\hline
Data.SBV.SMT.SMTLibNames & 1 & 0 & 379 & 0 & 0 & 5.25\\\hline
Data.SBV.SMT.Utils & 33 & 31 & 563 & 9 & 2 & 3.31\\\hline
Data.SBV.Set & 76 & 44 & 1481 & 43 & 10 & 3.88\\\hline
Data.SBV.String & 53 & 44 & 1322 & 43 & 12 & 3.22\\\hline
\end{tabular}

\begin{tabular}{|l|l|l|l|l|l|l|}
    \hline
Data.SBV.Tools.BMC & 11 & 44 & 549 & 44 & 8 & 3.20\\\hline
Data.SBV.Tools.BoundedFix & 3 & 44 & 30 & 43 & 5 & 3.11\\\hline
Data.SBV.Tools.BoundedList & 59 & 44 & 578 & 43 & 12 & 19.58\\\hline
Data.SBV.Tools.CodeGen & 0 & 0 & 0 & 0 & 0 & 3.36\\\hline
Data.SBV.Tools.GenTest & 104 & 44 & 5552 & 43 & 10 & 3.03\\\hline
Data.SBV.Tools.Induction & 12 & 44 & 393 & 44 & 14 & 3.57\\\hline
Data.SBV.Tools.Overflow & 77 & 44 & 1491 & 43 & 12 & 3.14\\\hline
Data.SBV.Tools.Polynomial & 68 & 44 & 1586 & 43 & 12 & 3.11\\\hline
Data.SBV.Tools.Range & 39 & 44 & 951 & 45 & 13 & 3.29\\\hline
Data.SBV.Tools.STree & 19 & 44 & 488 & 44 & 20 & 3.39\\\hline
Data.SBV.Tools.WeakestPreconditions & 86 & 44 & 2829 & 45 & 33 & 3.68\\\hline
Data.SBV.Trans & 0 & 0 & 0 & 0 & 0 & 3.47\\\hline
Data.SBV.Trans.Control & 1 & 2 & 4 & 1 & 3 & 3.16\\\hline
Data.SBV.Tuple & 16 & 44 & 367 & 43 & 7 & 3.75\\\hline
Data.SBV.Utils.ExtractIO & 0 & 0 & 0 & 0 & 0 & 4.96\\\hline
Data.SBV.Utils.Lib & 35 & 3 & 1070 & 1 & 2 & 14.82\\\hline
Data.SBV.Utils.Numeric & 62 & 0 & 353 & 0 & 0 & 4.65\\\hline
Data.SBV.Utils.PrettyNum & 83 & 14 & 2387 & 6 & 9 & 3.78\\\hline
Data.SBV.Utils.SExpr & 115 & 6 & 7919 & 3 & 22 & 3.67\\\hline
Data.SBV.Utils.TDiff & 10 & 0 & 277 & 0 & 0 & 4.80\\\hline
Documentation.SBV.Examples.BitPrecise.BitTricks & 18 & 44 & 379 & 43 & 5 & 3.64\\\hline
Documentation.SBV.Examples.BitPrecise.BrokenSearch & 7 & 44 & 327 & 44 & 8 & 2.56\\\hline
Documentation.SBV.Examples.BitPrecise.Legato & 72 & 44 & 1347 & 49 & 24 & 3.30\\\hline
Documentation.SBV.Examples.BitPrecise.MergeSort & 19 & 44 & 428 & 49 & 9 & 3.05\\\hline
Documentation.SBV.Examples.BitPrecise.MultMask & 3 & 44 & 142 & 44 & 5 & 3.34\\\hline
Documentation.SBV.Examples.BitPrecise.PrefixSum & 17 & 44 & 269 & 44 & 4 & 3.57\\\hline
Documentation.SBV.Examples.CodeGeneration.AddSub & 3 & 44 & 96 & 49 & 5 & 3.15\\\hline
Documentation.SBV.Examples.CodeGeneration.CRC\_USB5 & 8 & 44 & 184 & 49 & 4 & 3.35\\\hline
Documentation.SBV.Examples.CodeGeneration.Fibonacci & 6 & 44 & 216 & 49 & 11 & 3.35\\\hline
Documentation.SBV.Examples.CodeGeneration.GCD & 7 & 44 & 192 & 49 & 11 & 3.66\\\hline
Documentation.SBV.Examples.CodeGeneration.PopulationCount & 7 & 44 & 191 & 49 & 8 & 3.24\\\hline
Documentation.SBV.Examples.CodeGeneration.Uninterpreted & 10 & 44 & 212 & 49 & 5 & 3.13\\\hline
Documentation.SBV.Examples.Crypto.AES & 216 & 44 & 5018 & 49 & 32 & 3.62\\\hline
Documentation.SBV.Examples.Crypto.RC4 & 22 & 44 & 793 & 44 & 18 & 3.45\\\hline
Documentation.SBV.Examples.Crypto.SHA & 108 & 44 & 5830 & 49 & 18 & 3.69\\\hline
Documentation.SBV.Examples.Existentials.CRCPolynomial & 10 & 44 & 360 & 44 & 12 & 3.07\\\hline
Documentation.SBV.Examples.Existentials.Diophantine & 26 & 44 & 682 & 44 & 8 & 5.39\\\hline
Documentation.SBV.Examples.Lists.BoundedMutex & 19 & 44 & 1375 & 44 & 22 & 3.21\\\hline
Documentation.SBV.Examples.Lists.Fibonacci & 4 & 44 & 225 & 44 & 5 & 3.36\\\hline
Documentation.SBV.Examples.Lists.Nested & 1 & 44 & 444 & 44 & 4 & 3.63\\\hline
Documentation.SBV.Examples.Misc.Auxiliary & 4 & 44 & 155 & 44 & 5 & 3.38\\\hline
Documentation.SBV.Examples.Misc.Enumerate & 7 & 44 & 205 & 44 & 11 & 3.26\\\hline
Documentation.SBV.Examples.Misc.Floating & 13 & 44 & 587 & 44 & 9 & 3.04\\\hline
\end{tabular}

\begin{tabular}{|l|l|l|l|l|l|l|}
    \hline
Documentation.SBV.Examples.Misc.ModelExtract & 3 & 44 & 141 & 44 & 4 & 3.77\\\hline
Documentation.SBV.Examples.Misc.Newtypes & 3 & 44 & 95 & 44 & 7 & 3.07\\\hline
Documentation.SBV.Examples.Misc.NoDiv0 & 3 & 44 & 113 & 43 & 5 & 3.70\\\hline
Documentation.SBV.Examples.Misc.Polynomials & 9 & 44 & 209 & 43 & 5 & 3.26\\\hline
Documentation.SBV.Examples.Misc.SetAlgebra & 0 & 0 & 0 & 0 & 0 & 3.53\\\hline
Documentation.SBV.Examples.Misc.SoftConstrain & 1 & 44 & 178 & 44 & 4 & 3.26\\\hline
Documentation.SBV.Examples.Misc.Tuple & 16 & 44 & 337 & 44 & 5 & 4.12\\\hline
Documentation.SBV.Examples.Optimization.Enumerate & 12 & 44 & 201 & 44 & 5 & 3.20\\\hline
Documentation.SBV.Examples.Optimization.ExtField & 1 & 44 & 130 & 44 & 8 & 2.77\\\hline
Documentation.SBV.Examples.Optimization.LinearOpt & 2 & 44 & 224 & 44 & 5 & 3.60\\\hline
Documentation.SBV.Examples.Optimization.Production & 7 & 44 & 205 & 44 & 4 & 3.65\\\hline
Documentation.SBV.Examples.Optimization.VM & 8 & 44 & 568 & 44 & 8 & 3.24\\\hline
Documentation.SBV.Examples.ProofTools.BMC & 5 & 44 & 140 & 44 & 9 & 3.89\\\hline
Documentation.SBV.Examples.ProofTools.Fibonacci & 11 & 44 & 297 & 44 & 10 & 3.12\\\hline
Documentation.SBV.Examples.ProofTools.Strengthen & 11 & 44 & 456 & 44 & 17 & 2.97\\\hline
Documentation.SBV.Examples.ProofTools.Sum & 9 & 44 & 207 & 44 & 10 & 3.17\\\hline
Documentation.SBV.Examples.Puzzles.Birthday & 21 & 44 & 807 & 44 & 8 & 6.62\\\hline
Documentation.SBV.Examples.Puzzles.Coins & 14 & 44 & 513 & 44 & 10 & 3.01\\\hline
Documentation.SBV.Examples.Puzzles.Counts & 12 & 44 & 481 & 44 & 5 & 2.75\\\hline
Documentation.SBV.Examples.Puzzles.DogCatMouse & 6 & 44 & 207 & 44 & 6 & 3.31\\\hline
Documentation.SBV.Examples.Puzzles.Euler185 & 11 & 44 & 609 & 44 & 4 & 3.53\\\hline
Documentation.SBV.Examples.Puzzles.Fish & 48 & 44 & 1029 & 44 & 7 & 3.43\\\hline
Documentation.SBV.Examples.Puzzles.Garden & 12 & 44 & 494 & 44 & 10 & 3.64\\\hline
Documentation.SBV.Examples.Puzzles.HexPuzzle & 21 & 44 & 1187 & 45 & 15 & 3.97\\\hline
Documentation.SBV.Examples.Puzzles.LadyAndTigers & 3 & 44 & 277 & 44 & 4 & 3.48\\\hline
Documentation.SBV.Examples.Puzzles.MagicSquare & 19 & 44 & 571 & 44 & 6 & 3.44\\\hline
Documentation.SBV.Examples.Puzzles.NQueens & 10 & 44 & 411 & 44 & 4 & 4.09\\\hline
Documentation.SBV.Examples.Puzzles.SendMoreMoney & 4 & 44 & 361 & 44 & 4 & 2.87\\\hline
Documentation.SBV.Examples.Puzzles.Sudoku & 42 & 44 & 3980 & 44 & 5 & 3.69\\\hline
Documentation.SBV.Examples.Puzzles.U2Bridge & 56 & 44 & 1611 & 44 & 14 & 3.60\\\hline
Documentation.SBV.Examples.Queries.AllSat & 7 & 44 & 455 & 44 & 6 & 3.27\\\hline
Documentation.SBV.Examples.Queries.CaseSplit & 2 & 44 & 420 & 48 & 4 & 3.54\\\hline
Documentation.SBV.Examples.Queries.Concurrency & 8 & 44 & 1485 & 44 & 9 & 3.36\\\hline
Documentation.SBV.Examples.Queries.Enums & 8 & 44 & 234 & 44 & 12 & 3.55\\\hline
Documentation.SBV.Examples.Queries.FourFours & 29 & 44 & 1156 & 46 & 21 & 3.48\\\hline
Documentation.SBV.Examples.Queries.GuessNumber & 10 & 44 & 386 & 44 & 5 & 2.87\\\hline
Documentation.SBV.Examples.Queries.Interpolants & 2 & 44 & 431 & 44 & 5 & 3.71\\\hline
Documentation.SBV.Examples.Queries.UnsatCore & 2 & 44 & 248 & 44 & 4 & 3.25\\\hline
Documentation.SBV.Examples.Strings.RegexCrossword & 16 & 44 & 784 & 44 & 10 & 3.56\\\hline
Documentation.SBV.Examples.Strings.SQLInjection & 9 & 44 & 727 & 45 & 13 & 4.18\\\hline
Documentation.SBV.Examples.Transformers.SymbolicEval & 29 & 44 & 852 & 46 & 8 & 6.51\\\hline
Documentation.SBV.Examples.Uninterpreted.AUF & 6 & 44 & 219 & 45 & 7 & 2.80\\\hline
Documentation.SBV.Examples.Uninterpreted.Deduce & 7 & 44 & 244 & 44 & 6 & 3.46\\\hline
\end{tabular}

\begin{tabular}{|l|l|l|l|l|l|l|}
    \hline
Documentation.SBV.Examples.Uninterpreted.Function & 2 & 44 & 52 & 43 & 5 & 3.36\\\hline
Documentation.SBV.Examples.Uninterpreted.Multiply & 5 & 44 & 215 & 44 & 9 & 3.24\\\hline
Documentation.SBV.Examples.Uninterpreted.Shannon & 18 & 44 & 489 & 43 & 9 & 3.17\\\hline
Documentation.SBV.Examples.Uninterpreted.Sort & 3 & 44 & 129 & 44 & 4 & 3.20\\\hline
Documentation.SBV.Examples.Uninterpreted.UISortAllSat & 3 & 44 & 220 & 44 & 4 & 3.18\\\hline
Documentation.SBV.Examples.WeakestPreconditions.Append & 10 & 44 & 538 & 47 & 6 & 3.19\\\hline
Documentation.SBV.Examples.WeakestPreconditions.Basics & 8 & 44 & 204 & 47 & 13 & 3.73\\\hline
Documentation.SBV.Examples.WeakestPreconditions.Fib & 13 & 44 & 573 & 47 & 7 & 3.22\\\hline
Documentation.SBV.Examples.WeakestPreconditions.GCD & 17 & 44 & 604 & 47 & 7 & 2.71\\\hline
Documentation.SBV.Examples.WeakestPreconditions.IntDiv & 12 & 44 & 447 & 47 & 13 & 3.65\\\hline
Documentation.SBV.Examples.WeakestPreconditions.IntSqrt & 14 & 44 & 509 & 47 & 13 & 2.95\\\hline
Documentation.SBV.Examples.WeakestPreconditions.Length & 11 & 44 & 305 & 47 & 8 & 2.75\\\hline
Documentation.SBV.Examples.WeakestPreconditions.Sum & 9 & 44 & 369 & 47 & 13 & 3.59\\\hline
\end{tabular}

  \subsection{Package \texttt{time-1.10}}

  \begin{tabular}{|l|l|l|l|l|l|l|}
\hline
Name & N & K & V & D & I & Time (ms)\\\hline
Data.Format & 33 & 3 & 711 & 1 & 2 & 3.85\\\hline
Data.Time & 0 & 0 & 0 & 0 & 0 & 2.48\\\hline
Data.Time.Calendar & 0 & 0 & 0 & 0 & 0 & 2.92\\\hline
Data.Time.Calendar.CalendarDiffDays & 7 & 1 & 25 & 1 & 2 & 3.74\\\hline
Data.Time.Calendar.Days & 3 & 0 & 12 & 0 & 2 & 3.85\\\hline
Data.Time.Calendar.Easter & 7 & 1 & 102 & 1 & 3 & 2.62\\\hline
Data.Time.Calendar.Gregorian & 35 & 1 & 283 & 1 & 4 & 3.55\\\hline
Data.Time.Calendar.Julian & 35 & 1 & 283 & 1 & 4 & 2.94\\\hline
Data.Time.Calendar.JulianYearDay & 11 & 1 & 109 & 1 & 1 & 3.24\\\hline
Data.Time.Calendar.MonthDay & 9 & 0 & 220 & 0 & 0 & 2.76\\\hline
Data.Time.Calendar.OrdinalDate & 26 & 1 & 371 & 1 & 2 & 2.62\\\hline
Data.Time.Calendar.Private & 19 & 2 & 118 & 1 & 1 & 3.70\\\hline
Data.Time.Calendar.Week & 1 & 7 & 14 & 1 & 2 & 2.58\\\hline
Data.Time.Calendar.WeekDate & 16 & 1 & 233 & 1 & 2 & 2.85\\\hline
Data.Time.Clock & 0 & 0 & 0 & 0 & 0 & 3.72\\\hline
Data.Time.Clock.Internal.AbsoluteTime & 4 & 1 & 27 & 2 & 3 & 2.93\\\hline
Data.Time.Clock.Internal.CTimespec & 8 & 1 & 283 & 1 & 4 & 4.02\\\hline
Data.Time.Clock.Internal.CTimeval & 2 & 1 & 90 & 1 & 2 & 3.46\\\hline
Data.Time.Clock.Internal.DiffTime & 3 & 0 & 10 & 0 & 1 & 3.74\\\hline
Data.Time.Clock.Internal.NominalDiffTime & 3 & 0 & 7 & 0 & 1 & 2.84\\\hline
Data.Time.Clock.Internal.POSIXTime & 1 & 0 & 1 & 0 & 1 & 2.29\\\hline
Data.Time.Clock.Internal.SystemTime & 8 & 1 & 74 & 1 & 3 & 2.63\\\hline
Data.Time.Clock.Internal.UTCDiff & 2 & 1 & 14 & 3 & 3 & 2.41\\\hline
Data.Time.Clock.Internal.UTCTime & 2 & 1 & 6 & 1 & 2 & 3.83\\\hline
Data.Time.Clock.Internal.UniversalTime & 1 & 0 & 3 & 0 & 1 & 3.74\\\hline
Data.Time.Clock.POSIX & 6 & 1 & 62 & 3 & 3 & 2.97\\\hline
Data.Time.Clock.System & 9 & 1 & 143 & 3 & 3 & 2.62\\\hline
Data.Time.Clock.TAI & 8 & 1 & 167 & 3 & 9 & 2.73\\\hline
Data.Time.Format & 0 & 0 & 0 & 0 & 0 & 2.85\\\hline
Data.Time.Format.Format.Class & 24 & 2 & 614 & 3 & 7 & 4.04\\\hline
Data.Time.Format.Format.Instances & 3 & 1 & 29 & 2 & 2 & 2.85\\\hline
Data.Time.Format.ISO8601 & 66 & 3 & 2137 & 6 & 10 & 2.81\\\hline
Data.Time.Format.Internal & 0 & 0 & 0 & 0 & 0 & 2.90\\\hline
Data.Time.Format.Locale & 11 & 1 & 605 & 1 & 2 & 2.63\\\hline
Data.Time.Format.Parse & 14 & 1 & 285 & 2 & 2 & 2.92\\\hline
Data.Time.Format.Parse.Class & 25 & 3 & 1375 & 2 & 6 & 2.99\\\hline
Data.Time.Format.Parse.Instances & 17 & 1 & 506 & 2 & 2 & 2.81\\\hline
Data.Time.LocalTime & 0 & 0 & 0 & 0 & 0 & 3.44\\\hline
Data.Time.LocalTime.Internal.CalendarDiffTime & 5 & 1 & 29 & 2 & 3 & 2.97\\\hline
Data.Time.LocalTime.Internal.LocalTime & 14 & 1 & 122 & 3 & 4 & 4.71\\\hline
Data.Time.LocalTime.Internal.TimeOfDay & 24 & 1 & 294 & 1 & 3 & 2.83\\\hline
Data.Time.LocalTime.Internal.TimeZone & 16 & 2 & 343 & 3 & 3 & 2.79\\\hline
Data.Time.LocalTime.Internal.ZonedTime & 6 & 1 & 46 & 5 & 4 & 2.96\\\hline
\end{tabular}

  \subsection{Package \texttt{unordered-containers-0.2.11.0}}

  \begin{tabular}{|l|l|l|l|l|l|l|}
\hline
Name & N & K & V & D & I & Time (ms)\\\hline
Data.HashMap.Array & 78 & 1 & 1468 & 1 & 6 & 2.99\\\hline
Data.HashMap.Base & 269 & 5 & 4509 & 3 & 24 & 2.82\\\hline
Data.HashMap.Lazy & 0 & 0 & 0 & 0 & 0 & 2.68\\\hline
Data.HashMap.List & 12 & 0 & 158 & 0 & 0 & 3.30\\\hline
Data.HashMap.Strict & 0 & 0 & 0 & 0 & 0 & 2.98\\\hline
Data.HashMap.Strict.Base & 89 & 5 & 1359 & 3 & 14 & 2.75\\\hline
Data.HashMap.Unsafe & 0 & 0 & 0 & 0 & 0 & 2.85\\\hline
Data.HashMap.UnsafeShift & 2 & 0 & 20 & 0 & 0 & 3.53\\\hline
Data.HashSet & 0 & 0 & 0 & 0 & 0 & 2.84\\\hline
Data.HashSet.Base & 24 & 5 & 247 & 3 & 5 & 3.82\\\hline
\end{tabular}

\end{toappendix}

\section{Related work}\label{sec:related}
The goal of our system is to automatically, statically verify that a given program is free of pattern match exceptions, and we have phrased it as a type inference procedure for a certain refinement type system with recursive datatype constraints.
We have shown that it works well in practice, although a more extensive investigation is needed.
Our primary motivation has been to ensure predictability by giving concrete guarantees on its expressive power and algorithmic complexity.

\paragraph{Recursive types, subtyping and set constraints}
Our work sits within a large body of literature on recursive types and subtyping.
As a type system, ours is not directly comparable to others in the literature: on the one hand, the intensional refinement restriction is quite severe, but on the other we allow for path sensitivity.
One of the first works to consider subtyping in the setting of recursive types was that of \citet*{RN48}.
They proposed an exponential time procedure for subtype checking, but this was later improved to quadratic by \citet*{RN40}.
Neither of these works gave a treatment of the combination with polymorphism, which is the subject of e.g. \citet{RN38,RN51,RN3,RN47}.
However, to the best of our knowledge, all the associated type inference algorithms are exponential time in the size of the program.
In particular, \citet{RN51} shows that a general formulation of typing with recursive subtyping constraints has a PSPACE-hard typability problem.
However, we mention as a counterpoint that when constraints are restricted to simple variable-variable inequalities, \citet*{RN52} show that there is a cubic-time algorithm.
Being based on unification, inference for polymorphic variants is efficient \cite{RN45}, but \citet*{RN1} point out instances where programmers find the results to be unpredictable.
None of the above allow for path-sensitive treatment of matching.

Our main inspiration has been the seminal body of literature of work on set constraints in program analysis, see particularly \citet*{RN41}, \citet*{RN15} and \citet*{RN14}, and in particular, the line of work on making the cubic-time fragments scale in practice \cite{RN66,RN49,RN63,RN64}.
Through an impressive array of sophisticated optimisations, the fragment can be made to run efficiently on many programs.  However, the fundamental worst-case complexity is not changed and implementing and tuning heuristics requires a large engineering effort.  Moreover, this fragment does not accommodate path sensitivity.

An interesting new approach to full set constraints language is that of \citet*{eremondi-tfp19}, who attempts to use SMT to circumvent the extremely high worst-case complexity in some practical cases.  However, experiments are limited to programs less than a few hundred lines.

Many of the analyses and or type inference procedures discussed so far are compositional, i.e. parts of the program are analysed independently to yield summaries of their behaviour and then the summaries are later combined.   However, it has been frequently observed that compositionality does not lead to scalability if the summaries are themselves large and complicated.
In particular, it is not uncommon for ``summaries'' that grow with the square of the size of the program in the worst case.
This has led to many works that attempt to simplify summaries, typically according to ingenious heuristics \cite{RN54,RN69,RN70,RN50,RN68,RN38,RN49,pottier-njc2000}.
Since our primary motivation was predictability, we have designed our system so that heuristics are avoided\footnote{Heuristic-based optimisations can be the enemy of predictability since small changes in the program can lead to great changes in performance if the change causes the program to fall outside of the domain on which the heuristic is tuned.}: in particular the size of summaries (i.e. constrained type schemes) only depends only on the size of the underlying types and not the size of the program.
It is plausible that many of these heuristic optimisations are  nevertheless applicable in order to help improve the overall efficiency.
Note also that, if we are not concerned with a compositional analysis, then our class of constraints can be checked for solvability using a linear time algorithm due to \citet*{rehof-mogensen-scp1999}.  However, as explained in the introduction, compositionality is essential to obtaining  \emph{overall} linear time complexity.

\paragraph{Refinement types}
Refinement types originate with the works of \citet*{RN37} and \citet*{RN55}.
Their distinguishing feature is that they attempt to assign types to program expressions for which an \emph{underlying type} is already available.
Typically, as here, the refinement type is also required to respect the shape of the underlying type.
One can use this restriction, as in \emph{loc cit} to ensure some independence of the the size of the type from the size of the program.
However, as remarked in the final section, the constant factors are enormous since there is unrestricted intersection and union of refinements of the same underlying type which is represented explicitly.

The work of \citet{RN37} requires that the programmer declare the universe of refinement types up-front (where our universe is determined automatically as a completion of the underlying datatype environment).
A disadvantage of this requirement is that it burdens the programmer with a kind of annotation that they would rather not have to clutter their program with, in many simple cases.
A great advantage is that, by defining a refinement datatype explicitly, the programmer can indicate formally in the code her intention that a certain invariant is (somehow) important within a certain part of the program.
It seems like a very fruitful idea to allow the programmer this freedom also in our system and we are actively working on an extension to allow for this as part of our future work.
In particular, we would like to take advantage of several new advances in this line that relieve a lot of programmer burden, such as those of \citet*{RN34,RN35}.

An incredibly fruitful recent evolution of refinement types are the Liquid Types of \citet*{RN30} (see especially \citet*{RN32} for a version with constrained type schemes) and similar systems (e.g. those of \citet*{RN58,RN56}).
Such technology is already accessible to the benefit of the average programmer through the Liquid Haskell system of \citet*{RN28}.
Due to the rich expressive power of these systems, which typically include dependent product, efficient and fully-automatic type inference is not typically a primary concern and predictability can be ensured by liberal use of annotations.

\paragraph{Pattern match safety and model checking}
The pattern match safety problem was also addressed by \citet*{RN26}, which was used to verify a number of small Haskell programs and libraries.
The expressive power and algorithmic complexity are, however, unclear.

Safety problems are within the scope of higher-order model checking (\citet{RN21,RN20,RN22}) and a system for verifying pattern match safety, built on higher-order model checking was presented in \cite{RN23}.
Higher-order model checking approaches reduce verification problems to model checking problems on a certain infinite tree generated by a higher-order grammar.
Although the higher-order model checking problem is linear-time in the size of the grammar, the constant factors are enormous because, formally, it is $n$-EXPTIME complete (tower of exponentials of height $n$) with $n$ the type-theoretic order of the functions in the grammar.
Moreover, many of the transformations from program to grammar incur a large blow-up in size.
Two promising evolutions of higher-order model checking are the approach of \citet*{RN61} based on the higher-order fixpoint logic of \citet{RN60} and that of \citet*{RN62} via higher-order constrained Horn clauses.

\paragraph{Contract checking}
Like pattern-match safety, static contract checking problems such as those considered by \citet*{xu-et-al-popl09}, \citet*{vytiniotis-et-al-popl13} and \citet*{nguyen-etal-icfp2014} typically also reduce to reachability.  However, giving guarantees on scalability via worst-case complexity does not seem to be a priority for this area and experiments are correspondingly limited to programs of only a few hundred lines.

\paragraph{Pattern match coverage checking}
A related problem is the pattern match \emph{coverage checking} problem, which asks, with respect to the type of the function: if a given set of patterns is exhaustive, non-overlapping and irredundant  (a classic paper on this subject is that of \citet*{maranget-jfp07}, but see \citet*{graf-et-al-icfp20} for more recent developments).
To illustrate the difference between the two problems: a program containing the following definition is always a no-instance of the coverage checking problem, since $f :: \mathsf{Bool} \to \mathsf{Bool}$ has non-exhaustive patterns:
\[
\begin{array}{l}
    f\ \mathsf{True} = \mathsf{False}
\end{array}
\]
However, such a program may or may not be a no-instance of the pattern-match safety problem, since it depends on how $f$ is actually used in the rest of the program.
Like all safety problems, the latter can be reduced to reachability: is there an execution of the program that reaches a call of the form $f\ \mathsf{False}$.
If no execution of the program containing this definition ever calls $f$ with $\mathsf{False}$, then this program will be a yes-instance of the safety problem, i.e. pattern-match safe.

On the one-hand, if every pattern-matching expression covers all cases, then the program is already safe, since no execution can trigger a pattern-match violation.
On the other, a program may be safe and yet not cover every case in its patterns --- indeed these are really the focus from a program verification perspective.
The proliferation of exotic kinds of pattern allowed in a complex language such as Haskell (e.g. pattern synonyms \cite{pickering-et-al-Haskell16}), mean that coverage checking may sometimes benefit from reasoning about program executions in a localised way.  However, as the authors of \cite{graf-et-al-icfp20} point out, it is ``unreasonable to expect a coverage checking algorithm to prove [a property of arbitrary program executions]''.
\section{Conclusion}\label{sec:conclusion}

We have presented a new extension of ML-style typing with  intensional refinements of algebraic datatypes. 
Since type inference is fully automatic, the system can be used as a program analysis for verifying the pattern-match safety problem.
Viewed this way, it incorporates polyvariance and path-sensitivity and yet we have shown that, under reasonable assumptions, the worst-case time complexity is linear in the size of the program.
To achieve this we have shifted exponential complexity associated with HM(X)-style inference from the size of the program to the \emph{size of the types of the program}.
Moreover, we have shown that our assumptions on the size of types are reasonable in practice (equivalently: that the constant factors are not prohibitive) by demonstrating excellent performance of a prototype.


This was only possible because we have made a compromise on the space of invariants that we can synthesize: typings are built from \emph{intensional} refinements.
Although our analysis is polyvariant and path-sensitive, these features are ultimately limited by the shape of these refinements.
We have given examples, (e.g. in Figure \ref{fig:fm-refns} and Example \ref{ex:lam-refinements}) of datatypes for which there are many useful intensional refinements that are expressible in our system.
However, there are also common datatypes for which there are no useful intensional refinements.
For example, the four intensional refinements of the datatype of lists correspond to: the empty type, the type of infinite lists, the type containing only the empty list and the original list datatype.
However, none of these is especially useful in practice, and one would much rather have a refinement like the type of non-empty lists.

In any fully automatic program analysis, there will always be some compromise on expressivity.
We believe it is important that one can understand, before using the tool on a program, whether the compromise will be a real limitation.
The power of our analysis is characterised as a type system that can be understood by programmers familiar with usual ML-style typing.
If the (user believes that their) program would be typable in this system (i.e. there exist intensional refinements of the datatypes under which a typing can be assigned), then the analysis will be able to verify it.
Note that the user does not need to know anything about constraints, which occur only in the inference algorithm, in order to determine this.
For example, if the user believes that the safety of their program relies on a invariant to do with the non-emptiness of lists then, since non-emptiness is not an expressible refinement of lists, they should not expect the program to be verifiable.

Our future work concerns such cases.  
We would like to enable the user to specify their own non-intensional refinements of datatypes to extend the space of expressible program invariants.  
For example, using some syntax, the programmer could indicate that the non-emptiness of lists is an important refinement.
Under the hood, the system can extend the original definition of lists in such a way that (a) the new definition is extensionally equivalent to the original, but (b) non-empty lists is now an intensional refinement.
The following is an example of such a redefinition:
\begin{lstlisting}
    List  a = Nil | Cons a (List' a) 
    List' a = Nil | Cons a (List' a)
    £
\end{lstlisting}
From which the non-empty list refinement arises by erasing \textsf{Nil} from the \textsf{List} datatype.
This would put the trade-off between expressive power and efficiency in the hands of the user, and since the type system is familiar, they are well equipped to reason about when it makes sense.

\begin{acks}                            
We gratefully acknowledge the support of the  
\grantsponsor{EPSRC}{Engineering and Physical Sciences Research Council}{http://https://epsrc.ukri.org} (\grantnum{EPSRC}{EP/T006579/1}) and the National Centre for Cyber Security via the UK Research Institute in Verified Trustworthy Software Systems.
We thank our colleague Matthew Pickering for a lot of good Haskell advice and for helping us safely navigate the interior of the Glasgow Haskell Compiler.
\end{acks}

\clearpage

\bibliography{references,endnote}

\ifsupp
  \appendix
\else
\fi

\end{document}